\definecolor{ForestGreen}{rgb}{0.1333,0.5451,0.1333}
\definecolor{DarkRed}{rgb}{0.65,0,0}
\definecolor{Red}{rgb}{1,0,0}
\renewcommand{\leq}{\leqslant}
\renewcommand{\geq}{\geqslant}
\renewcommand{\le}{\leqslant}
\renewcommand{\ge}{\geqslant}
\newcommand{\calM}{\mathcal{M}}
\newcommand{\E}{\mathbb{E}}
\DeclarePairedDelimiter\ceil{\lceil}{\rceil}
\newcommand{\eps}{\varepsilon}
\newcommand{\calA}{\mathcal{A}}
\newcommand{\edgearrivalalg}{\textsc{MatchingAlgorithm}\xspace}
\newcommand{\edgearrivalalgext}{\textsc{MatchingAlgorithmChanges}\xspace}
\declaretheorem[numberwithin=section,refname={Theorem,Theorems},Refname={Theorem,Theorems}]{theorem}
\declaretheorem[numberlike=theorem]{lemma}
\declaretheorem[numberlike=theorem,style=definition]{definition}
\declaretheorem[numberlike=theorem]{claim}
\declaretheorem[numberlike=theorem]{conjecture}
\declaretheorem[numberlike=theorem]{remark}
\declaretheorem[numberlike=theorem,refname={Fact,Facts},Refname={Fact,Facts},name={Fact}]{fact}
\declaretheorem[numberlike=theorem, refname={Observation,Observations},Refname={Observation,Observations},name={Observation}]{observation}
\declaretheorem[refname={Algorithm,Algorithms},Refname={Algorithm,Algorithms},name={Algorithm}]{algorithm}
\title{Online Edge Coloring is (Nearly) as Easy as Offline}
\author[1]{Joakim Blikstad\thanks{Supported by the Swedish Research Council (Reg. No. 2019-05622) and the Google PhD Fellowship Program. Work done in part while visiting EPFL.}}
\author[2]{Ola Svensson\thanks{Supported by the Swiss State Secretariat for Education, Research and Innovation (SERI) under contract number MB22.00054.}}
\author[2]{Radu Vintan\protect\footnotemark[\value{footnote}]}
\author[3]{David Wajc\thanks{Supported by a Taub Family Foundation ``Leader in Science and Technology'' fellowship. Work done in part while the author was at Google Research.}}
\affil[1]{KTH Royal Institute of Technology \& Max Planck Institute for Informatics}
\affil[2]{EPFL}
\affil[3]{Technion --- Israel Institute of Technology}
\date{\vspace{-1.3cm}}
\begin{document}
\maketitle

\pagenumbering{gobble}
\begin{abstract}
   The classic theorem of Vizing (Diskret.~Analiz.'64) asserts that any graph of maximum degree $\Delta$ can be edge colored (offline) using no more than $\Delta+1$ colors (with $\Delta$ being a trivial lower bound).
   In the \emph{online} setting, 
   Bar-Noy, Motwani and Naor (IPL'92) conjectured that a $(1+o(1))\Delta$-edge-coloring can be computed online in $n$-vertex graphs of maximum degree $\Delta=\omega(\log n)$.
    Numerous algorithms made progress on this question, using a higher number of colors or assuming restricted arrival models, such as random-order edge arrivals or vertex arrivals (e.g., AGKM~FOCS'03, BMM~SODA'10, CPW~FOCS'19, BGW~SODA'21, KLSST~STOC'22).
    In this work, we resolve this longstanding conjecture in the affirmative in the most general setting of adversarial edge arrivals.
    We further generalize this result to obtain online counterparts of the \emph{list} edge coloring result of Kahn (J. Comb. Theory. A'96) and of the recent ``local'' edge coloring result of Christiansen (STOC'23).
\end{abstract}

\newpage

\tableofcontents
\newpage
\pagenumbering{arabic}
\section{Introduction}

Online edge coloring is one of the first problems studied through the lens of competitive analysis \cite{bar1992greedy}. 
In this problem, a graph is revealed piece by piece (either edge-by-edge or vertex-by-vertex). 
An algorithm must assign colors to edges upon their arrival irrevocably 
so that no two adjacent edges are assigned the same color.
The algorithm's objective is to use few colors in any graph of maximum degree $\Delta$,   close to the (offline) optimal $\Delta$ or $\Delta+1$ guaranteed by Vizing's Theorem \cite{vizing1964estimate}.

A trivial greedy online algorithm that assigns \emph{any} available color
to each edge upon arrival succeeds while using a palette of $2\Delta-1$ colors.
As shown over three decades ago, no algorithm does better on {low-degree} $n$-vertex graphs with sufficiently small maximum degree $\Delta=O(\log n)$ \cite{bar1992greedy}. While our understanding of online edge coloring is thus complete   on \emph{low-degree graphs}, the dynamics are considerably more complex and interesting when 
$\Delta$ surpasses $\omega(\log n)$.  The authors in~\cite{bar1992greedy} even conjectured that barring the low-degree case, 
edge coloring can be performed online while nearly matching the guarantees of offline methods.

\begin{conjecture}[\cite{bar1992greedy}]\label{conj:BMN}
There exists an online edge-coloring algorithm for $n$-vertex graphs that colors the edges of the graph online using $(1 + o(1))\Delta$ colors, assuming known maximum degree $\Delta = \omega(\log n)$.\footnote{Knowledge of $\Delta$ is necessary: no algorithm can use fewer than $\frac{e}{e-1}\Delta\approx 1.582\Delta$ colors otherwise \cite{cohen2019tight}.}
\end{conjecture}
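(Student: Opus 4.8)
\medskip
\noindent\textbf{Proof sketch.}
The plan is to reduce \Cref{conj:BMN} to an \emph{online rounding} task. Note first that a feasible fractional $\Delta$-edge-coloring can be maintained online for free: when an edge $e=(u,v)$ arrives, put weight $1/\Delta$ on each of a fixed set of $\Delta$ colors; since every vertex has degree at most $\Delta$, each color class is a fractional matching and the weights on $e$ sum to $1$. Hence the entire content is to round such a fractional coloring \emph{online and irrevocably} into an integral one using only $(1+\eps)\Delta+\oh(\Delta)$ colors for an arbitrarily small fixed $\eps>0$; letting $\eps\to 0$ slowly as a function of $n$ (using $\Delta=\omega(\log n)$) then gives the $(1+\oh(1))\Delta$ bound claimed in \Cref{conj:BMN}.

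For the rounding I would run an online adaptation of the R\"odl nibble that underlies Kahn's offline theorem. Fix an offline nibble schedule that $(1+\eps)\Delta$-edge-colors any graph of maximum degree $\Delta$: it reveals the palette in $T=T(\eps)$ batches $B_1,\dots,B_T$ together with an $\oh(\Delta)$-color reserve, and in round $t$ each still-uncolored edge tentatively picks a uniformly random color of $B_t$ and commits it if it is free at both endpoints. Online, when $e=(u,v)$ arrives, run these rounds \emph{on $e$ alone}: for $t=1,\dots,T$ draw a uniformly random $c\in B_t$, and if neither $u$ nor $v$ currently has an incident edge colored $c$ (only previously arrived edges can, so this is checkable online), assign $c$ to $e$ and stop; otherwise continue to round $t+1$. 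If all $T$ rounds fail, color $e$ greedily from the reserve. The edges that reach round $t$, with their $B_t$-choices, then play the role of the $t$-th nibble round, and the goal is to preserve --- now against an adversarial interleaving of the rounds across edges, and with arrival order playing the role of intra-round conflict resolution --- the key invariant of the offline analysis: that the subgraph of edges still active after round $t$ has maximum degree decreasing at the nibble's rate, so the leftover handed to the reserve has degree $\oh(\Delta)$.

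\textbf{The main obstacle} is establishing this invariant under adversarial arrivals. Unlike the offline nibble, the edges of a round arrive interleaved with those of all other rounds, and a late edge's fate depends on every earlier random choice, which an adaptive adversary may try to exploit through its choice of the future graph; the product-measure estimates of the offline argument are therefore not directly available. I would instead expose the algorithm's randomness in arrival order and bound each vertex's active degree after each round by a martingale (Azuma/Freedman) argument: each arriving edge moves a fixed vertex's active degree by $\Oh(1)$ and only $\Oh(\Delta)$ edges influence it, so deviations are of order $\sqrt{\Delta\log n}=\oh(\Delta)$ exactly because $\Delta=\omega(\log n)$; a union bound over the $n$ vertices and $T$ rounds then gives the invariant with high probability, even against an adaptive adversary. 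A related subtlety --- and the reason the invariant is phrased via degrees rather than per-color loads --- is that the adversary controls degrees only up to the hard cap $\Delta$, whereas it could otherwise steer many conflicts onto a few colors; drawing a fresh random color per edge per round, rather than committing once, is what defuses this.

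Finally, the two strengthenings promised in the abstract should follow with comparatively little extra work. For the online analogue of Kahn's \emph{list} edge coloring, each edge instead samples from its own list intersected with $B_t$; the nibble analysis never used a common palette, only that lists have size $\ge(1+\eps)\Delta$ with bounded color-degree, and the martingale concentration is unaffected. For the online analogue of Christiansen's \emph{local} edge coloring, one replaces the uniform $1/\Delta$ fractional weights by weights governed by each edge's local degree $\max(\deg u,\deg v)$ and uses per-edge round sizes; the invariant becomes a per-vertex statement closed by the same martingale bound. I expect essentially all of the genuinely new difficulty to lie in the adversarial-order concentration of the nibble, the rest being an adaptation of the standard offline bookkeeping.
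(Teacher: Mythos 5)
There is a genuine gap, and it sits exactly where you flag the ``main obstacle.'' Your plan is to run an interleaved, per-edge nibble (each arriving edge retries random colors from successive batches) and to salvage the offline invariant---that each vertex's uncolored degree shrinks at the nibble rate---by an Azuma/Freedman bound. But a concentration inequality only controls fluctuations around the conditional expectation; it gives no handle on the \emph{drift} itself, i.e., on showing that an edge $(u,v)$ reaching round $t$ still succeeds with probability comparable to the offline nibble's. That success probability is governed by the joint distribution of the sets of blocked colors at $u$ and at $v$, and under adversarial interleaving these exhibit precisely the correlations that confined the nibble method to random-order arrivals \cite{bhattacharya2021online} and capped adversarial edge-arrival guarantees at $(\frac{e}{e-1}+o(1))\Delta$ \cite{kulkarni2022online}. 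Your sketch simply asserts the offline rate transfers; nothing in the martingale step substitutes for the missing estimate. The concentration claim is also quantitatively wrong as stated: the uncolored degree of a fixed vertex $v$ is influenced not by $O(\Delta)$ edges but by the $\Theta(\Delta^2)$ edges of its two-hop neighborhood (an edge at a neighbor $u$ blocks colors for the edge $(u,v)$), so plain Azuma with $O(1)$ step sizes yields deviations of order $\Delta\sqrt{\log n}$, which is useless; one needs a bound on the observed variance before Freedman helps, which is a nontrivial part of the actual argument. Your parenthetical claim that this would hold even against an \emph{adaptive} adversary is likewise unsupported (and, by standard reductions, such a guarantee would essentially amount to a deterministic algorithm).

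The paper's route is different in the decisive step: rather than analyzing a free-color/nibble process, it changes the algorithm so that the marginals are correct \emph{by construction}. It designs an online matching algorithm (\Cref{alg:edge-arrival}) that matches each arriving edge with probability $\frac{1}{\Delta+q}\cdot\frac{1}{F_t(u)F_t(v)}$, where $F_t(\cdot)$ is the product of the \emph{observed} survival probabilities along the actual execution path; this makes every edge matched with probability exactly $1/(\Delta+q)$ unless a cap is hit, and the whole difficulty is reduced to showing the cap is rarely hit, i.e., that $F_t(v)\ge q/(3\Delta)$ w.h.p. That is where the martingale enters: the relevant process has step size $O(1/q)$ and observed variance $O(\Delta\log\Delta/q^2)$, so Freedman's inequality (\Cref{thm:freedman_inequality}) applies even though the martingale has $\Theta(\Delta^2)$ nontrivial steps. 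Edge coloring then follows via the matching-to-coloring reduction (\Cref{lemma:reduction2} and its strengthening), which repeatedly peels near-perfect matchings to reduce the uncolored degree, using negative association for the per-phase Chernoff bound. So your reduction-to-rounding framing is in the right spirit, but the core rounding engine you propose is unproven, and the missing ingredient is not a concentration trick but the algorithmic idea of scaling by the observed $F_t(u)F_t(v)$ so that the problematic correlations never need to be estimated.
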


Progress towards resolving \Cref{conj:BMN} was obtained for restricted settings, including  
random-order edge arrivals \cite{aggarwal2003switch,bahmani2012online,bhattacharya2021online,kulkarni2022online} and vertex arrivals \cite{cohen2019tight,saberi2021greedy,blikstad2023simple}. 
In the most general setting, i.e., under adversarial edge arrivals, \cite{kulkarni2022online} recently provided the first algorithm outperforming the trivial algorithm, showing that an $(\frac{e}{e-1}+o(1))\Delta$-edge-coloring is achievable. 

Most prior results for online edge coloring under adversarial arrivals \cite{cohen2019tight,saberi2021greedy,kulkarni2022online} were attained via the following tight connection between online edge coloring and online matching.\footnote{Online matching algorithms must decide for each arriving edge whether to irrevocably add it to their output matching.}
Given an $\alpha\Delta$-edge-coloring algorithm, it is easy, by sampling a color, to obtain an online matching algorithm that matches each edge with probability at least $1/(\alpha\Delta)$. In contrast, \cite{cohen2019tight} provided an (asymptotically) optimal reduction in the opposite direction, from online $(\alpha+o(1))\Delta$-edge-coloring algorithms when $\Delta=\omega(\log n)$ to online matching algorithms that match each edge with probability at least $1/(\alpha\Delta)$. (See \Cref{lemma:reduction2}.) A positive resolution of \Cref{conj:BMN} therefore requires---and indeed, is equivalent to---designing an online matching algorithm that matches each edge with probability at least $1/((1+o(1))\Delta)$.

\paragraph{Our Main Result.} In this work, we resolve  \Cref{conj:BMN}:

\begin{theorem} [See exact bounds in \Cref{thm:ec-body}]\label{thm:edge-coloring}
    There exists an online algorithm that, on $n$-vertex graphs with known maximum degree $\Delta=\omega(\log n)$, outputs a $(1+o(1))\Delta$-edge-coloring with high probability.
\end{theorem}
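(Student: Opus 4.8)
By the reduction of Cohen, Peng and Wajc (\Cref{lemma:reduction2}), it suffices to design an online matching algorithm that, on any graph arriving under adversarial edge arrivals with known maximum degree $\Delta = \omega(\log n)$, matches each edge with probability at least $\frac{1}{(1+o(1))\Delta}$. So the plan is to construct such a ``fractional-matching-like'' online algorithm and then invoke the black-box reduction to obtain the edge coloring.

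The approach I would take is a multi-level or recursive one, mimicking the success of prior vertex-arrival algorithms but making it robust to edge arrivals. At a high level, an arriving edge $uv$ should be matched whenever both $u$ and $v$ are still ``available''; the trouble is that with $2\Delta-1$ worth of naive randomness one only guarantees availability probability $\approx 1/2$, not $1 - o(1)$. To boost this, I would run many \emph{independent sub-algorithms in parallel} on random sub-sampled subgraphs: partition (or randomly sprinkle) the incoming edges into $k = \omega(1)$ streams, each of which is a graph of max degree roughly $\Delta/k + \Oh(\sqrt{(\Delta/k)\log n})$ with high probability by a Chernoff bound (this is where $\Delta = \omega(\log n)$ is used). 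Within each low-degree stream we only need a matching algorithm that is correct ``on average,'' and we can afford to lose constant factors there, since we then \emph{combine} the $k$ matchings: an edge fails to be matched overall only if it fails in \emph{its} stream, but by choosing the per-stream algorithm to match each of its edges with probability $\ge \frac{1-o(1)}{2\Delta/k}$ and by a union/inclusion argument over which stream an edge lands in, the effective per-edge match probability telescopes up to $\frac{1-o(1)}{\Delta}$. Making this telescoping actually work is the crux.

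Concretely, I expect the key technical step to be an amplification lemma of roughly the following shape: if one has an online matching algorithm matching every edge with probability $\ge \frac{1}{c\Delta}$ for some $c \ge 1$, one can bootstrap it --- by running two independent copies on a random $2$-coloring of the edge set and ``stitching'' their outputs via an Eulerian-orientation / alternating-path argument so that conflicts are resolved without dropping too much mass --- into one matching every edge with probability $\ge \frac{1}{(c - \Omega(1) \cdot \text{something}) \Delta}$ or, better, $\ge \frac{2}{(c+1)\Delta}$-type improvement, until $c \to 1$. Iterating $\omega(1)$ but $\oh(\log \Delta)$ times drives the constant to $1 + o(1)$ while keeping the degree loss in each split negligible (each split costs a $1 + \Oh(\sqrt{\log n / \Delta_{\text{current}}})$ factor, and the product over all levels remains $1 + o(1)$ as long as the degrees stay $\omega(\log n)$, which caps the recursion depth).

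The main obstacle, and where I would spend most of the effort, is controlling \emph{correlations} across the parallel/recursive copies: the event that edge $uv$ is matched in one stream is not independent of the degrees or match-events in another stream once we condition on the adversarial arrival order, and the adversary may adaptively concentrate edges to foil the subsampling. I would handle this by insisting the sub-sampling is done with fresh randomness \emph{independent of the input} (each edge flips its own coins on arrival), so that conditioned on the coins the sub-streams are honest graphs of bounded degree and the per-stream guarantees apply unconditionally; the degree concentration then holds against an \emph{oblivious} adversary, and a standard reduction (or the fact that the reduction of \Cref{lemma:reduction2} only needs per-edge probabilities, not independence) extends it to the adaptive case. The secondary obstacle is bookkeeping the $o(1)$ terms so that the total over $\omega(1)$ recursion levels is genuinely $o(1)$ and not, say, $\Theta(1)$; this is a routine but delicate geometric-series estimate that I would isolate into its own computational lemma. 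Finally, one must verify the ``with high probability'' in the theorem statement: it comes for free from the reduction \Cref{lemma:reduction2}, which converts a per-edge matching guarantee into a whp coloring guarantee via concentration, plus a union bound over the (at most $\binom{n}{2}$) edges and the $\oh(\log \Delta)$ recursion levels.
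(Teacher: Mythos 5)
Your first step coincides with the paper's: by \Cref{lemma:reduction2} the theorem reduces to an online matching algorithm that matches every edge with probability at least $1/((1+o(1))\Delta)$ under adversarial edge arrivals, which is exactly the paper's key technical result (\Cref{thm:online_matching}). The gap is in how you propose to build that matching algorithm. All the weight rests on your ``amplification lemma'': run independent copies on a random split of the edges and stitch their outputs ``via an Eulerian-orientation / alternating-path argument so that conflicts are resolved without dropping too much mass.'' Eulerian-orientation and alternating-path repair are offline techniques; in the online model every matching decision is irrevocable, so when an arriving edge is matched inside its own stream but one of its endpoints has already been claimed (in the final output) by another stream, you must drop it immediately and can never later re-route along an alternating path. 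Moreover these conflicts are not rare: with $k$ streams of maximum degree about $\Delta/k$ and per-stream match probability about $\tfrac{1}{2\Delta/k}$ per edge, each vertex gets matched within each stream with probability up to roughly $1/2$, so in expectation $\Theta(k)$ streams compete for each vertex, and the online conflict resolution loses precisely the constant factor you are trying to eliminate. The claimed ``telescoping'' to $\tfrac{1-o(1)}{\Delta}$ is never substantiated, and the history of the problem suggests it should be treated with suspicion: the only prior edge-arrival result (Kulkarni et al.) is also based on subsampling and stalls at $(\tfrac{e}{e-1}+o(1))\Delta$, a barrier the paper explicitly set out to circumvent by a different route.

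The paper's actual proof of the matching guarantee is a single direct algorithm (\Cref{alg:edge-arrival}), not a recursive amplification: an arriving edge $(u,v)$ with both endpoints free is matched with probability $\frac{1}{\Delta+q}\cdot\frac{1}{F_t(u)\,F_t(v)}$, where $F_t(\cdot)$ is the product of $(1-\hat P)$ over previously arrived incident edges, i.e.\ the scaling factor depends on the algorithm's realized execution path. With this choice the marginal $\frac{1}{\Delta+q}$ follows by total probability, and the only substantive task is to show the scaling stays legal, namely $F_t(v)\geq q/(3\Delta)$ with high probability in $\Delta$; this is done by expressing a lower bound on $F(v)$ as the terminal value of a martingale with small step size and bounded observed variance and applying Freedman's inequality (\Cref{thm:freedman_inequality}) with $q=\Theta(\Delta^{3/4}\log^{1/2}\Delta)$. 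So while your reduction step is sound, the core of your argument--—an online-implementable conflict-resolution/amplification lemma---is missing, and as proposed it relies on offline machinery that the online model forbids; that missing piece is exactly where the difficulty of the theorem lives.
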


Via the aforementioned reduction, we obtain the above from our following key technical contribution.

\begin{restatable}{theorem}{OnlineMatchingTheorem} \label{thm:online_matching}
    There exists an online matching algorithm that on graphs with known maximum degree $\Delta$, outputs a random matching $M$ satisfying
    $$\Pr[e\in M] \ge \frac{1}{\Delta + \Theta(\Delta^{3/4} \log^{1/2}\Delta)} = \frac{1}{(1+o(1)) \cdot \Delta} \qquad \forall e\in E.$$
\end{restatable}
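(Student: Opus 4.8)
The plan is to design an online matching algorithm achieving per-edge matching probability $1/(\Delta + \Theta(\Delta^{3/4}\log^{1/2}\Delta))$ via a \emph{water-filling / fractional-then-round} strategy combined with a careful analysis of ``slack'' in the degrees. The natural starting point is the observation that if each vertex $v$ had \emph{exactly} $\Delta$ incident edges and we could match each of them with probability exactly $1/\Delta$ subject to the matching constraint at $v$, we would be done; the difficulty is that edges arrive adversarially, a vertex may not yet know how many more edges will arrive, and committing probability mass too early (or too late) wastes it. First, I would maintain for each arriving edge $e = uv$ a target marginal $x_e$ close to $1/\Delta$, and attempt to realize these marginals online by a dependent rounding / random-greedy scheme: upon arrival of $e=uv$, sample whether to \emph{tentatively} match $e$ with the appropriate conditional probability given that neither $u$ nor $v$ is already matched, where these conditional probabilities are chosen so that the unconditional probability telescopes to roughly $1/\Delta$.

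The key steps, in order, would be: (1) formalize the ``ideal'' process in which, when $e=uv$ arrives, we match it with probability $p_e = \tfrac{1}{\Delta}\cdot\big(1 - q_u\big)^{-1}\big(1 - q_v\big)^{-1}\cdot(\text{something})$, where $q_u$ is the probability $u$ is already matched — and show that a clean choice makes $\Pr[e\in M]$ \emph{exactly} some target $\tau$ for every edge, \emph{provided} $\tau$ is small enough that all conditional probabilities stay in $[0,1]$; (2) identify the precise threshold on $\tau$ that keeps the scheme feasible — this is where the $\Delta^{3/4}\log^{1/2}\Delta$ additive loss should enter, coming from a concentration argument bounding how ``unlucky'' the realized load at a vertex can be relative to its expectation; (3) use a martingale / Azuma–Hoeffding concentration bound over the (at most $\Delta$) edges at each vertex to show that, except with probability $\mathrm{poly}(1/\Delta)$, the number of tentatively-matched edges at $v$ stays below its budget, so no ``overflow'' correction is needed; (4) handle the rare overflow events by a union bound so that they only perturb the marginals by a lower-order term, and conclude $\Pr[e\in M]\ge 1/(\Delta+\Theta(\Delta^{3/4}\log^{1/2}\Delta))$. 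Throughout, the reduction \Cref{lemma:reduction2} is not needed here — this theorem is the engine that feeds \emph{into} that reduction.

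The main obstacle I expect is step (2)/(3): controlling the interaction between the matching constraint at $u$ and at $v$ simultaneously under adversarial ordering. Unlike the vertex-arrival or random-order settings, an adversary can feed many edges to a vertex early and then none, or vice versa, so a vertex cannot safely ``spread its budget uniformly in time.'' The fix I would pursue is to make the per-edge match probability depend on the \emph{current realized state} (whether $u,v$ are matched) rather than on time, so that the algorithm is automatically adaptive; the cost of this adaptivity is a second-moment/variance blow-up in the load, and bounding that blow-up by $O(\Delta^{3/4}\log^{1/2}\Delta)$ — rather than the naive $O(\sqrt{\Delta\log\Delta})$ one might hope for, or the $O(\Delta)$ one fears — is the crux. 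I anticipate that the $3/4$ exponent arises from balancing two error terms: one of order $\sqrt{\Delta\log\Delta}$ from concentration of a single vertex's load, and one that degrades because correlations across the two endpoints of an edge compound over a chain of $\Theta(\Delta)$ dependent decisions, and optimizing the trade-off parameter yields $\Delta^{3/4}$. Making this trade-off rigorous — likely via a potential-function argument tracking the ``deficiency'' of each vertex's realized matching probability and showing it never exceeds the claimed slack — is where the real work lies.
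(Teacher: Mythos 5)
Your overall skeleton --- scale each edge's sampling probability so that its marginal telescopes to a uniform target, cap the target so the probabilities stay in $[0,1]$, and show via martingale concentration plus a union bound that the cap rarely binds --- is indeed the skeleton of the paper's argument. But the mechanism you give for the telescoping is the step that fails. In your step (1) you scale by $(1-q_u)^{-1}(1-q_v)^{-1}$ with $q_u,q_v$ the \emph{unconditional} probabilities that the endpoints are already matched; this implicitly treats the events ``$u$ free'' and ``$v$ free'' as independent, and that assumption is precisely the central obstruction under adversarial edge arrivals (it holds on trees, not in general). Replacing the product by the true joint probability $\Pr[u,v\text{ both free}]$ does not rescue the plan either: analyzing that joint probability is exactly where the correlations become unmanageable. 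The missing idea is to make the scaling factor \emph{path-dependent}: the paper scales by $1/(F_t(u)F_t(v))$, where $F_t(u)$ is the product of $(1-\hat P(e'))$ over the \emph{realized} sampling probabilities of the previously arrived edges at $u$ in this execution (\Cref{alg:edge-arrival}). With that choice, conditioning on all randomness outside $\delta(u)\cup\delta(v)$ makes the matching probability exactly $1/(\Delta+q)$ with no independence assumption (\Cref{lemma:sufficient_cond_for_right_marginal}), and the quantity that must stay bounded away from zero, $F_t(v)$, becomes exactly the object one can track as a martingale. Your remark that the probability should ``depend on the current realized state (matched or not)'' does not capture this: the naive scheme also matches only when both endpoints are free; it is the scaling factor itself that must be computed from the realized execution.

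The second gap is your step (3). Azuma--Hoeffding over the at most $\Delta$ edges at a vertex is not enough: the quantity controlling feasibility at $v$ is a sum over neighbors $u_i$ of terms $\frac{1}{\Delta+q}\cdot\frac{1}{F_{t_i}(u_i)}$, and the neighbors' $F$-values move with every edge in the \emph{two-hop} neighborhood, so the natural martingale has up to $\Delta^2$ nontrivial steps of size roughly $1/q$; Azuma then gives deviation of order $\Delta/q$, useless against the needed deviation of order $q/\Delta$. The paper instead bounds the \emph{observed variance}: each step contributes about $\hat P(e_t)/q^2$, and the hard floor $F\ge q/(4\Delta)$ forces $\sum_{e\in\delta(u)}\hat P(e)=O(\log\Delta)$ per vertex, so the total variance is $O(\Delta\log\Delta/q^2)$; Freedman's inequality (\Cref{thm:freedman_inequality}) then applies, and balancing $(q/\Delta)^2$ against $(\Delta\log\Delta/q^2)\cdot\log\Delta$ is what yields $q=\Theta(\Delta^{3/4}\log^{1/2}\Delta)$ --- not the trade-off you sketch. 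Relatedly, your ``tentatively match, then correct overflows'' device is both undeveloped and risky (corrections threaten either the matching constraint or the marginals); the paper needs no correction step: it truncates $\hat P(e_t)$ to $0$ whenever accepting it would push an $F$-value below $q/(4\Delta)$, which keeps the algorithm well-defined unconditionally, and the entire analysis reduces to showing this truncation occurs with probability $O(\Delta^{-3})$ per edge.
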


We note that the above matching probability of $1/(\Delta+q)$ for $q =  \Theta(\Delta^{3/4} \log^{1/2}\Delta)$ approaches a (lower order) lower bound of $q = \Omega(\sqrt{\Delta})$ implied by the competitiveness lower bound of $1-\Omega(1/\sqrt{\Delta})$ for online matching in regular graphs due to \cite{cohen2018randomized}.

Before explaining further implications of our results and techniques, we briefly discuss our approach for \Cref{thm:online_matching} and its main differences compared to prior work. 
 
\paragraph{Techniques overview.} For intuition, consider a simple ``algorithm'' that, by its very design, appears to match every edge with a probability of $1/(\Delta + q)$: 
\begin{center}
\begin{minipage}{0.95\textwidth}
\begin{mdframed}[hidealllines=true, backgroundcolor=gray!15]
When an edge $e_t = (u,v)$ arrives and connects two unmatched vertices, match it with probability  \[P(e_t) = \frac{1}{\Delta + q}\cdot \frac{1} {\Pr[u,v \textrm{ both unmatched until time $t$}]}\,.\]
\end{mdframed}
\end{minipage}
\end{center}
However, the caveat, and the reason for the quotation marks around "algorithm," is that this process is viable only if $P(e_t)$ constitutes a probability. 
 This raises the question: how large must $q$ 
 be to ensure that $P(e_t) \leq 1$ for all edges $e_t$? Suppose we naively assume that the events ``$u$ is unmatched until time $t$'' and ``$v$ is unmatched until time $t$'' are independent. In that case, straightforward calculations show that $q = O(\sqrt{\Delta})$ would suffice for well-defined probabilities.
  However, the assumption of independence rarely holds outside of simplistic graphs like trees, and so the aforementioned events may exhibit complex and problematic correlations. 
  Such correlations present the central challenge in establishing tight bounds for general edge arrivals. 
  
  Previous studies addressed the above challenge by circumscribing and managing these correlations.
  For instance, in more constrained arrival models, \cite{cohen2018randomized,cohen2019tight} used a variant of this approach; they rely heavily on one-sided vertex arrivals in bipartite graphs to choose an edge to match in a correlated way upon each vertex arrival, while creating useful negative correlation allowing for Chernoff bounds beneficial for future matching choices.
  In contrast, the only known method for general edge arrivals was given by \cite{kulkarni2022online}: they subsample locally tree-like graphs and employ sophisticated correlation decay techniques to approximate the independent scenario, albeit at the expense of only being able to match each edge with a probability of at least $1/(\alpha\Delta)$ for $\alpha:=e/(e-1)+o(1)$.
Unfortunately, the ratio of $e/(e-1)$ appears to be an intrinsic barrier for this approach.

Our approach deviates from the one guiding~\cite{cohen2018randomized,cohen2019tight,kulkarni2022online}, by allowing for correlations instead of controlling and taming them.
Crucially, we present a different but still simple algorithm, with a subtle difference,  which we describe informally here (see detailed exposition in \Cref{sec:intuition}): instead of obtaining the probability $P(e_t)$ by scaling $1/(\Delta+q)$ by $1/\Pr[u,v \textrm{ both unmatched until time $t$}]$, our scaling factor depends upon the algorithm's actual execution path (sequence of random decisions) so far.
The modified algorithm allows us to analyze the scaling factor for an edge as a martingale process. 
While there may still be correlations,  we show that this martingale has (i) small step size and (ii) bounded observed variance. 
These properties allow for strong Chernoff-type concentration bounds, specifically through Freedman's inequality (\Cref{thm:freedman_inequality}), which is pivotal to our analysis. 
The change of viewpoint is crucial for achieving our result and leads to a simple and concise algorithm and analysis. 
The results and techniques also extend to more general settings, as we explain next.

\paragraph{Secondary Results and Extensions.}

In \Cref{sec:matching_peeling}  we combine \Cref{thm:online_matching} with a new extension of the above-mentioned reduction, from which we obtain online counterparts to two (offline) generalizations of Vizing's theorem, concerning both ``local'' and list edge coloring.
For some background, a \emph{list edge coloring} of a graph is a proper coloring of the edges, assigning each edge a color from an edge-specific palette.
The \emph{list chromatic number} of a graph, also introduced by Vizing \cite{vizing1976vertex}, is the least number of colors needed for each edge to guarantee that a proper list edge coloring exists.
A seminal result of Kahn \cite{kahn1996asymptotically} shows that the list chromatic number is asymptotically equal to $\Delta$.
Another, ``local'', generalization of Vizing's Theorem was recently obtained by Christiansen \cite{christiansen2023power}, who showed that any graph's edges can be properly colored (offline) with each edge $(u,v)$ assigned a color in the set $\{1,2,\dots,1+\max(\deg(u),\deg(v))\}$.
In this work, using \Cref{thm:online_matching} and extensions of the aforementioned reduction, we show that results of the same flavor as \cite{kahn1996asymptotically} and \cite{christiansen2023power} can be obtained by \emph{online algorithms}.

\begin{theorem}
[See exact bounds in \Cref{thm:list_edge_coloring}] There exists an online algorithm that computes an edge coloring which, with high probability, assigns each edge $e$ a color from its list $L(e)$ (revealed online, with edge~$e$), provided each list has sufficiently large size $(1+o(1))\Delta$ and that $\Delta=\omega(\log n)$.
\end{theorem}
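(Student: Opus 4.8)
The plan is to obtain this result by feeding the online matching algorithm of \Cref{thm:online_matching} into an extension of the reduction \Cref{lemma:reduction2} that is sensitive to per-edge lists; this extension, carried out in \Cref{sec:matching_peeling}, is where essentially all the work lies. Recall first the shape of the original (list-free) reduction: one runs a long sequence of instances of the online matching algorithm, ``peeling off'' one matching at a time, and, identifying colors with instances, colors each edge by the first instance that matches it. Two facts drive it: (i) an instance run on a residual graph of maximum degree $d$ matches every edge presented to it with probability at least $1/(d+\Theta(d^{3/4}\log^{1/2}d))=(1-\oh(1))/d$, and (ii) repeatedly removing such matchings drives the residual maximum degree down roughly geometrically, so one may re-run the matching algorithm with a shrinking degree parameter; together these color all edges with $(1+\oh(1))\Delta$ colors with high probability, a residual graph of tiny maximum degree being finished separately with $\oh(\Delta)$ extra colors.

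To accommodate lists, I would modify the reduction so that the instance associated with a color $c$ only ever sees those still-uncolored edges $e$ with $c\in L(e)$; the resulting color classes are then automatically proper and consistent with the lists, so the content is again entirely quantitative. The genuinely new feature compared with the list-free analysis is that the residual-degree decay at a vertex $v$ is now powered only by the colors that the surviving edges at $v$ actually request. I would therefore maintain, alongside residual vertex degrees, for each uncolored edge $e$ its available list $A(e)$ --- the colors of $L(e)$ not yet used on an edge sharing an endpoint with $e$ --- and show that $|A(e)|$ stays above the residual degrees of $e$'s endpoints throughout the process. This is precisely where the hypothesis $|L(e)|\ge(1+\oh(1))\Delta$ is used: since this only barely exceeds the total number of colors the algorithm spends, each edge is missing only $\oh(\Delta)$ colors, so, aggregated over all colors, the list constraints slow the degree decay by a mere $\oh(\Delta)$-order amount and the available lists never fall below the residual degrees, so every edge eventually receives a color from its own list. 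This is an online rendition of the semi-random (``nibble'') argument underpinning Kahn's theorem on the list chromatic index.

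I expect the main obstacle to be establishing the above concentration statements under adversarial online edge arrivals: the decisions made by the many matching instances are strongly correlated --- they read a single input stream, and an edge removed by one instance disappears from all later ones --- so no independence-based concentration inequality applies directly. I would deal with this exactly as in the proof of \Cref{thm:online_matching}: expose the randomness in a carefully chosen order (by edge-arrival time, and then by color) so that the number of surviving edges at a vertex, and the size of the available list of an edge, each become a martingale with small increments and controlled observed variance, to which Freedman's inequality (\Cref{thm:freedman_inequality}) applies with failure probability $n^{-\omega(1)}$. A union bound over the $\le n^2$ edges, and over vertices and peeling phases, using $\Delta=\omega(\log n)$ to dominate the per-item failure probability, then finishes the proof. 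The ``local'' analogue of Christiansen's theorem is obtained by running the same scheme with $\Delta$ replaced throughout by $\max(\deg(u),\deg(v))$ on each edge $(u,v)$, i.e., maintaining a per-vertex rather than a global degree parameter.
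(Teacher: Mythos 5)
Your high-level architecture matches the paper's: peel matchings produced by \Cref{thm:online_matching} in phases with a shrinking degree parameter (this is \Cref{alg:reducing_max_degree_generic} run inside \Cref{alg:coloring_generic}), finish the low-degree residual graph greedily, and union bound. However, there is a genuine gap at exactly the point where the list version differs from \Cref{lemma:reduction2}: you never specify \emph{which} colors of each (arbitrary) list are spent in which phase, and your substitute argument --- tracking an aggregate availability $A(e)$ and asserting that ``each edge is missing only $\oh(\Delta)$ colors, so the list constraints slow the degree decay by a mere $\oh(\Delta)$-order amount'' --- does not suffice. Each color corresponds to one instance of \Cref{alg:edge-arrival}, and that instance must be launched with the degree parameter $d_i$ of a specific phase to deliver matching probability $\approx 1/d_i$; hence the color universe must be split online into per-phase classes $C_0,\dots,C_{f+1}$, and for the degree at a dense vertex to drop by $\approx\lambda_i$ in phase $i$, \emph{every} edge at that vertex must have $|L(e)\cap C_i|\approx\lambda_i$ usable colors in that phase --- neither too few (the decay stalls) nor too many (the list budget is exhausted in early phases and later phases starve). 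Since lists are arbitrary and need not share a common palette, ``missing $\oh(\Delta)$ colors'' has no global meaning, and an adversarial family of lists can defeat any scheme that only monitors total availability rather than per-phase balance. This balance condition is exactly the paper's notion of an admissible partitioning (\Cref{def:admissible_partitioning}), and the entire content of the paper's proof of \Cref{thm:list_edge_coloring} beyond \Cref{thm:main_coloring} is constructing one online: lists are first pruned to a uniform size $d_i+a_i$, each newly seen color is independently assigned to phase $i$ with probability $\approx(\lambda_i+5\sqrt{\lambda_i\ln n})/(d_i+a_i)$, and a Chernoff bound plus union bound (\Cref{lemma:helper_lemma_list_edge_coloring}) shows every $|\ell^i(e)|$ lands in $[\lambda_i,\lambda_i+10\sqrt{\lambda_i\ln n}]$ with high probability. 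Your proposal contains no mechanism playing this role, so the degree-reduction step you rely on is not justified.

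A smaller divergence: you propose proving the per-phase degree-decay concentration by a martingale/Freedman argument over a joint exposure of edge and color randomness. In the paper, Freedman's inequality lives entirely inside the analysis of \Cref{thm:online_matching}; at the reduction level, \Cref{thm:reducing_max_degree_generic} exploits that distinct color instances use independent randomness and views the indicators ``edge $e$ at $v$ got some color of $\ell(e)$'' as occupancy variables of a balls-and-bins process, which are negatively associated, so a standard Chernoff bound applies. Your route could plausibly be made to work, but it is heavier and would still not repair the missing color-to-phase allocation described above.
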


\begin{theorem}
[See exact bounds in \Cref{thm:local_edge_coloring_strenghtened}]
There exists an online algorithm that computes an edge coloring assigning each edge $e=(u,v)$ a color from the set $\left\{ 1,\,2\,,\,\ldots,\, d_{\text{max}}(e)\cdot (1+o(1)) \right \}$ with high probability, where $d_{\text{max}}(e) := \max\{\deg(u), \deg(v)\}$, 
provided that $d_{\text{max}}(e) = \omega(\log n)$.
\end{theorem}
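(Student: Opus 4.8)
The plan is to derive this ``local'' online edge coloring result as a corollary of \Cref{thm:online_matching}, via a peeling/layering reduction in the spirit of the reduction from online edge coloring to online matching (\Cref{lemma:reduction2}), but sensitive to the \emph{local} degrees $d_{\max}(e)$ rather than the global maximum degree $\Delta$. First I would set up the natural bucketing of vertices by degree: for $i=0,1,\dots$, let class $i$ consist of vertices $v$ with $\deg(v)\in(2^{i-1},2^i]$ (say), so that an edge $e=(u,v)$ has $d_{\max}(e)\in(2^{i-1},2^i]$ where $i$ is the larger of the two endpoints' classes. The idea is to handle the edges in order of their $d_{\max}$-class, peeling off one matching (or a small batch of matchings) at a time: within the subgraph induced by edges whose higher endpoint is in class $\le i$, we have maximum degree $\Theta(2^i)$, so \Cref{thm:online_matching} yields an online matching hitting each such edge with probability $\ge 1/((1+o(1))2^i)$. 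Running $\Theta(2^i\log n)$ independent (or suitably bootstrapped) copies and assigning fresh colors from a palette of size $(1+o(1))2^i$ dedicated to level $i$, a Chernoff/union bound argument shows every edge of level $i$ gets colored whp, using only colors in $\{1,\dots,(1+o(1))2^i\}\subseteq\{1,\dots,(1+o(1))d_{\max}(e)\}$.

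The subtlety — and the main obstacle — is that this must be done \emph{online}: when an edge $e=(u,v)$ arrives we must immediately commit a color, but the relevant degree $d_{\max}(e)$, and even which ``level'' $e$ belongs to, may not be fully determined, since more edges incident to $u$ or $v$ may arrive later. The standard trick (used, e.g., in \cite{christiansen2023power}-style offline arguments and in online degree-adaptive colorings) is to observe that each endpoint's class can only \emph{increase} over time, and to guess/commit conservatively: assign $e$ to the level corresponding to the current larger endpoint-class at arrival time, but arrange the color palettes so that levels are nested/disjoint across scales and each level $i$ reserves a palette of size $(1+o(1))2^i$. Because degrees at most double within a class, and because an edge is processed at the level of its endpoints' classes \emph{at the time it arrives} (which is a lower bound on the final class, hence at most $d_{\max}(e)$), the color used never exceeds $(1+o(1))d_{\max}(e)$. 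The one real loss to control is that a vertex $v$ of final degree $d$ contributes, across all levels $i$ with $2^i\le d$, at most $\deg_i(v)$ edges to level $i$; summing the per-level failure probabilities with a geometric series in $2^{-i}$ and using $d_{\max}(e)=\omega(\log n)$ for the edge in question keeps the whp bound intact, while low-level (small-degree) vertices never get a ``high probability'' guarantee — but that is exactly why the hypothesis is $d_{\max}(e)=\omega(\log n)$ rather than a global condition.

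To make the reduction precise I would state and prove a \emph{local} version of \Cref{lemma:reduction2}: given black-box access to the online matching algorithm of \Cref{thm:online_matching}, one can online-color every edge $e$ with $d_{\max}(e)=\omega(\log n)$ using a color from $\{1,\dots,(1+o(1))d_{\max}(e)\}$, whp. The proof runs the matching algorithm on the evolving ``level-$i$ subgraph'' $G_i$ (edges whose larger-class endpoint is currently in class $i$), noting that $G_i$ always has max degree $O(2^i)=O(d_{\max})$ over its edges, so the $o(1)$ slack $\Theta(\Delta^{3/4}\log^{1/2}\Delta)$ from \Cref{thm:online_matching} translates to $\Theta((2^i)^{3/4}\log^{1/2}(2^i)) = o(2^i) = o(d_{\max}(e))$ — crucially, the additive error is \emph{sublinear in the local scale}, which is what lets the bound localize cleanly. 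The bookkeeping of palettes (one disjoint block of $(1+o(1))2^i$ colors per level, or equivalently a single palette of size $(1+o(1))\cdot 2^{i_{\max}}$ with edges restricted to prefixes) and the Chernoff argument for ``every level-$i$ edge gets matched by one of the $\Theta(2^i\log n)$ copies'' are routine once the subgraph-degree bound is in hand; I would relegate these to a short calculation. Finally, I would remark that the same peeling machinery, run with list-palettes instead of $\{1,\dots,k\}$ and invoking the list-coloring extension of the reduction, simultaneously recovers the online list edge coloring theorem, so both secondary results fall out of one unified argument built on \Cref{thm:online_matching}.
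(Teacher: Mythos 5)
There is a genuine gap: the dyadic-level mechanism cannot deliver the \emph{local} $(1+o(1))$ bound, for two quantitative reasons. First, palette accounting. If each level $i$ gets a disjoint block of $(1+o(1))2^i$ colors, an edge whose final $d_{\max}(e)$ is barely above $2^{i-1}$ can receive a color of index up to $\sum_{j\le i}(1+o(1))2^j\approx 4\,d_{\max}(e)$; if instead the palettes are nested prefixes $\{1,\dots,(1+o(1))2^i\}$, then matching instances at different levels run on different subgraphs $G_i$ and are unaware of each other, so two adjacent edges at different levels can receive the same color and properness fails. Even setting aside block placement, the granularity of factor-$2$ classes already forces a loss of up to a factor $2$ (an edge with $d_{\max}(e)=2^{i-1}+1$ shares level $i$ with degree-$2^i$ vertices), not $1+o(1)$. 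Second, the coverage argument. In the reduction each run of \Cref{thm:online_matching} is one color class, so ``$\Theta(2^i\log n)$ copies'' means $\Theta(2^i\log n)$ colors — a $\log n$ factor too many — whereas with only $(1+o(1))2^i$ colors a fixed edge is matched by some copy only with constant probability ($\approx 1-1/e$), not whp. The correct mechanism (as in \Cref{lemma:reduction2}) never asks for per-edge whp coverage: it asks only that \emph{dense vertices} lose $\approx\lambda$ incident edges per batch of $\approx\lambda$ colors (concentration over the $\approx\Delta$ edges at a dense vertex), iterates until the uncolored degree is $O(\log n)$, and finishes greedily.

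The paper's proof of \Cref{thm:local_edge_coloring_strenghtened} resolves exactly these two tensions, and you would need its ingredients. It runs a single sequence of phases driven by the current uncolored maximum degree $d_i$ (\Cref{alg:reducing_max_degree_generic,alg:coloring_generic}), peeling only $\lambda_i=d_i^{2/3}\ln^{1/3}n=o(d_i)$ colors per phase — fine enough granularity that no constant factor is lost — and, crucially, it places the phase-$i$ color block at the \emph{top} of the shrinking range, $C^i=\{d_i+a_i-\lambda_i+1,\dots,d_i+a_i\}$, so blocks are disjoint (hence properness across phases is automatic, since within a phase each color is a single matching on the whole uncolored graph $U_i$) and are \emph{decreasing} in index. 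An edge $e$ only receives candidate colors once it first becomes incident to a dense vertex, at phase $i(e)$ with $d_{i(e)}\le d_{\max}(e)(1+o(1))$ (\Cref{lemma:helper_lemma_local_edge_coloring_1}), so every color it can get is at most $d_{i(e)}+a_{i(e)}\le d_{\max}(e)(1+o(1))$ plus the $O(\log n)$ additive slack; the leftover graph of degree $O(\log n)$ is colored greedily with the smallest colors, which also fit the local budget since $d_{\max}(e)=\omega(\log n)$. Your instinct to localize the reduction via \Cref{thm:online_matching} is the right one, but the top-aligned $o(d_i)$-size color blocks, the dense-vertex degree-reduction argument, and the greedy cleanup are the missing ideas without which the bound degrades to $O(d_{\max}(e))$ or $O(d_{\max}(e)\log n)$ rather than $(1+o(1))\,d_{\max}(e)$.
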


Finally, in \Cref{sec:extension}, we show that our algorithmic and analytic approach underlying \Cref{thm:online_matching} allows us more generally to round fractional matchings online.
Here, an \emph{$\alpha$-approximate online rounding algorithm} for fractional matchings is revealed (online) an assignment of non-negative values to the edges, $x:E\to \mathbb{R}_{\geq 0}$, with value $x_{e}$ revealed upon arrival of edge $e$, so that the total assigned value to the incident edges to each vertex is at most one, $\sum_{e\ni v}x_e\leq 1$.  
The algorithm's objective is to output online a randomized matching $M$ that matches each edge $e$ with probability at least $x_e/\alpha$. For \emph{one-sided vertex arrivals} in \emph{bipartite} graphs,  it is known that the optimal $\alpha$ is in the range $(1.207,1.534)$ \cite{naor2023online}, while if the matching is ``sufficiently spread out'', $\max_e x_e\leq o(1)$, then $\alpha=1+o(1)$ is possible \cite[Chapter~5]{wajc2020matching}. 
We generalize the latter result to the more challenging \emph{edge arrivals} setting in \emph{general} graphs.
\begin{theorem}[See exact bounds in \cref{thm:ronding_matchings_theorem}]\label{thm:online-rounding}
    There exists an online $(1+o(1))$-approximate rounding algorithm for online matching $\vec{x}$ under adversarial edge arrivals, subject to the promise that $\max_e x_e \leq o(1)$.
\end{theorem}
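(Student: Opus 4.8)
The plan is to reduce \Cref{thm:online-rounding} to \Cref{thm:online_matching} by a bucketing-and-simulation argument that absorbs a general fractional matching $\vec x$ into a disjoint union of ``near-regular'' instances, each of which the matching algorithm can handle essentially optimally. Concretely, since $\max_e x_e \le \delta$ for some $\delta = o(1)$, I would first round each $x_e$ down to the nearest multiple of $1/\Delta$ for a parameter $\Delta = \Theta(1/\delta)$ (or, more flexibly, $\Delta$ growing but with $\Delta\delta \to 0$), losing only a $(1+o(1))$ factor; this lets me treat the fractional matching as an (integer) ``multiset'' assigning each edge an integer weight $w_e = \lfloor x_e\Delta\rfloor \in \{0,1,\dots,O(\Delta\delta)\}$ with $\sum_{e\ni v} w_e \le \Delta$ at every vertex. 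The key structural observation is that such a weighted instance is exactly a fractional matching supported on a graph of maximum (weighted) degree $\Delta$, and $w_e/\Delta$ is polynomially small, so it behaves like a subgraph of a $\Delta$-regular graph --- precisely the regime \Cref{thm:online_matching} is built for.

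The second step is the online reduction itself. When edge $e=(u,v)$ arrives with value $x_e$, I create $w_e$ ``virtual copies'' $e^{(1)},\dots,e^{(w_e)}$ and feed them one by one to a single instance of the matching algorithm of \Cref{thm:online_matching} run on a virtual graph of max degree $\Delta$. If the matching algorithm matches \emph{any} of these copies, I output $e$ into $M$. By \Cref{thm:online_matching} each copy is matched with probability at least $1/(\Delta + \Theta(\Delta^{3/4}\log^{1/2}\Delta))$, so by a union bound $e$ is \emph{not} matched by any copy with probability at most $1 - w_e/(\Delta + q)$; but I actually want a \emph{lower} bound on $\Pr[e\in M]$, which is $\ge$ the probability that at least one copy is matched. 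A crude union bound gives only an upper bound on the failure event, not what I need --- so here one must instead argue directly that $\Pr[e \in M] \ge \Pr[e^{(1)} \text{ matched}] \ge 1/(\Delta+q)$, and more generally that the probability at least one of the $w_e$ copies is matched is at least $w_e/(\Delta+q) \cdot (1 - o(1))$. The cleanest way to get the latter is inclusion--exclusion down to second order together with the (negative or weak) correlation structure the algorithm's martingale analysis provides; alternatively, and more robustly, one reexamines the algorithm of \Cref{thm:online_matching} and observes that its guarantee is really that \emph{the marginal matching probability of a fresh virtual edge is $\ge 1/(\Delta+q)$ conditioned on the past}, which, summed over the $w_e$ copies of $e$ and using that $e$ can be matched by at most one copy, directly yields $\Pr[e \in M] \ge w_e/(\Delta+q) \ge (x_e\Delta - 1)/(\Delta+q) \ge x_e(1-o(1))$, using $x_e \le \delta$ to control the $-1$ additive slack (this is where we need $x_e$ not too tiny relative to $1/\Delta$, which we ensure by choosing $\Delta$ appropriately — e.g. $\Delta = \lfloor 1/\sqrt\delta\rfloor$ so that both $x_e\Delta \gg 1$ and $\Delta\delta = \sqrt\delta \to 0$).

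The third step is to track the approximation factor: we lose $(1+q/\Delta) = 1 + \Theta(\Delta^{-1/4}\log^{1/2}\Delta)$ from the matching algorithm, a $(1 + O(1/(x_e\Delta)))$ factor from the floor operation, and we must double-check that the virtual instance genuinely has max degree $\le \Delta$ --- which holds because $\sum_{e\ni v} w_e \le \sum_{e\ni v} x_e\Delta \le \Delta$. Choosing the bucketing granularity as a function of $\delta = \max_e x_e$ (say so that $1/(x_e\Delta) \le \sqrt{\delta}$ uniformly and $\Delta^{-1/4}\log^{1/2}\Delta \to 0$, which is possible precisely because $\delta = o(1)$ forces $\Delta \to \infty$) makes the overall factor $1 + o(1)$. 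I would also note that no lower bound on $\Delta$ versus $\log n$ is needed here: unlike the edge-coloring reductions, \Cref{thm:online_matching} has no such requirement, so the rounding statement is clean.

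I expect the main obstacle to be the second step --- converting the per-edge matching guarantee of \Cref{thm:online_matching} into a guarantee for the ``merged'' virtual edges without incurring a constant-factor loss from correlations among the copies of a single edge $e$. The naive union/Bonferroni estimates go the wrong way or cost a constant, so the honest route is to use the fact (implicit in the martingale/Freedman analysis behind \Cref{thm:online_matching}) that the algorithm matches each arriving virtual edge with probability $\ge 1/(\Delta+q)$ \emph{conditioned on the entire history}, and to combine this with the deterministic fact that the $w_e$ copies of $e$ compete for the single slot ``$e\in M$'', so their matching events are disjoint and their probabilities simply add. Making this conditional statement precise --- i.e.\ extracting it cleanly from the earlier theorem rather than reproving the algorithm --- is the delicate part, and if \Cref{thm:online_matching} as stated only gives the unconditional marginal, a short additional lemma strengthening it to the conditional form (which its proof should readily supply) will be required.
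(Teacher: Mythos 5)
Your reduction-to-\Cref{thm:online_matching} route is genuinely different from what the paper does (the paper does not reduce at all: it generalizes \Cref{alg:edge-arrival} to \Cref{alg:edge-arrival_extension}, replacing $\frac{1}{\Delta+q}$ by $x_e(1-s(\varepsilon))$ and the threshold $q/(4\Delta)$ by $s(\varepsilon)/4$, and re-runs the whole martingale/Freedman analysis with $\varepsilon=\max_e x_e$ playing the role of $1/\Delta$; see \Cref{thm:ronding_matchings_theorem}). As written, however, your reduction has two concrete gaps. First, the virtual instance you feed to the matching algorithm is a \emph{multigraph}: the $w_e$ copies of $e=(u,v)$ are parallel edges, while \Cref{thm:online_matching} is stated and proven for simple graphs --- the exact-marginal computation in \Cref{lemma:sufficient_cond_for_right_marginal} explicitly uses simplicity to split the conditioning probability as $F_t(u)\cdot F_t(v)$ (a parallel copy would be counted twice in that product), so you cannot invoke the theorem as a black box; you would have to reopen its proof and verify a multigraph version, at which point the ``reduction'' is no longer shorter than re-deriving the guarantee directly, which is essentially what the paper does. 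Second, the deterministic flooring $w_e=\lfloor x_e\Delta\rfloor$ fails for edges with $x_e\ll 1/\Delta$: such an edge gets zero copies and is matched with probability $0$, violating $\Pr[e\in M]\ge x_e(1-o(1))$. Choosing $\Delta$ as a function of $\delta=\max_e x_e$ does not repair this, because the promise only upper-bounds $x_e$; individual values can be arbitrarily small (your step ``$x_e\Delta\gg 1$'' silently assumes $x_e\approx\delta$). A fix would be to randomize the copy count so that $\E[w_e]=x_e\Delta$, but then the per-vertex budget $\sum_{e\ni v}w_e\le\Delta$ only holds in expectation and needs its own argument.

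On the correlation issue you flag as the main obstacle: the ``conditioned on the entire history'' strengthening you propose is false as stated --- conditioned on a history in which an endpoint of $e$ is already matched, no algorithm can match the new copy with positive probability, and indeed the paper's guarantee is inherently an unconditional marginal (the factor $1/(F_t(u)F_t(v))$ exists precisely to average out the conditioning on being free). Fortunately you do not need it: as you also observe, the events ``copy $i$ of $e$ is matched'' are pairwise disjoint (once one copy is matched both endpoints are matched forever), so unconditional marginals simply add, giving $\Pr[e\in M]=\sum_i\Pr[\text{copy }i\text{ matched}]\ge w_e/(\Delta+q)$ with no inclusion--exclusion and no constant-factor loss. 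So the genuinely missing ingredients are the multigraph extension of \Cref{thm:online_matching} and the treatment of arbitrarily small $x_e$, not the correlation among copies. The paper's direct generalization sidesteps both: scaling by $x_e$ inside the algorithm handles tiny values losslessly, and no parallel edges are ever created.
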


We note that \Cref{thm:online_matching} is the special case of \Cref{thm:online-rounding} applied to the fractional matching assigning values $1/\Delta=o(1)$ to each edge. 
While we focus on this special case in the paper body for ease of exposition, we believe that our more general rounding algorithm is of independent interest and has broader applicability. Illustrating this, in \Cref{sec:unknown_delta}, we combine our rounding algorithm with a rounding framework and algorithm for fractional edge coloring of \cite{cohen2019tight} to obtain the first online edge coloring algorithm beating the naive greedy algorithm for online edge coloring under vertex arrivals with \emph{unknown} maximum degree $\Delta=\omega(\log n)$; specifically, we show (details in \cref{thm:unknown_delta}) that $(1.777+o(1))\Delta$-edge-colorings are attainable in this setting, approaching the lower bound of $1.606\Delta$ proved by \cite{cohen2019tight}.

\subsection{Related Work}\label{sec:related-work}

Since edge coloring is the problem of decomposing a graph into few matchings, it is natural to relate this problem to online matching.

The study of online matching was initiated by Karp, Vazirani and Vazirani \cite{karp1990optimal}, whose main result was a positive one: they presented an optimal algorithm under one-sided vertex arrivals in bipartite graphs, showing in particular that the greedy algorithm's competitive ratio is suboptimal for this problem.
Similar positive results were later obtained for several generalizations, including weighted matching \cite{aggarwal2011online,fahrbach2020edge,blanc2021multiway,gao2021improved}, budgeted allocation (a.k.a~AdWords) \cite{mehta2007adwords,huang2020adwords}, and fully-online matching \cite{huang2020fully,huang2020fully2}.
However, in the most general setting, i.e., under edge arrivals, the competitive ratio of the trivial greedy algorithm is optimal \cite{gamlath2019online}.

The study of online edge coloring was initiated by Bar-Noy, Motwani and Naor \cite{bar1992greedy}, who presented a negative result: they showed that the greedy algorithm is optimal, at least for low-degree graphs. 
Positive results were later obtained under random-order arrivals \cite{aggarwal2003switch,bahmani2012online}, culminating in a resolution of \Cref{conj:BMN} for such arrivals, using the nibble method \cite{bhattacharya2021online}.
For adversarial arrivals, \cite{cohen2019tight,blikstad2023simple} show that in bipartite graphs with one-sided vertex arrivals, the same conjecture holds. This was followed by progress in general graphs, under vertex arrivals \cite{saberi2021greedy}, and edge arrivals \cite{kulkarni2022online}, though using more than the hoped-for $(1+o(1))\Delta$ many colors. 
We obtain this bound in this work. Thus, we show that not only is the greedy algorithm suboptimal for online edge coloring, but in fact in the most general edge arrival setting, the online problem is asymptotically no harder than its offline counterpart.

\section{Preliminaries}

\noindent\textbf{Notation.} As standard, we denote by $N(v)$ and $\delta(v)$ the neighborhood and edge sets of $v$, respectively, and denote the number of vertices and edges of $G$ by $n:=|V|$ and $m:=|E|$. We also denote by $\deg_H(v)$ the degree of vertex $v$ in (sub)graph $H$, and use the shorthand $\deg(v):=\deg_G(v)$. We say an event happens \emph{with high probability in a parameter $k$} if it happens with probability at least $1-k^{-c}$ for a constant $c>0$.

\paragraph{Problem definition and notation.} 
In the online problems studied in this paper, the input is an undirected simple graph $G := (V,E)$ with known maximum degree $\Delta$.
Its edges arrive one at a time, with edge $e_t\in E$ arriving at time $t$. 
An \emph{online edge coloring algorithm} must color each edge $e_t$ upon arrival with a color distinct from its adjacent edges.
Similarly, an \emph{online matching algorithm} must decide whether to match $e_t$ upon arrival, if none of its endpoints are matched.
For both problems, we consider randomized algorithms and assume that the input is generated by an oblivious adversary, which fixes the input graph and edges' arrival order before the algorithm receives any input.\footnote{By standard reductions \cite{ben1994power}, a result quantitatively similar to our main result against an \emph{adaptive} adversary would be equivalent to the task of finding a \emph{deterministic} algorithm.}
The objective of edge coloring algorithms is to output a coloring using as few colors as possible, close to the offline optimal 
$\Delta$ or $\Delta + 1$ colors~\cite{vizing1964estimate}.
The objective of online matching algorithms is traditionally to output a large matching. However, due to the reduction mentioned in the introduction, and restated more formally below, our interest will be in online matching algorithms that match each edge with high probability, close to $1/\Delta$.

\begin{lemma}[Reduction (\cite{cohen2019tight, saberi2021greedy})]
\label{lemma:reduction2}
    Let $\calA$ be an online matching algorithm that, on any graph of maximum degree $\Delta = \omega(\log n)$, matches each edge with probability at least $1 / (\alpha \cdot \Delta)$, for $\alpha \geq 1$. 
    Then, there exists an online edge coloring algorithm $\calA'$ that on any graph with maximum degree $\Delta = \omega(\log n)$ outputs an edge coloring with $(\alpha + O((\log n / \Delta)^{1/4})) \cdot \Delta$ colors with high probability in $n$.
\end{lemma}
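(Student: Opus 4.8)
The plan is to obtain the coloring by \emph{iterating the matching algorithm} in phases. Fix a deterministic, geometrically decreasing schedule of degree bounds $\Delta=\Delta_0>\Delta_1>\dots$; phase $j$ owns a fresh block of $k_j$ colors and a batch of $k_j$ independent copies $\calA^{(j)}_1,\dots,\calA^{(j)}_{k_j}$ of $\calA$, each invoked with claimed maximum degree $\Delta_{j-1}$ and run one after another. An edge that survives phases $1,\dots,j-1$ uncolored is handed to phase $j$: it is first offered to $\calA^{(j)}_1$, and if that copy matches it we give it the first color of phase $j$'s block; otherwise it is offered to $\calA^{(j)}_2$, and so on, and if none of the $k_j$ copies matches it, it is passed on to phase $j{+}1$. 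Because copy $\calA^{(j)}_i$ outputs a \emph{matching} among precisely the edges offered to it, no two incident edges can both get the $i$-th color of phase $j$, so the output is a proper coloring. Moreover the edge set offered to $\calA^{(j)}_i$ is a subgraph of $G$ (of maximum degree $\le\Delta_{j-1}$ on the high-probability event that the previous phases behaved as intended), presented in arrival order, and the randomness of $\calA^{(j)}_i$ is independent of everything offered to earlier copies; hence, conditioned on the history up to the moment $\calA^{(j)}_i$ is invoked, it still matches each edge offered to it with probability at least $1/(\alpha\Delta_{j-1})$.

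The technical heart is estimating how fast a phase shrinks the maximum degree. Consider one phase of $k$ copies acting on a graph of maximum degree $\Delta'$, fix a vertex $v$, and let $D_i$ be the number of $v$'s edges still uncolored after copy $i$, so $D_0=\deg(v)$ and $D_i=D_{i-1}-Y_i$ where $Y_i\in\{0,1\}$ indicates that copy $i$ colors some edge at $v$. Since copy $i$'s matching covers $v$ with conditional probability $\E[\deg_{M_i}(v)\mid\mathcal F_{i-1}]\ge D_{i-1}/(\alpha\Delta')$ (a sum of the $\ge 1/(\alpha\Delta')$ marginals over $v$'s $D_{i-1}$ remaining edges), we get $\E[D_i\mid\mathcal F_{i-1}]\le(1-\tfrac1{\alpha\Delta'})D_{i-1}$, i.e.\ $D_i/q^i$ is a supermartingale for $q:=1-1/(\alpha\Delta')$. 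Its increments are bounded by $q^{-i}$ (using $D_{i-1}\le\Delta'$), so the sum of squared increments is $\le\alpha\Delta' q^{-2k}$, and Azuma's inequality (or a Freedman-type bound), after the $q$-factors cancel, yields
\[
\Pr\!\big[\, D_k > q^k\deg(v) + s \,\big] \;\le\; \exp\!\big(-s^2/(2\alpha\Delta')\big).
\]
Taking $s=\Theta\!\big(\sqrt{\Delta'\log n}\big)$ and a union bound over the $n$ vertices (and the logarithmically many phases), with high probability every phase leaves an uncolored graph of maximum degree at most $\rho\Delta'+O(\sqrt{\Delta'\log n})$, where $\rho:=q^k=(1-1/(\alpha\Delta'))^k\le e^{-k/(\alpha\Delta')}$. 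This is the key inequality: one phase shrinks the maximum degree by essentially any desired factor $\rho$ using only $\approx\alpha\Delta'\ln(1/\rho)$ colors, up to an additive error that is $o(\Delta')$ precisely because $\Delta'=\omega(\log n)$.

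With this in hand, I would fix a reduction factor $\rho_0$, set $k_j\approx\alpha\Delta_{j-1}\ln(1/\rho_0)$ and $\Delta_j:=\lceil\rho_0\Delta_{j-1}+C\sqrt{\Delta_{j-1}\log n}\rceil$, run phases until $\Delta_j$ has fallen to $\mathrm{poly}(\log n)$, and finish the resulting bounded-degree leftover graph with the trivial greedy algorithm (using $2\Delta_j-1$ further colors) or by a single recursive application. Summing, the total number of colors is $\sum_j k_j+o(\Delta)\approx\alpha\ln(1/\rho_0)\sum_j\Delta_j+o(\Delta)=\alpha\Delta\cdot\frac{\ln(1/\rho_0)}{1-\rho_0}+o(\Delta)$, and since $\ln(1/\rho_0)/(1-\rho_0)\to1$ as $\rho_0\to1$, pushing $\rho_0$ toward $1$ makes the multiplicative overhead vanish. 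The part I expect to be most work is the quantitative optimization: the additive $O(\sqrt{\Delta_{j-1}\log n})$ errors compound across phases and interact with how gently one peels (how close $\rho_0$ is to $1$, hence how many phases there are) and with the degree at which one switches to the greedy finish; balancing these against each other --- together with routine care that all the high-probability events hold simultaneously --- is what yields the stated overhead $O\!\big((\log n/\Delta)^{1/4}\big)\cdot\Delta$.
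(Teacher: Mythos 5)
Your proposal is correct in outline, and it reconstructs a proof that the paper itself does not give: \cref{lemma:reduction2} is imported from \cite{cohen2019tight,saberi2021greedy}, while the paper proves only a sharpened variant (\cref{lemma:reduction_improved_appendix}, slack $O((\log n/\Delta)^{1/3})\cdot\Delta$) via its appendix machinery. Both routes share the same skeleton---run independent copies of $\calA$ on the currently uncolored graph, one copy per color, so each color class is a matching; show each phase reduces the maximum uncolored degree w.h.p.; recurse and finish greedily---but they differ in two places. First, the phase parameterization: you peel multiplicatively, spending $k_j\approx\alpha\Delta_{j-1}\ln(1/\rho_0)$ colors to bring the degree down to $\rho_0\Delta_{j-1}+O(\sqrt{\Delta_{j-1}\log n})$ and then tuning $1-\rho_0\approx(\log n/\Delta)^{1/4}$, whereas the paper's \cref{alg:reducing_max_degree_generic,alg:coloring_generic} peel an additive $\lambda_i=d_i^{2/3}\ln^{1/3}n$ using only about $\lambda_i$ colors per phase; that different balance is exactly what improves the exponent from your $1/4$ (which matches the stated lemma) to the paper's $1/3$. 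Second, the concentration tool: you track, per vertex, a supermartingale over the copies within a phase and apply Azuma/Freedman (your rescaling by $q^{-i}$ and the cancellation giving $\Pr[D_k>q^k\deg(v)+s]\le\exp(-s^2/(2\alpha\Delta'))$ check out), while the paper observes that the per-color ``edge at $v$ gets colored'' indicators are negatively associated via a balls-and-bins view and applies a Chernoff bound; both are sound. Two spots deserve the ``routine care'' you defer: (i) the marginal guarantee of $\calA$ must be invoked conditionally---copy $i$'s input graph is determined by the earlier copies' randomness and independent of copy $i$'s coins, is a subgraph of the phase input (so its maximum degree stays below the $\Delta_{j-1}$ you declare), and you must stop the recursion at some $\Delta_{\mathrm{low}}$ that is still $\omega(\log n)$ (e.g.\ $\Delta_{\mathrm{low}}\approx\sqrt{\Delta\log n}$, whose greedy finish costs only $O(\sqrt{\Delta\log n})$ colors); and (ii) when compounding the additive errors you need to sum them with their geometric decay, i.e.\ use $\sum_j\sqrt{\Delta_j}=O(\sqrt{\Delta}/(1-\rho_0))$ so the total error cost is $O(\sqrt{\Delta\log n}/(1-\rho_0))$ colors, which balances against the multiplicative loss $\Theta(\Delta(1-\rho_0))$ precisely at the claimed $1/4$ exponent; a cruder bound that charges every phase the worst-case (final) ratio degrades the exponent (to roughly $1/5$). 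Finally, since the lemma's overhead is stated independently of $\alpha$, note one may assume $\alpha\le 2$ without loss of generality, as otherwise greedy already achieves the bound.
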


In \Cref{sec:matching_peeling}, we generalize the above lemma, and use this generalization to obtain results for online \emph{list} edge coloring 
(each edge has a possibly distinct palette) and for
online \emph{local} edge coloring (each edge~$e$ should be colored with a color of index not much higher than $\max_{v\in e} \deg(v)$). In particular, the appendix implies (see \cref{lemma:reduction_improved_appendix}) that one can reduce the slack above to $(\alpha + O((\log n / \Delta)^{1/3})) \cdot \Delta$ colors.

\paragraph{Martingales.} A crucial ingredient in the analysis of our algorithms is the use of martingales.

\begin{definition}[Martingale] \label{def:martingales}
    A sequence of random variables $Y_0,\dots,Y_m$ is a \emph{martingale with respect to} another sequence of random variables  $X_1,\dots,X_m$ if the following conditions hold:
    \begin{itemize}
        \item $Y_k$ is a function of $X_1,\dots,X_k$ for all $k\geq 1$.
        \item $\E[ |Y_k|] < \infty $ for all $k \geq 0$.
        \item $\E[ Y_k \mid X_1,\dots,X_{k-1}] = Y_{k-1}$ for all $k \geq 1$.
    \end{itemize}
\end{definition}

The technical advantage of using martingales in our analysis is their amenability to specialized concentration inequalities which, unlike Chernoff-Hoeffding type bounds, do not require independence (or negative correlation) between the involved random variables. In particular, we will use a classic theorem due to Freedman providing a Chernoff-type bound only depending on the \emph{step size} and on the \emph{observed variance} of the martingale, with the latter defined as follows. For any possible outcomes $(x_1,\dots,x_{m-1})$ of the random variables $(X_1,\dots,X_{m-1})$, let:
\begin{equation*}
    W_m(x_1,\dots,x_{m-1}) := \sum_{i=1}^{m} \E[ (Y_i - Y_{i-1})^2 \mid X_1 = x_1,\dots,X_{i-1} = x_{i-1} ]
\end{equation*}
be the \emph{observed variance} encountered by the martingale on the particular sample path $x_1,\dots,x_{m-1}$ it took. To simplify notation, we usually assume that $x_1,\dots,x_{m-1}$ are chosen arbitrarily and write:
\begin{align*}
    W_m := \sum_{i=1}^{m} \E[ (Y_i - Y_{i-1})^2 \mid X_1,\dots,X_{i-1}].
\end{align*}
\begin{lemma}[Freedman's Inequality \cite{freedman1975tail}; see also {\cite[Theorem 12]{freedman}}, {\cite[Theorem 3.15]{habib1998probabilistic}}]\label{thm:freedman_inequality}
    Let $Y_0,\dots,Y_m$ be a martingale with respect to the random variables $X_1,\dots,X_m$. 
    If $|Y_k - Y_{k-1}| \leq A$ for any $k \geq 1$ and
    $W_m\leq \sigma^2$ always,
    then for any real $\lambda \geq 0$:
    \begin{equation*}
        \Pr[ |Y_n - Y_0| \geq \lambda ] \leq 2 \exp\left( - \frac{\lambda^2}{2(\sigma^2 + A \lambda / 3)} \right).
    \end{equation*}
\end{lemma}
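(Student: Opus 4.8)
The plan is to prove this by the classical exponential‑supermartingale (Bernstein/Chernoff) method adapted to martingales. First I would reduce to a one‑sided bound: since $-Y_0,\dots,-Y_m$ is also a martingale with respect to $X_1,\dots,X_m$, with increments of the same magnitude and hence the same observed variance $W_m$, it suffices to show $\Pr[Y_m - Y_0 \ge \lambda] \le \exp\!\big(-\lambda^2/(2(\sigma^2 + A\lambda/3))\big)$ and then union‑bound the two tails to pick up the factor $2$. By translating I may assume $Y_0 = 0$. Write $D_k := Y_k - Y_{k-1}$, so that $\E[D_k \mid X_1,\dots,X_{k-1}] = 0$, $|D_k|\le A$, and $W_k = \sum_{i=1}^{k}\E[D_i^2 \mid X_1,\dots,X_{i-1}]$.

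The heart of the argument is the elementary inequality that for every $\theta>0$ and every real $x\le A$,
\[
e^{\theta x} \le 1 + \theta x + g(\theta)\,x^2, \qquad g(\theta) := \frac{e^{\theta A} - 1 - \theta A}{A^2},
\]
which holds because $x \mapsto (e^{\theta x} - 1 - \theta x)/x^2$ is nondecreasing. Applying it with $x = D_k$ and taking conditional expectations (all increments are bounded by $A$, so the integrability conditions of \Cref{def:martingales} hold automatically), $\E[e^{\theta D_k}\mid X_1,\dots,X_{k-1}] \le 1 + g(\theta)\,\E[D_k^2\mid X_1,\dots,X_{k-1}] \le \exp\!\big(g(\theta)\,\E[D_k^2\mid X_1,\dots,X_{k-1}]\big)$. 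Hence $S_k := \exp\!\big(\theta Y_k - g(\theta) W_k\big)$ is a supermartingale with $S_0 = 1$, so $\E[S_m]\le 1$. Since $g(\theta)>0$ and $W_m\le\sigma^2$ always, on the event $\{Y_m\ge\lambda\}$ we have $S_m \ge \exp(\theta\lambda - g(\theta)\sigma^2)$, and Markov's inequality gives $\Pr[Y_m\ge\lambda]\le \exp\!\big(-\theta\lambda + g(\theta)\sigma^2\big)$ for every $\theta>0$. (If one only had $W_m\le\sigma^2$ on a favorable event rather than almost surely, the same bound on that event would follow after stopping $S_k$ at the first time $W_k$ would exceed $\sigma^2$.)

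It then remains to optimize over $\theta$. Choosing $\theta = \tfrac1A\log\!\big(1 + \tfrac{A\lambda}{\sigma^2}\big)$ turns the exponent into $-\tfrac{\sigma^2}{A^2}\,h\!\big(\tfrac{A\lambda}{\sigma^2}\big)$, where $h(u) := (1+u)\log(1+u) - u \ge 0$, and finishing requires only the standard scalar estimate $h(u)\ge \tfrac{u^2}{2(1+u/3)}$ for $u\ge 0$ (proved by comparing derivatives, using $h(0)=0$). Substituting yields $\Pr[Y_m\ge\lambda]\le \exp\!\big(-\lambda^2/(2(\sigma^2 + A\lambda/3))\big)$, and combining with the symmetric tail completes the proof. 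I expect the only delicate points to be these two elementary inequalities — the monotonicity of $(e^{\theta x}-1-\theta x)/x^2$ underlying the supermartingale, and the bound $h(u)\ge u^2/(2+2u/3)$ producing the clean denominator $\sigma^2 + A\lambda/3$; the martingale bookkeeping itself is routine precisely because all increments are bounded by $A$.
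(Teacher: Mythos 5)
Your proof is correct: the paper does not prove this lemma but cites it (Freedman '75), and your exponential-supermartingale argument --- bounding $\E[e^{\theta D_k}\mid X_1,\dots,X_{k-1}]$ via the monotonicity of $(e^{\theta x}-1-\theta x)/x^2$, applying Markov to $\exp(\theta Y_k - g(\theta)W_k)$, and converting the Bennett-type exponent $h(u)=(1+u)\log(1+u)-u$ to the Bernstein form via $h(u)\ge u^2/(2+2u/3)$ --- is exactly the classical proof in the cited sources. The only cosmetic omission is the degenerate case $\sigma^2=0$ (where your optimizing $\theta$ is undefined), which is trivial since all increments then vanish almost surely.
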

We remark that we tailored the inequality to our use, and a  more general version holds~\cite{freedman1975tail}.

\section{Online Matching Algorithm} \label{sec:intuition}
In this section, we design an online matching algorithm as guaranteed by \cref{thm:online_matching}.

\subsection{Our First Matching Algorithm}
\begin{figure}[!b]
    \centering
    \includegraphics{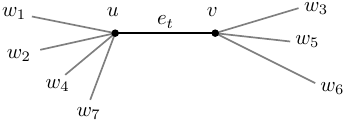}
    \caption{An example of the neighborhood of $e_t = (u,v)$ with $k = 7$. }
    \label{fig:local_randomness}
\end{figure}

 We first describe our key modification of the basic algorithm presented in the introduction. 
 Analogous to that algorithm, upon arrival of edge \( e_t \) whose endpoints are still unmatched, we match $e_t$ with a ``scaled'' probability \( P(e_t) \).
 However, and crucially for our analysis, our scaling factor will depend on the specific execution (random choices) of the algorithm.
 To illustrate this modification, refer to Figure~\ref{fig:local_randomness}, which depicts the neighborhood surrounding \( e_t \). Here, we denote by \( e_{t_j} \) the $j$-th edge connecting a vertex in $e_t$ with a vertex \( w_j \), with these \( k=7 \) edges appearing sequentially before \( e_t \), with \( t_1 < t_2 < \cdots < t_k < t\). 
 Suppose now that we fixed the randomness associated with all edges other than $\{e_t,e_{t_1,},e_{t_2},\dots,e_{t_k}\}$, and conditioned on this event, which we refer to as $R$, let \( P_R(e_{t_j}) \) represent the probability that the algorithm will add the edge \( e_{t_j} \) to the matching, assuming that none of the preceding edges \( e_{t_1}, e_{t_2}, \ldots, e_{t_{j-1}} \) have been matched.
  Consequently, we have:
\begin{gather*}
    \Pr[u,v \textrm{ both unmatched until time $t$}\mid R] = \prod_{j=1}^k(1- P_R(e_{t_j})),
\end{gather*}
and overall, by total probability,
\begin{gather*}
    \Pr[u,v \textrm{ both unmatched until time $t$}] = \sum_R \Pr[R]\cdot  \prod_{j=1}^k(1- P_R(e_{t_j}))\,.
\end{gather*}

The primary complication with the initial algorithm in the introduction
is the intricate correlation within the joint distribution of $P_R(e_{i_1})$, $ \ldots, P_R(e_{i_k})$ as  functions of the randomness $R$ of edges outside the direct neighborhood of $e_t$.
This correlation complicates both the computational aspect of determining the probability that vertices \( u \) and \( v \) are unmatched until time \( t \), as well as the theoretical analysis of the algorithm's competitive ratio.

Our algorithm overcomes this challenge with a, in hindsight, simple strategy. We utilize a scaling factor conditional on the randomness \( R \), i.e., we scale with respect to the ``observed'' probabilities, thus ensuring that the resulting online algorithm is both computationally efficient and (as we will see) theoretically tractable to analyze. Specifically, we obtain the following algorithm:
\begin{center}
\begin{minipage}{0.95\textwidth}
\begin{mdframed}[hidealllines=true, backgroundcolor=gray!15]
\begin{algorithm}[\textsc{NaturalMatchingAlgorithm}] \ \\[0.2cm]
When an edge $e_t = (u,v)$ arrives, match it with probability  \[
P(e_t) \gets  
\begin{cases}
    \frac{1}{\Delta + q}\cdot \frac{1}{ \prod_{j=1}^k(1- P(e_{t_j}))} & \mbox{if $u$ and $v$ are still unmatched,} \\
    0 & \mbox{otherwise,}
\end{cases}
\]
where $e_{t_1}, \ldots, e_{t_k}$ are those previously-arrived edges incident to the endpoints of $e_t$.
\label{alg:natural-alg}
\end{algorithm}
\end{mdframed}
\end{minipage}
\end{center}

Note that the values $P(e_{t_j})$ needed to compute $P(e_t)$ are all defined at time $t$ (and easy to compute), since any such edge $e_{t_j}$ arrived before $e_t$. 
Moreover, assuming $u$ and $v$ are unmatched (\emph{free}), these values equal $P_R(e_{t_j})$, where $R$ is the event corresponding to the random bits used in this execution for the edges outside the neighborhood of $e_t$. Thus, if  $P(e_t) \leq 1$ for every edge $e_t$, this algorithm is well-defined, and attains the right marginals, by total probability over $R$:
\begin{align*}
    \Pr[\mbox{$e_t$ matched}] &= \sum_{R} \Pr[R] \cdot P_R(e_t) \cdot \Pr[u,v \textrm{ both free until time $t$}\mid R] = \sum_R \Pr[R]\cdot  \frac{1}{\Delta + q} = \frac{1}{\Delta + q}\,.
\end{align*}

However, our new natural algorithm is ill-defined, as $P(e_t)$ may exceed $1$, even on trees.
 For example, suppose edges in \Cref{fig:prob_more_than_one} arrive in a bottom-up, left-to-right order, and no edge before $e_{t}$ is matched by the algorithm (this is a very low-probability event). 
 Then $P(e_{t_1}) = \frac{1}{\Delta+q}\cdot \frac{1}{\prod_{i=1}^{\Delta-2} (1-P(e_i))}$, where $e_1, \ldots, e_{\Delta-2}$ are the edges below $e_{t_1}$.
But the term $\prod_{i} (1-P(e_i))$ is the probability that none of the edges $e_i$ are matched by the algorithm, which is exactly $1-\frac{\Delta-2}{\Delta+q} = \frac{q+2}{\Delta+q}$, since each $e_i$ in this example is matched with probability exactly $\frac{1}{\Delta+q}$ and these are disjoint events. Hence $P(e_{t_1}) = \frac{1}{\Delta+q}\cdot \frac{\Delta+q}{q+2} = \frac{1}{q+2}$. We can calculate $P(e_{t_2})$ in the same way, remembering to also scale up by $\frac{1}{1-P(e_{t_1})}$, and we get $P(e_{t_2}) = \frac{1}{q+2}\cdot \frac{1}{1-P(e_{t_1})} = \frac{1}{q+1}$.  Continuing in this fashion, $P(e_{t_i}) = \frac{1}{q+3-i}$ for any $i=1,\ldots, q+1$ (importantly $P(e_{t_i}) < 1$, so indeed with some non-zero probability the algorithm will not match any of them).
Finally, by a similar calculation, $P(e_t) = \frac{1}{q+1} \cdot \frac{1}{\prod_{i=1}^{q+1}(1-P(e_{t_i}))} = \frac{q+2}{q+1} > 1$, making the algorithm undefined. 
However, in \emph{most} runs of the algorithm, many bottom-level edges are matched, and so $P(e_{t_i})=0$ for many $i$, and $P(e_t)$ is much smaller.

\begin{figure}[!ht]
    \centering
    \includegraphics[scale=1]{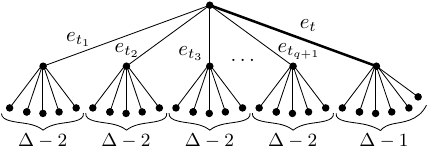}
    \caption{Example where \cref{alg:natural-alg} might be undefined.}
    \label{fig:prob_more_than_one}
\end{figure}

In the above example, the event that $P(e_{t})$ in this case exceeds one hinges on low-probability events. This therefore does not rule out matching probabilities of, say, $1/(\Delta+q)-1/\Delta^3 \geq 1/(\Delta+O(q))$, so long as we avoid the use of $P(e_t)$ as probabilities when $P(e_t)>1$.
In the next section we present a modification of \Cref{alg:natural-alg} doing just this, and provide an overview of its analysis.

\subsection{The Analysis-Friendly Matching Algorithm}

As discussed, it is not at all clear whether the random variables $P(e_t)$ in \cref{alg:natural-alg}, which we interpret as probabilities, even \emph{are} valid probabilities, i.e., whether they are upper bounded by $1$.
To avoid working with potentially invalid probabilities, we use a slightly different variant of \cref{alg:natural-alg},  whose pseudocode is given by \cref{alg:edge-arrival}.
This variant not only addresses the ill-defined probability concern, but also introduces a more precise notation, facilitating our analysis. However, this increased precision might initially obscure the connection between \cref{alg:edge-arrival} and the more intuitive \cref{alg:natural-alg}.

To clarify this relationship, consider the scaling factor ${1}/{ \prod_{j=1}^k(1- P(e_{t_j}))}$ for an edge $e_t = (u,v)$, which can be partitioned based on the edges incident to vertices $u$ and $v$. Let $\delta_{t}(u)$ denote the set of edges incident at $u$ and arriving before $e_t$. We then define $F_t(u)$ (and similarly $F_t(v)$) as follows:
\begin{align*}
    F_t(u) := \prod_{e_{t_j} \in \delta_t(u)} (1 - P(e_{t_j})) \qquad \textrm{and} \qquad 
    F_t(v) := \prod_{e_{t_j} \in \delta_t(v)} (1 - P(e_{t_j})).
\end{align*}
Consequently, as $G$ is a simple graph, the probability $P(e_t)$ used in \cref{alg:natural-alg} can be reformulated as:
\[
P(e_t) = 
\begin{cases}
    \frac{1}{\Delta + q}\cdot \frac{1}{ F_t(u) \cdot F_t(v)} & \text{if $u$ and $v$ are still unmatched by time $t$}, \\
    0 & \text{otherwise}.
\end{cases}
\]
The above aligns with the pseudo-code in \cref{alg:edge-arrival}. 
Assuming that $\hat P (e_t) = P(e_t)$ (and using that $G$ is a simple graph and hence has no parallel edges), it becomes evident that \cref{alg:edge-arrival} is equivalent to \cref{alg:natural-alg} under the premise that all probabilities $P(e_t)$ are at most one. The critical modification in \cref{alg:edge-arrival} is the introduction of $\hat P$, possibly distinct from $P$, ensuring that $P(e_t) \leq 1$. This is accomplished by constraining the values of $F_t(v)$, now redefined in terms of $\hat P$, i.e., $F_t(v) := \prod_{e_{t_j} \in \delta_t(v)} (1 - \hat P(e_{t_j}))$, to not fall below $\frac{q}{4\Delta}$, implying $P(e_t)\leq 1$ for appropriately small $q$ (see \cref{obs:Ftv_and_Pet_bounds}).

\begin{figure}[!ht]
\begin{center}
\begin{minipage}{0.95\textwidth}
\begin{mdframed}[hidealllines=true, backgroundcolor=gray!15]

\begin{algorithm}[\edgearrivalalg] \ \\[0.2cm]
\textbf{\emph{Initialization:}} Set $F_1(v) \gets 1$ for every vertex $v$ and $M_1 \gets \emptyset$.  \\[0.1cm]

\emph{At the arrival of edge $e_t = (u,v)$ at time $t$:}
\begin{itemize}
    \item  Sample $X_t \sim \textrm{Uni}[0,1]$.
    \item Define 
    \begin{align*}
        P(e_t) & = \begin{cases}
         \frac{1}{\Delta + q} \cdot 
         \frac{1}{F_t(u) \cdot F_t(v)} 
         & \mbox{if $u$ and $v$ are unmatched in $M_t$,} \\ 
         0 & \mbox{otherwise.}
         \end{cases}
         \intertext{and}
        \hat P(e_t) & = \begin{cases}
        P(e_t) & \mbox{if $\min\{F_t(u), F_t(v)\}\cdot (1-P(e_t))\geq q/(4\Delta)$} \\ 
         0 & \mbox{otherwise.}
         \end{cases} 
    \end{align*}
    
    \item Set 
    \begin{itemize}
        \item $F_{t+1}(u) \leftarrow F_t(u) \cdot (1- \hat P(e_t))$;
        \item $F_{t+1}(v)\leftarrow F_t(v) \cdot (1-\hat P(e_t))$; 
        \item $M_{t+1} \gets \begin{cases} M_t \cup \{e_t\} & \mbox{ if $X_t < \hat P(e_t)$,} \\
        M_t & \mbox{otherwise.}
\end{cases}$
    \end{itemize}
\end{itemize}
\label{alg:edge-arrival}
\end{algorithm}
\end{mdframed}
\end{minipage}
\end{center}
\end{figure}

\paragraph{Analysis: Intuition and Overview.}
As in \cref{alg:natural-alg}, it is not hard to prove that edge $e_t$ is selected with a probability of $1/(\Delta+q)$ if $P(e_t)$ always equals $\hat P(e_t)$, as detailed in \cref{subsec:analysis_part1}. Therefore, the meat of our analysis, in \cref{subsec:analysis_part2}, focuses on proving that for any edge $e_t$, the equality $P(e_t) = \hat P(e_t)$ holds with high probability in $\Delta$.
For intuition why this should be true, note that if each of the $\deg_t(v)$ edges of vertex $v$ by time $t$ is matched with probability $\frac{1}{\Delta + q}$, then the value $F_t(v)$---that intuitively stands for the probability of $v$ being free at time~$t$---should be:
\begin{equation*}
    F_t(v) = 1 - \frac{\deg_t(v)}{\Delta + q} \geq \frac{q}{\Delta + q} \approx \frac{q}{\Delta}.
\end{equation*}
Above, the inequality follows from $\deg_t(v) \leq \Delta$ and the approximation follows by our choice of $q = o(\Delta)$. 
Moreover, basic calculations (see \cref{lemma:sufficient_cond_for_Phat_equal_P}) imply that $\hat P(e_t) = P(e_t)$  if $\min\{F_t(u),F_t(v)\} \geq \frac{q}{3\Delta}$. In other words, $\hat P(e_t) \neq P(e_t)$ only if the $F_t(\cdot)$-value has dropped significantly below its ``expectation'' for one of the endpoints of $e_t$.

The core of the analysis then boils down to proving concentration bounds that imply, for any vertex $v$ and time $t$, that $F_t(v) \geq \frac{\Delta}{3q}$ with high probability in $\Delta$.
The values $F_t(v)$ are non-increasing as $t$ grows, so it suffices to prove this inequality for the final value $F(v):=F_m(v)$. Let $u_1,\dots,u_\ell$ be the neighbors of $v$ in the final graph. By simple calculations (\Cref{lemma:main_ineq}), we  show that:
\begin{equation} \label{eq:intuition_lowerbound_Fv}
    F(v) \geq 1 - \sum_{i = 1}^\ell E_i \cdot \frac{1}{\Delta + q} \cdot \frac{1}{F_{t_i}(u_i)},
\end{equation}
where the binary random variables $E_i$ are non-zero if and only if the neighbor $u_i$ of $v$ is unmatched by the time $t_i$ at which the edge $(v,u_i)$ arrives. If we denote by $S := \{u_i \in N(v) \mid \text{ $u_i$ not matched until time $t_i$}\}$, 
then $E_i := \mathbb{1}[u_i \in S]$.

The expectation of the sum $Y := \sum_{i=1}^\ell E_i \cdot \frac{1}{\Delta + q} \cdot \frac{1}{F_{t_i}(u_i)}$ can be shown to be at most $\frac{\Delta}{\Delta + q}$. If the variables $E_i$ and $F_{t_i}(u_i)$ were independent, one could now use Chernoff-Hoeffding type bounds to conclude that $Y \leq \frac{\Delta}{\Delta + q/2}$ with high probability in $\Delta$, proving $F(v) \geq 1 - Y_m \geq \frac{q}{3\Delta}$ in the process (see \Cref{lemma:main_ineq}). However, in general, the events of different neighbors $u_i$ of $v$ being matched when $(v,u_i)$ arrives are not independent, and so the variables $E_i, F_{t_i}(u_i)$ are correlated, making such an approach not applicable. 
We overcome this by interpreting the right-hand side of \eqref{eq:intuition_lowerbound_Fv} as the final state of a martingale.

Concretely, our main idea is to parameterize the set $S$, the random variables $E_i$ and the sum $Y$ over time. Assuming that the input stops at time step $t \leq m$, one can naturally define the analogues of $S$, $E_i$, and $Y$ up to time $t$ by
$S_t := \{u_i \in N(v) \mid \text{ $u_i$ not matched until time $\min\{t,t_i\}$}\}$, $E_{ti} := \mathbb{1}[u_i \in S_t]$ and: \begin{equation*}
    Y_t := \sum_i^\ell E_{ti} \cdot \frac{1}{\Delta + q} \cdot \frac{1}{F_{\min\{t,t_i\}}(u_i)}.
\end{equation*}
So, at each time step, we either change the value of a term or drop a term.
With this notation, we trivially have that $S = S_m$, $E_i = E_{mi}$ and $Y = Y_m$.
The advantage of this representation is that $Y_0,\dots,Y_m$ turns out to be a martingale with respect to the random variables $X_1,\dots,X_m$ sampled by \Cref{alg:edge-arrival}. As $Y_0 \leq \frac{\Delta}{\Delta + q}$, our objective reduces to proving that the following holds with high probability in $\Delta$:
\begin{equation*} \label{eq:intuition_concentration_martingale}
    |Y_m - Y_0| \leq \frac{\Delta}{\Delta + q/2} - \frac{\Delta}{\Delta + q}.
\end{equation*}
As the martingale $Y_0,\dots,Y_m$ takes $\Delta^2$ non-trivial steps (based on the two-hop neighborhood), it is not enough to use the maximum step size and the number of steps to argue about concentration as done in, e.g., Azuma's inequality. However, we can bound the maximum step size and the observed variance, which is sufficient for applying Freedman's inequality (\Cref{thm:freedman_inequality}), yielding our desired result. In the next section we substantiate the above intuition, and analyze \Cref{alg:edge-arrival}.
\section{Analysis of the Online Matching Algorithm} \label{sec:analysis}

In this section we present the formal analysis of \cref{alg:edge-arrival}, and prove that it matches each edge $e_t$ with probability  at least $1/(\Delta+O(q))$, with $q$ to be chosen shortly.
Our analysis is 
divided into two parts. 

In the first part (\cref{subsec:analysis_part1}), we prove that if $\hat P(e_t) = P(e_t)$, i.e., the values $F_t(v)$ for both $v\in e_t$ are large enough, then we match $e_t$ with probability at least $1/({\Delta + q})$ (\cref{lemma:sufficient_cond_for_right_marginal}).

In the second part (\cref{subsec:analysis_part2}), we remove the assumption that $P(e_t)=\hat{P}(e_t)$ and prove that \Cref{alg:edge-arrival} achieves a matching probability of at least $1/(\Delta+4q)$, by showing that with high probability in $\Delta$, the values $F_t(v)$ are large enough to guarantee  $P(e_t)=\hat{P}(e_t)$. To prove the latter high-probability bound, we interpret a sufficient desired lower bound (\cref{lemma:main_ineq}) as the final state of a martingale, and use Freedman's inequality (\cref{thm:freedman_inequality}) to prove that $F_t(v)$ is likely sufficiently large for our needs.

\paragraph{Choice of $q$.} We will use $q := \sqrt{200} \cdot \Delta^{3/4} \ln^{1/2}\Delta$, for reasons that will become clear in the proof of \Cref{lemma:bound_on_Y}. For the rest of the section we will only make use of the following corollaries of our choice of $q$:
\begin{equation} \label{eq:assumption}
    8 \sqrt{\Delta} \le q \le \Delta/4.
\end{equation}
Note that the upper bound on $q$ not only follows from its choice (for sufficiently large $\Delta$), but we may  also assume this bound without loss of generality: if $q>\Delta/4$, then simply picking a random color used by the $(2\Delta-1)$ colors of the greedy online coloring algorithm will match each edge with probability $1/(2\Delta-1)$, greater than our desired $1/(\Delta+4q)$ matching probability.

\subsection{A Sufficient Condition for \texorpdfstring{$1/(\Delta+q)$}{1/(Δ+q)} Matching Probability} \label{subsec:analysis_part1} 

We begin with a simple observation, which will facilitate our characterization of random values associated with \Cref{alg:edge-arrival} under various conditionings.

\begin{observation} \label{obs:fpp-det-by-M}
    For any time $t$, the random variables $F_{t}(v), P(e_t), \hat P(e_t)$ are determined by the current partial input $e_1,\dots,e_t$ and the current matching $M_{t-1}$.
\end{observation}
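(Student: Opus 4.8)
The statement to prove is \Cref{obs:fpp-det-by-M}: that $F_t(v)$, $P(e_t)$, and $\hat P(e_t)$ are all determined by the partial input $e_1,\dots,e_t$ and the matching $M_{t-1}$.

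The plan is a straightforward induction on $t$, unwinding the definitions in \Cref{alg:edge-arrival}. The key insight is that the algorithm's state at the start of step $t$ is captured entirely by the pair $(e_1,\dots,e_t\text{ known}, M_{t-1}, \{F_t(w)\}_w)$, and I want to show the $F$-values are themselves a deterministic function of the input-so-far and $M_{t-1}$.

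\begin{proof}
We argue by induction on $t$ that, for every vertex $w$, the value $F_t(w)$ is a deterministic function of the partial input $e_1,\dots,e_{t-1}$ and the matching $M_{t-1}$; the claims for $P(e_t)$ and $\hat P(e_t)$ then follow. For the base case $t=1$, we have $F_1(w)=1$ for all $w$ by initialization, which is trivially a deterministic function of the (empty) history and $M_1=\emptyset$.

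For the inductive step, fix $t\geq 1$ and let $e_t=(u,v)$. By the inductive hypothesis, each $F_t(w)$ is determined by $e_1,\dots,e_{t-1}$ and $M_{t-1}$; these in turn are determined by $e_1,\dots,e_t$ and $M_{t-1}$ (forgetting $e_t$ only loses information we do not need). Now $P(e_t)$ is, by its definition in \Cref{alg:edge-arrival}, an explicit function of $\Delta$, $q$, $F_t(u)$, $F_t(v)$, and the indicator of whether $u$ and $v$ are unmatched in $M_t = M_{t-1}$ (before $e_t$ is possibly added) --- hence a function of $e_1,\dots,e_t$ and $M_{t-1}$. Note here we use that $G$ is simple, so $e_t$ is the unique edge between $u$ and $v$ and $F_t(u),F_t(v)$ are the relevant products over $\delta_t(u),\delta_t(v)$. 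Similarly $\hat P(e_t)$ is an explicit function of $P(e_t)$, $F_t(u)$, $F_t(v)$, and $\Delta,q$, so it too is determined by $e_1,\dots,e_t$ and $M_{t-1}$. Finally, $F_{t+1}(w)=F_t(w)\cdot(1-\hat P(e_t))$ for $w\in\{u,v\}$ and $F_{t+1}(w)=F_t(w)$ otherwise, so each $F_{t+1}(w)$ is a deterministic function of $e_1,\dots,e_t$ and $M_{t-1}$, and since $M_t\in\{M_{t-1},M_{t-1}\cup\{e_t\}\}$ is determined by $M_{t-1}$, $e_t$, and the value $\mathbb{1}[X_t<\hat P(e_t)]$, knowing $M_t$ and $e_1,\dots,e_t$ suffices to recover $e_1,\dots,e_t$ and (when $M_t\neq M_{t-1}\cup\{e_t\}$, i.e.\ $e_t\notin M_t$, or when $e_t\in M_t$) the matching $M_{t-1}$ as well. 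This closes the induction and proves the observation.
\end{proof}

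The only subtlety, and the one place to be slightly careful, is the last point: one should check that $M_{t-1}$ is itself recoverable from $M_t$ and $e_1,\dots,e_t$, since the statement is phrased in terms of $M_{t-1}$ but the inductive bookkeeping naturally tracks forward to $M_t$; this holds because $M_t$ differs from $M_{t-1}$ by at most the single edge $e_t$, and whether that edge was added is visible in $M_t$. I do not expect any genuine obstacle here --- this observation is purely definitional plumbing, recording that the algorithm is a deterministic function of its random bits in the obvious way, set up so that later conditioning arguments (fixing $M_{t-1}$ or the partial input) are clean.
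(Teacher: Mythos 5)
Your proof is correct and takes essentially the same route as the paper's: both reduce the claim to the $F_t(\cdot)$ values and run a routine induction on $t$ unwinding the algorithm's update rules, with the only bookkeeping point being that earlier matchings are recoverable from later ones given the arrival order (your $M_{t-1}=M_t\setminus\{e_t\}$ step, implicit in the paper's ``functions of $M_{t'-1}$ and therefore of $M_{t-1}$''). Your identification ``$M_t=M_{t-1}$ (before $e_t$ is possibly added)'' is just resolving the paper's own off-by-one between the pseudocode's $M_t$ and the statement's ``current matching $M_{t-1}$,'' and you do so consistently, so there is no gap.
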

\begin{proof}
   Since $P(e_t)$ and $\hat P(e_t)$ are determined by the values of the variables $F_t(v)$, it suffices to prove the statement only for these latter variables.
    This follows by induction on $t$. For the base case, we have $F_1(v) = 1$ for all vertices, which implies the statement trivially. For the inductive step with $t > 1$, note that by construction $F_t(v)$ is determined by the values $\{\hat P(e_{t'}) : \text{$t'<t$ and $e_{t'} =(u', v)$ is incident to $v$}\}$. Any such value $\hat P(e_{t'})$ is in turn a function of $F_{t'}(u'), F_{t'}(v)$  and $P(e_{t'})$. By the inductive hypothesis, $F_{t'}(u'), F_{t'}(v)$ are functions of $M_{t'-1}$ and therefore also functions of $M_{t-1}$ (since $t'< t$). Finally, the value $P(e_{t'})$ is determined by $F_{t'}(u'), F_{t'}(v)$ and $M_{t'-1}$, which are again functions of $M_{t-1}$.
\end{proof}

The following lemma formalizes the intuition discussed in \cref{sec:intuition}; it proves that \cref{alg:edge-arrival} has the correct behavior for an edge $e_t = (u,v)$ by assuming that the randomness outside the 1-neighborhood ($\delta(u)\cup \delta(v)$) of $e_t$ is fixed and $\hat P(e_t) = P(e_t)$.

\begin{lemma} \label{lemma:sufficient_cond_for_right_marginal}
    For any edge $e_t = (u,v)$ it holds that $$\Pr[X_t < P(e_t)] = \frac{1}{\Delta + q}.$$
\end{lemma}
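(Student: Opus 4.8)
The plan is to condition on all randomness \emph{outside} the $1$-neighborhood of $e_t$ and then run the telescoping computation sketched in \cref{sec:intuition}. Two preliminary reductions. If $u$ or $v$ is matched in $M_t$ then $P(e_t)=0$, so (as $X_t\ge 0$ a.s.) the event $X_t<P(e_t)$ is impossible; hence it suffices to compute $\Pr[X_t<P(e_t)\text{ and }u,v\text{ both free in }M_t]$. Next, a straightforward induction on $t$ (using the $\hat P$-truncation and $q\le\Delta/4$, as in \cref{obs:Ftv_and_Pet_bounds}) shows $F_t(w)\ge q/(4\Delta)$ for all $w,t$, whence $P(e_t)\le \frac{16\Delta^2}{(\Delta+q)q^2}\le 1$ using $q\ge 8\sqrt\Delta$ from \eqref{eq:assumption}; so $P(e_t)$ is a genuine probability. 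Now let $e_{t_1},\dots,e_{t_k}$ be the edges incident to $u$ or $v$ arriving before $e_t$, and let $R$ denote a fixing of the variables $X_s$ for every edge $e_s\notin\{e_t,e_{t_1},\dots,e_{t_k}\}$.

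The crux is the following deterministic-given-$R$ claim: \emph{conditioned on $R$ together with the outcomes ``$X_{t_j}<\hat P(e_{t_j})$'' vs.\ ``$X_{t_j}\ge\hat P(e_{t_j})$'' of the incident edges $e_{t_1},\dots,e_{t_k}$, the entire execution of \cref{alg:edge-arrival} is determined} --- in particular the matching $M_t$ and the values $F_t(u),F_t(v),P(e_t)$. This follows by induction on the arrival time $t'$: by \cref{obs:fpp-det-by-M} the quantities $F_{t'}(\cdot),P(e_{t'}),\hat P(e_{t'})$ are functions of $M_{t'-1}$ and $e_1,\dots,e_{t'}$, and $M_{t'}$ is obtained from $M_{t'-1}$ by adding $e_{t'}$ iff $X_{t'}<\hat P(e_{t'})$ --- a bit which is read off $R$ if $e_{t'}$ is non-incident and which is exactly one of the conditioned-on outcomes if $e_{t'}$ is incident. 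In particular, conditioning on $R$ and on ``$e_{t_1},\dots,e_{t_{j-1}}$ not added'', the number $\hat P(e_{t_j})$ is a constant, call it $\hat P_R(e_{t_j})\le P(e_{t_j})\le 1$; since $X_{t_j}\sim\Uni[0,1]$ is independent of that conditioning,
\[
\Pr[e_{t_j}\text{ not added}\mid R,\ e_{t_1},\dots,e_{t_{j-1}}\text{ not added}] \;=\; 1-\hat P_R(e_{t_j}).
\]

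Now fix $R$. Since ``$u,v$ both free in $M_t$'' is precisely the event that none of $e_{t_1},\dots,e_{t_k}$ was added, the chain rule gives
\[
\Pr[u,v\text{ both free in }M_t\mid R] \;=\; \prod_{j=1}^k\bigl(1-\hat P_R(e_{t_j})\bigr).
\]
In the branch where $R$ holds and $u,v$ are both free, each $\hat P(e_{t_j})$ equals $\hat P_R(e_{t_j})$ (its value only depends on $M_{t_j-1}$, unaffected by the extra conditioning), and since $G$ is simple the sets $\delta_t(u),\delta_t(v)$ partition $\{e_{t_1},\dots,e_{t_k}\}$, so $F_t(u)\cdot F_t(v)=\prod_{j=1}^k(1-\hat P_R(e_{t_j}))$; thus $P(e_t)$ equals the constant $\frac{1}{(\Delta+q)\prod_{j=1}^k(1-\hat P_R(e_{t_j}))}$ (the product is positive, as each $\hat P_R(e_{t_j})\le \tfrac14$). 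As $X_t$ is independent of $R$ and of the incident outcomes,
\[
\Pr\!\left[X_t<P(e_t)\ \middle|\ R,\ u,v\text{ both free}\right] \;=\; \frac{1}{(\Delta+q)\prod_{j=1}^k(1-\hat P_R(e_{t_j}))}.
\]
Multiplying the last two displays gives $\Pr[X_t<P(e_t)\text{ and }u,v\text{ both free}\mid R]=\frac{1}{\Delta+q}$ for every $R$, and summing over $R$ weighted by $\Pr[R]$ yields $\Pr[X_t<P(e_t)]=\frac{1}{\Delta+q}$.

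The routine parts are the two opening reductions and the chain-rule bookkeeping. The one place demanding care is the deterministic-given-$R$ claim: one must set up the induction so that every quantity consulted while processing an incident edge $e_{t_j}$ has already been pinned down by $R$ and by the outcomes of $e_{t_1},\dots,e_{t_{j-1}}$ --- this is exactly where \cref{obs:fpp-det-by-M} and the simplicity of $G$ (no parallel edges between $u$ and $v$, and $\delta_t(u)\cap\delta_t(v)=\emptyset$) are used.
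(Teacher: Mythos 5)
Your proposal is correct and follows essentially the same route as the paper's proof: fix the randomness outside $\delta(u)\cup\delta(v)$, use the determinism of the execution given that conditioning plus the non-matching of the incident edges (\cref{obs:fpp-det-by-M}), compute the product $\prod_j(1-\hat P_R(e_{t_j}))=F_t(u)\cdot F_t(v)$ via independence of the $X_{t_j}$ and simplicity of $G$, cancel against $P(e_t)$, and finish by total probability. The only (harmless) difference is that you verify $P(e_t)\le 1$ up front, a bound the paper records separately in \cref{obs:Ftv_and_Pet_bounds}.
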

\begin{proof}
    Fix all randomness except for the edges incident to $u$ and $v$. 
    That is, we fix the outcomes $X_{t'} = x_{t'}$ for all edges $e_{t'}\not \in \delta(u) \cup \delta(v)$.
    Denote this event by $A(\vec{x})$.
    Let $t_1 < \cdots < t_\ell$ be the  arrival times of the edges in $\delta(u) \cup \delta(v)$ before time $t$.
    The only randomness left up to the point of $e_t$'s arrival is now given by the random variables $X_{t_1}, \ldots, X_{t_\ell}$, which are \emph{independent} of $A(\vec{x})$.
    For the selection of $e_t$ to be possible, we need to condition on the event that none of the edges $e_{t_1}, \ldots, e_{t_\ell}$ are taken in the matching. We note that conditioning on $A(\vec{x})$ and $e_{t_1}, \ldots, e_{t_\ell} \not \in M_t$ completely determines $M_{t'}$ for all time steps $t'\leq t$. Using \Cref{obs:fpp-det-by-M}, we thus have that $\hat P(e_{t_1}), \ldots, \hat P(e_{t_\ell})$, and $F_t(u), F_t(v)$ are uniquely determined under this conditioning.  Let $\hat p(e_{t_1}), \ldots, \hat p(e_{t_\ell})$, and $f_t(u), f_t(v)$ be the concrete values of these random variables under this conditioning.
    We then have:
     \begin{align*}
      \Pr[ \{e_{t_1}, \ldots, e_{t_\ell}\} \cap M_t = \emptyset  \mid A(\vec{x})] & = \prod_{i=1}^\ell \Pr[X_{t_i} > \hat p(e_{t_i})] = \prod_{i=1}^\ell (1- \hat p(e_{t_i})) = f_t(u) \cdot f_t(v).
     \end{align*}
    Above, for the first equality we used the independence of the different $X_{t_i}$, while for the last equality we partitioned the edges incident to $u$ and $v$ to obtain the factors $f_t(u)$ and $f_t(u)$ respectively. Note that, because the graph is simple and so parallel edges are disallowed, this partitioning is well defined, as none of the edges $e_{t_i}$ connects $u$ to $v$. 
    If $u$ or $v$ are unmatched before time $t$, we get: $P(e_t)=\frac{1}{\Delta+q}\cdot\frac{1}{f_t(u)\cdot f_t(v)}$. Since the random variable $X_t$ is independent from $M_t$, the above then yields the desired equality when conditioning on $A(\vec{x})$:
    \begin{align*}
        \Pr[ X_t < P(e_t)   \mid A(\vec{x})]  & = 
          \Pr[ X_t < P(e_t) \mbox{ and }  \{e_{t_1}, \ldots, e_{t_\ell}\} \cap M_t = \emptyset  \mid A(\vec{x})] \\
        & = \frac{1}{\Delta+q}\cdot\frac{1}{f_t(u)\cdot f_t(v)}\cdot f_t(u)\cdot f_t(v) \\
        & = \frac{1}{\Delta+q}\;.
    \end{align*}
    The lemma now follows by the law of total probability over all possible values of $A(\vec{x})$.
\end{proof}

Note that a consequence of the above lemma is that, if the values of $P$ and $\hat P$ were to \emph{always} coincide during the execution of \cref{alg:edge-arrival}, then all edges would be matched with probability $1/(\Delta+q)$. In the following section, we prove that the equality $\hat P(e_t) = P(e_t)$ holds with high probability in $\Delta$ for any edge~$e_t$, which, by simple calculations, implies a matching probability of $1/(\Delta+O(q))$.

\subsection{Analysis of \texorpdfstring{\Cref{alg:edge-arrival}}{Section \ref{alg:edge-arrival}} in General} \label{subsec:analysis_part2}

In this section we wish to prove that $\hat P$ and $P$ coincide for each edge with high probability in $\Delta$. 
This requires proving that $F_t(v)$ is likely to be high for all times $t$ and vertices $v$---intuitively that means that the probability that $v$ is unmatched is never too low.
We start by observing a trivial lower bound on $F_t(v)$ and upper bound on $P(e)$ that follows directly from the algorithm's definition.

\begin{observation}\label{obs:Ftv_and_Pet_bounds}
    $F_t(v) \geq q/(4\Delta)$ and $\hat P(e_t) \leq P(e_t) \leq 1/4$ for all vertices $v\in V$ and times~$t$.
\end{observation}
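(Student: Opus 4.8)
The plan is to read off both bounds directly from the update rules of \Cref{alg:edge-arrival}, without any probabilistic reasoning. For the lower bound on $F_t(v)$, I would argue by induction on $t$. The base case is immediate since $F_1(v) = 1 \geq q/(4\Delta)$ (recall $q \leq \Delta/4$ by \eqref{eq:assumption}). For the inductive step, the only way $F_{t+1}(v)$ differs from $F_t(v)$ is if $e_t = (u,v)$ is incident to $v$, in which case $F_{t+1}(v) = F_t(v)\cdot(1-\hat P(e_t))$. Now I split on the definition of $\hat P(e_t)$: either $\hat P(e_t) = 0$, in which case $F_{t+1}(v) = F_t(v) \geq q/(4\Delta)$ by the inductive hypothesis and we are done; or $\hat P(e_t) = P(e_t)$, but this branch is taken \emph{precisely when} $\min\{F_t(u), F_t(v)\}\cdot(1-P(e_t)) \geq q/(4\Delta)$, which in particular gives $F_{t+1}(v) = F_t(v)\cdot(1-\hat P(e_t)) = F_t(v)\cdot(1-P(e_t)) \geq \min\{F_t(u),F_t(v)\}\cdot(1-P(e_t)) \geq q/(4\Delta)$. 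Either way the invariant is preserved.

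For the upper bound $\hat P(e_t) \leq P(e_t) \leq 1/4$, note first that $\hat P(e_t) \leq P(e_t)$ is immediate from the definition of $\hat P$ (it equals either $P(e_t)$ or $0$, and $P(e_t) \geq 0$). For $P(e_t) \leq 1/4$: if $u$ or $v$ is already matched in $M_t$ then $P(e_t) = 0$ and there is nothing to prove, so assume both are unmatched and $P(e_t) = \frac{1}{\Delta+q}\cdot\frac{1}{F_t(u)F_t(v)}$. Apply the just-established lower bound $F_t(u), F_t(v) \geq q/(4\Delta)$ to get
\[
P(e_t) \;\leq\; \frac{1}{\Delta+q}\cdot\left(\frac{4\Delta}{q}\right)^2 \;=\; \frac{16\Delta^2}{q^2(\Delta+q)} \;\leq\; \frac{16\Delta^2}{q^2\cdot\Delta} \;=\; \frac{16\Delta}{q^2}.
\]
Now invoke the lower bound $q \geq 8\sqrt{\Delta}$ from \eqref{eq:assumption}, so that $q^2 \geq 64\Delta$, giving $P(e_t) \leq 16\Delta/(64\Delta) = 1/4$.

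There is a small subtlety worth flagging: the two parts are not independent — the bound $P(e_t) \leq 1/4$ used the bound $F_t(v) \geq q/(4\Delta)$ — but there is no circularity, since the $F_t(v)$ lower bound is proved by an induction that, at step $t$, only references $F_{t'}(\cdot)$ for $t' \leq t$ and never uses any bound on $P(e_t)$ itself (the self-correcting definition of $\hat P$ is exactly what makes the inductive step go through regardless of how large $P(e_t)$ is). So the clean order is: first establish $F_t(v) \geq q/(4\Delta)$ for all $t,v$ by induction; then deduce $P(e_t) \leq 1/4$ pointwise; then $\hat P(e_t) \leq P(e_t)$ is trivial. I do not expect any real obstacle here — this observation is purely bookkeeping on the algorithm's definition together with the two numerical inequalities in \eqref{eq:assumption}; the only thing to be careful about is correctly handling the $\hat P(e_t) = 0$ branch in the induction, where $F$ simply does not decrease.
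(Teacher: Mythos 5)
Your proposal is correct and follows essentially the same route as the paper: induction on $t$ for the lower bound on $F_t(v)$, using the guard condition in the definition of $\hat P$ to handle the non-trivial branch, followed by the same pointwise computation $P(e_t)\leq \frac{1}{\Delta+q}\cdot(4\Delta/q)^2\leq 16\Delta/q^2\leq 1/4$ via \eqref{eq:assumption}. Your explicit remark on the (non-circular) order of the two parts is a fair clarification but matches the paper's logic exactly.
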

\begin{proof}
    For any fixed $v$, we prove the first statement by induction on $t$. For $t = 0$, using \eqref{eq:assumption}, we have $F_t(v) = 1 \geq q/(4\Delta)$.
    Now, assume that $F_t(v) \geq q/(4\Delta)$ for some $t \geq 0$. If either $v \notin e_t$ or $\hat P(e_t) = 0$, then clearly $F_{t+1}(v) = F_t(v)$ and the statement is proven. Otherwise, by the definition of $\hat P$ it follows that $F_{t+1}(v) = F_t(v) \cdot (1 - \hat P(e_t))=F_t(v) \cdot (1 - P(e_t)) \geq q/(4\Delta)$,  and again the statement is proven. The fact that $P(e_t) \leq 1/4$ is now a consequence of the previously proven fact and $q\geq 8\sqrt{\Delta}$, by \eqref{eq:assumption}:
    \begin{align*}
        P(e_t) & = \frac{1}{\Delta + q} \cdot \frac{1}{F_t(u) \cdot F_t(v)} \leq \frac{1}{\Delta + q } \cdot \frac{1}{(q/(4\Delta))^2} \leq \frac{16\Delta}{q^2} \leq \frac{1}{4}. \qedhere
    \end{align*} 
\end{proof}

To prove that $\hat P(e_t) = P(e_t)$ with high probability in $\Delta$, we note that by their construction in \cref{alg:edge-arrival}, the values $\hat P(e_t)$ can only differ from $P(e_t)$ if the values of the variables $F_t(v)$ are too small, and in particular are close to their lower bound of $q/(4\Delta)$ guaranteed by \Cref{obs:Ftv_and_Pet_bounds}.

\begin{observation} \label{lemma:sufficient_cond_for_Phat_equal_P}
    If $e_t = (u,v)$ and $\min\{F_{t}(u),\ F_{t}(v)\} \geq q/(3\Delta)$, then $\hat P(e_t) = P(e_t)$.
\end{observation}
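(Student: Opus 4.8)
The statement to prove is \Cref{lemma:sufficient_cond_for_Phat_equal_P}: if $e_t=(u,v)$ and $\min\{F_t(u),F_t(v)\}\geq q/(3\Delta)$, then $\hat P(e_t)=P(e_t)$. The plan is to unwind the definition of $\hat P$ in \Cref{alg:edge-arrival}: we have $\hat P(e_t)=P(e_t)$ precisely when $\min\{F_t(u),F_t(v)\}\cdot(1-P(e_t))\geq q/(4\Delta)$, so it suffices to show this inequality follows from the hypothesis $\min\{F_t(u),F_t(v)\}\geq q/(3\Delta)$. By \Cref{obs:Ftv_and_Pet_bounds} we already know $P(e_t)\leq 1/4$, hence $1-P(e_t)\geq 3/4$.

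First I would combine these two bounds directly: under the hypothesis,
\[
\min\{F_t(u),F_t(v)\}\cdot(1-P(e_t))\ \geq\ \frac{q}{3\Delta}\cdot\frac{3}{4}\ =\ \frac{q}{4\Delta},
\]
which is exactly the condition in the definition of $\hat P(e_t)$ guaranteeing $\hat P(e_t)=P(e_t)$. That is the entire argument. One should note the edge case where $u$ or $v$ is already matched in $M_t$: then $P(e_t)=0$, so $1-P(e_t)=1$ and the inequality holds a fortiori (indeed both $P(e_t)$ and $\hat P(e_t)$ are $0$), so the conclusion is trivial there too.

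There is essentially no main obstacle here; the only thing to be careful about is making sure the chain of inequalities uses the correct constants ($q/(3\Delta)$ versus $q/(4\Delta)$, and the factor $3/4$ from $P(e_t)\leq 1/4$) and that \Cref{obs:Ftv_and_Pet_bounds} is invoked to bound $P(e_t)$ rather than circularly re-deriving it. The lemma is a short "sanity-check" observation whose role is to reduce the later concentration argument to showing that $F_t(v)\geq q/(3\Delta)$ with high probability, and the proof is just the two-line computation above.
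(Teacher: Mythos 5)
Your proof is correct and is essentially identical to the paper's: both invoke \Cref{obs:Ftv_and_Pet_bounds} to get $P(e_t)\leq 1/4$ and then compute $\min\{F_t(u),F_t(v)\}\cdot(1-P(e_t))\geq (q/(3\Delta))\cdot(3/4)=q/(4\Delta)$, which triggers the case $\hat P(e_t)=P(e_t)$ in the algorithm's definition. Your extra remark about the already-matched edge case is harmless and only strengthens the write-up.
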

\begin{proof}
    As $P(e_t) \leq 1/4$ by \cref{obs:Ftv_and_Pet_bounds}, we have $\min\{F_{t}(u), F_{t}(v)\}\cdot (1-P(e_t))  \geq (q/(3\Delta)) \cdot (3/4) = q/(4\Delta)$, which, by the algorithm's definition, in turn implies that $\hat P(e_t)=P(e_t)$.
\end{proof}

Given \Cref{lemma:sufficient_cond_for_Phat_equal_P} and \Cref{lemma:sufficient_cond_for_right_marginal}, it suffices to show that the probability that $F_t(v) < q/(3\Delta)$ is very small for all time steps~$t$. As $F_{t+1}(v) \leq F_t(v)$, it is thus sufficient to bound this probability at the very last step, i.e., to bound the probability that $F(v) < q/(3\Delta)$, where $F(v) := F_m(v)$. In the following lemma, we identify a sufficient condition---\Cref{eq:main_inequality}--- for the condition $F(v)\ge q/(3\Delta)$ (and thus also $\hat{P}(e) = P(e)$) to hold.
In particular, we lower bound $F(v)$ by only focusing on the impact of neighbors of $v$ on $P(e_{t_i})$ for edges $e_{t_i}\ni v$ and then applying the union bound.

\begin{lemma}
\label{lemma:main_ineq}
    Let $e_{t_1} = (u_1, v),\ldots, e_{t_\ell} = (u_\ell, v)$ be the edges incident to $v$, arriving at times $t_1< \cdots < t_\ell$. Let $S := \{u_i \in N(v) \mid u_i \not \in M_{t_i}\}$ be those neighbors $u_i$ that are not matched before time $t_i$ when the edge $e_{t_i} = (u_i, v)$ arrives. Then,
    \begin{align*}
        F(v) \ge 1-\sum_{u_i \in S} \frac{1}{\Delta + q}\frac{1}{F_{t_i}(u_i)}.
    \end{align*} 
    As a consequence, 
    $F(v) \ge q/(3\Delta)$ holds if \begin{align}
        \sum_{u_i \in S} \frac{1}{\Delta + q}\frac{1}{F_{t_i}(u_i)} \leq \frac{\Delta}{\Delta + q/2}.
        \label{eq:main_inequality}
    \end{align} 
\end{lemma}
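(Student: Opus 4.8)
The plan is to prove the two displayed claims in \Cref{lemma:main_ineq} essentially by unwinding the definition of $F(v)$ as a telescoping product and using the algorithm's update rule. First I would write $F(v) = F_m(v) = \prod_{i=1}^{\ell}\bigl(1 - \hat P(e_{t_i})\bigr)$, since the only edges that ever decrease $F_t(v)$ are those incident to $v$, and each such edge $e_{t_i}$ multiplies $F_t(v)$ by $(1-\hat P(e_{t_i}))$. Then I would bound each factor: $\hat P(e_{t_i}) \le P(e_{t_i})$ by \Cref{obs:Ftv_and_Pet_bounds}, and $P(e_{t_i})$ is nonzero only when both endpoints $u_i, v$ are unmatched at time $t_i$ — in particular only when $u_i \notin M_{t_i}$, i.e.\ $u_i \in S$. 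So $\hat P(e_{t_i}) \le \mathbb{1}[u_i\in S]\cdot P(e_{t_i}) \le \mathbb{1}[u_i\in S]\cdot \frac{1}{\Delta+q}\cdot\frac{1}{F_{t_i}(u_i)F_{t_i}(v)} \le \mathbb{1}[u_i\in S]\cdot \frac{1}{\Delta+q}\cdot\frac{1}{F_{t_i}(u_i)}$, where in the last step I use $F_{t_i}(v)\le 1$ (the $F$-values start at $1$ and are non-increasing).

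Next I would pass from the product to the sum via the elementary inequality $\prod_i (1-a_i) \ge 1 - \sum_i a_i$ for $a_i \in [0,1]$ (provable by induction on the number of factors, using $a_i\le 1$ which holds since each $\hat P(e_{t_i})\le 1/4$ by \Cref{obs:Ftv_and_Pet_bounds}). Applying this with $a_i = \hat P(e_{t_i})$ and then dropping the terms with $u_i\notin S$ (which only increases the lower bound, since those $a_i$ could be positive — wait, actually for $u_i\notin S$ we have $P(e_{t_i})$ possibly nonzero only if the free-condition at $v$ fails too; in any case $\hat P(e_{t_i})$ for $u_i\notin S$ is at most the bound above which is then $0$, so those terms vanish). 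This gives exactly $F(v) \ge 1 - \sum_{u_i\in S}\frac{1}{\Delta+q}\frac{1}{F_{t_i}(u_i)}$, the first claim.

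For the ``as a consequence'' part, I would simply observe that if \eqref{eq:main_inequality} holds, then $F(v) \ge 1 - \frac{\Delta}{\Delta+q/2} = \frac{q/2}{\Delta+q/2}$, and it remains to check $\frac{q/2}{\Delta+q/2}\ge \frac{q}{3\Delta}$, i.e.\ $\frac{3\Delta}{2}\ge \Delta + q/2$, i.e.\ $q\le \Delta$ — which holds by \eqref{eq:assumption} (indeed $q\le\Delta/4$). This is a one-line computation.

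I do not expect any genuine obstacle here; the lemma is a deterministic, pointwise inequality and everything reduces to the product-to-sum bound plus careful bookkeeping of which edges can carry nonzero probability. The one subtlety to state carefully is why $\hat P(e_{t_i}) \ne 0$ forces $u_i\in S$: this is because $\hat P(e_{t_i})\le P(e_{t_i})$ and $P(e_{t_i})=0$ unless both $u_i$ and $v$ are unmatched in $M_{t_i}$, so in particular $u_i\notin M_{t_i}$. The real work — turning \eqref{eq:main_inequality} from a hypothesis into a high-probability event via the martingale argument and Freedman's inequality — is deferred to the subsequent lemmas, so this lemma itself is purely preparatory.
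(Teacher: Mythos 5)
There is a genuine error in the key step of your argument: the inequality you invoke at the end of the factor bound goes the wrong way. From $F_{t_i}(v)\le 1$ you get $\tfrac{1}{F_{t_i}(u_i)F_{t_i}(v)}\ \ge\ \tfrac{1}{F_{t_i}(u_i)}$, not $\le$; dropping the factor $\tfrac{1}{F_{t_i}(v)}$ is only legitimate if $F_{t_i}(v)\ge 1$, which fails as soon as any incident edge has been processed with nonzero $\hat P$. This matters quantitatively: by \Cref{obs:Ftv_and_Pet_bounds} the factor $\tfrac{1}{F_{t_i}(v)}$ can be as large as $4\Delta/q$, so your route via $\prod_i(1-a_i)\ge 1-\sum_i a_i$ with $a_i=\hat P(e_{t_i})$ only yields $F(v)\ge 1-\sum_{u_i\in S}\tfrac{1}{\Delta+q}\tfrac{1}{F_{t_i}(u_i)F_{t_i}(v)}$, a bound that is much weaker than the one claimed in \Cref{lemma:main_ineq} (its right-hand side can be far below $q/(3\Delta)$, indeed negative, even when \eqref{eq:main_inequality} holds). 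So the lemma does not follow from the product-to-sum inequality applied to the raw probabilities.

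The missing idea is exactly the cancellation that the paper exploits: the multiplicative update $F_{t_i+1}(v)=F_{t_i}(v)\bigl(1-\hat P(e_{t_i})\bigr)$ means the \emph{additive} decrement is $F_{t_i}(v)\hat P(e_{t_i})\le F_{t_i}(v)P(e_{t_i})=\tfrac{1}{\Delta+q}\tfrac{1}{F_{t_i}(u_i)}$, where the prefactor $F_{t_i}(v)$ cancels the $\tfrac{1}{F_{t_i}(v)}$ inside $P(e_{t_i})$; summing these decrements from $F_1(v)=1$ gives the first claim. You can keep your product formulation by replacing the crude bound $\prod_i(1-a_i)\ge 1-\sum_i a_i$ with the exact telescoping identity $1-\prod_i(1-a_i)=\sum_i a_i\prod_{j<i}(1-a_j)$ and noting that $\prod_{j<i}(1-a_j)=F_{t_i}(v)$, which recovers the same cancellation. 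Your bookkeeping of which terms are nonzero (namely $\hat P(e_{t_i})\neq 0$ forces $u_i\in S$) and your one-line derivation of the ``as a consequence'' part (using $q\le\Delta/4$ from \eqref{eq:assumption}) are both correct and agree with the paper.
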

\begin{proof}
We look at how $F_{t}(v)$ develops throughout the run of \cref{alg:edge-arrival}. When edge $e_{t_i} = (u_i, v)$ arrives, the algorithm sets $F_{t_i+1}(v) \gets F_{t_i}(v)\cdot (1-\hat{P}(e_{t_i}))$, yielding the following lower bound on $F_{t+1}(v)$, relying on $\hat{P}(e_{t_i})\leq P(e_{t_i})$: 
\begin{align*}
    F_{t_i+1}(v) \geq F_{t_i}(v)\cdot (1-P(e_{t_i}))&\ge F_{t_i}(v)\cdot \left(1-\frac{1}{\Delta+q}\frac{1}{F_{t_i}(v)F_{t_i}(u)}\right)
=
F_{t_i}(v) - \frac{1}{\Delta+q}\frac{1}{F_{t_i}(u)}.
\end{align*}
Above, the second inequality is only an equality if both $v,u_i$ are not matched before time $t_i$, and in particular $u_i\in S$.
In the alternate case, we have that $\hat{P}(e_{t_i})=P(e_{t_i})=0$, and so $F_{t+1}(v)=F_t(v)$. We conclude that $F_{t_i+1}(v)-F_{t_i}(v)$ can only be non-zero for times $t_i$ with $u_i$ previously unmatched ($u_i\in S$), in which case $F_{t_i+1}(v)-F_{t_i}(v)\geq -\frac{1}{\Delta+q}\frac{1}{F_{t_i}(u_i)}$.
Since $F_1(v) = 1$ initially, the first part of the lemma follows by summing over all $u_i\in S$. 

The second part of the lemma, whereby $F(v)\geq q/(3\Delta)$ provided \Cref{eq:main_inequality} holds, 
 now follows from the first part and a simple calculation, using that $q\leq \Delta/4$ by \Cref{eq:assumption}:
    \begin{equation*}
        F(v) \geq 1 - \sum_{j=1}^k \frac{1}{\Delta + q}\frac{1}{F_{t_{i_j}}(u_{i_j})}
         \stackrel{\eqref{eq:main_inequality}}{\geq} 1 - \frac{\Delta}{\Delta + q/2} = \frac{q/2}{\Delta + q/2} \geq q/(3\Delta)\,. \qedhere
    \end{equation*} 
\end{proof}

Similar arguments to those in \Cref{subsec:analysis_part1} can be used to prove that the sum $P:=\sum_{u_i \in S} \frac{1}{\Delta + q}\frac{1}{F_{t_i}(u_i)}$ satisfies $\E[P]\leq \frac{\Delta}{\Delta+q}$. (This also follows from the subsequent martingale analysis later.)
That the expectation of $P$ is bounded away from the upper bound of $\frac{\Delta}{\Delta+q/2}$ required in \eqref{eq:main_inequality} hints at using appropriate concentration inequalities, such as Chernoff-Hoeffding type bounds, to prove $P \leq \frac{\Delta}{\Delta+q/2}$ with high probability in $\Delta$.
Unfortunately, $P$ is not a sum of independent or negatively correlated random variables in general, and therefore Chernoff-Hoeffding bounds are not applicable. Instead, in the subsequent sections we model the development of $P$ as a martingale process, which will allow proving the desired concentration inequality without having to argue explicitly about correlations.

\subsubsection{Our martingale process}

 Fix a vertex $v$.
To prove that the sufficient condition $F(v)\geq q/(3\Delta)$ of inequality \eqref{eq:main_inequality} holds often in general, we view the development of the left-hand-side of \eqref{eq:main_inequality} as a martingale. For a time step $t$, define:
\[
    S_t := \{ u_i \in N(v) \mid u_i \not \in M_{\min\{t, t_i\}}\} \quad \mbox{and} \quad Y_{t-1} := \sum_{u_i \in S_t} \frac{1}{\Delta + q}\frac{1}{F_{\min\{t,t_i\}}(u_i)}.
\]
Recall that $t_i$ is the time step at which the edge $e_{t_i}=(v,u_i)$ arrives, and $N(v)$ is the final set of neighbors of $v$ in the graph. Hence, $S_t$ contains all neighbors of $v$ in the final graph (including the future neighbors), except those neighbors that were already matched by the time $\min \{t,t_i\}$. In particular, if $u_i \in S_{t_i}$, i.e., $u_i$ was not matched by the time $t_i$ it gets connected to $v$, it will remain inside all future sets $S_t$, for $t \geq t_i$. Also, notice that both $S_t$ and $Y_t$ are unknown to the algorithm at time $t$, as their definition requires ``future'' knowledge of the input graph, and that they are only used for the analysis.

With the above notation, $Y_0 = \frac{\deg(v)}{\Delta+q} \le \frac{\Delta}{\Delta + q}$ and $Y:=Y_{m}$ equals the left-hand-side of \Cref{eq:main_inequality}. $Y_{t-1}$ is determined by the independent random variables $X_1, \ldots, X_{t-1}$ sampled by \cref{alg:edge-arrival}. As we now show, $Y_0,\dots,Y_m$ indeed form a martingale:
\begin{lemma} \label{lemma:proof_y_is_martingale}
    $Y_0, \dots, Y_m$ form a martingale w.r.t.\ the random variables $X_1,\dots,X_m$. Furthermore, the difference $Y_t - Y_{t-1}$ is given by the following two cases:
    \begin{itemize}
        \item If $e_t$ is added to $M_{t+1}$, which happens with probability $\hat P(e_t)$, then:
        \begin{equation} \label{eq:diff_y_case1}
            Y_{t} - Y_{t-1}  = - \frac{1}{\Delta + q} \sum_{u_i \in S_t \cap e_t} \frac{1}{F_{t}(u_i)}.
        \end{equation}
        \item If instead $e_t$ is not added to $M_{t+1}$, which happens with probability $1 - \hat P(e_t)$, then:
        \begin{equation} \label{eq:diff_y_case2}
            Y_{t} - Y_{t-1}  = \frac{1}{\Delta + q} \cdot \frac{\hat P(e_t)}{1 - \hat P(e_t)} \sum_{u_i \in S_t \cap e_t} \frac{1}{F_{t}(u_i)}.
        \end{equation}
    \end{itemize}
\end{lemma}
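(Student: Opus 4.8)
The plan is to verify the three conditions of \Cref{def:martingales} for $Y_0,\dots,Y_m$ with respect to $X_1,\dots,X_m$, and to read off the increment $Y_t-Y_{t-1}$ along the way, which is the ``furthermore'' part. First I would record measurability and integrability. Each matching $M_s$ is a deterministic function of $X_1,\dots,X_{s-1}$: $M_{s+1}$ is obtained from $M_s$ by adding $e_s$ iff $X_s<\hat P(e_s)$, and $\hat P(e_s)$ is a function of $M_s$ and the fixed partial input by \Cref{obs:fpp-det-by-M}; induction closes this. Hence $S_t$ and each weight $F_{\min\{t,t_i\}}(u_i)$ appearing in $Y_{t-1}$ — a function of the partial input and of some $M_s$ with $s\le t$, again by \Cref{obs:fpp-det-by-M} — is a function of $X_1,\dots,X_{t-1}$, so $Y_{t-1}$ is too. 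Integrability is immediate: by \Cref{obs:Ftv_and_Pet_bounds} each of the at most $\deg(v)$ summands of $Y_t$ is at most $\tfrac{1}{\Delta+q}\cdot\tfrac{4\Delta}{q}$, so $Y_t$ is bounded and nonnegative.

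The crux is identifying the increment, by comparing $Y_{t-1}=\sum_{u_i\in S_t}\tfrac{1}{\Delta+q}\tfrac{1}{F_{\min\{t,t_i\}}(u_i)}$ with $Y_t=\sum_{u_i\in S_{t+1}}\tfrac{1}{\Delta+q}\tfrac{1}{F_{\min\{t+1,t_i\}}(u_i)}$ term by term. The term of a neighbor $u_i$ is unchanged unless $u_i\in e_t$, the edge $(v,u_i)$ has not yet arrived ($t_i>t$), and $u_i$ is currently unmatched ($u_i\notin M_t$): if $t_i\le t$ then $\min\{t,t_i\}=\min\{t+1,t_i\}=t_i$ and the test ``$u_i\notin M_{t_i}$'' is identical on both sides; if $t_i>t$ but $u_i\notin e_t$, neither $F(u_i)$ nor $u_i$'s matched-status is touched at step $t$; and if $t_i>t$, $u_i\in e_t$ but $u_i\in M_t$, then $u_i$ lies outside both $S_t$ and $S_{t+1}$. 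So only the (at most two) neighbors in the active set $\mathcal{A}_t:=\{u_i\in e_t : t_i>t,\ u_i\notin M_t\}$ move, and $\mathcal{A}_t=\emptyset$ whenever $e_t$ is incident to $v$. For $u_i\in\mathcal{A}_t$: if $e_t$ is added ($X_t<\hat P(e_t)$) then $u_i\in M_{t+1}$ leaves $S_{t+1}$ and its term drops to $0$, changing by $-\tfrac{1}{\Delta+q}\tfrac{1}{F_t(u_i)}$; if $e_t$ is not added then $u_i$ stays in $S_{t+1}$ but its weight is scaled by $1/(1-\hat P(e_t))$ since $F_{t+1}(u_i)=F_t(u_i)(1-\hat P(e_t))$, a change of $\tfrac{1}{\Delta+q}\tfrac{1}{F_t(u_i)}\big(\tfrac{1}{1-\hat P(e_t)}-1\big)=\tfrac{1}{\Delta+q}\tfrac{\hat P(e_t)}{1-\hat P(e_t)}\tfrac{1}{F_t(u_i)}$; here $\hat P(e_t)\le\tfrac14$ by \Cref{obs:Ftv_and_Pet_bounds}, so this is well defined. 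Summing over $\mathcal{A}_t$ yields \eqref{eq:diff_y_case1} and \eqref{eq:diff_y_case2} (the set written there as $S_t\cap e_t$ contributes only through its active elements $\mathcal{A}_t$: the remaining, frozen elements add nothing, and when $\hat P(e_t)=0$ — exactly when an endpoint of $e_t$ is already matched — both expressions vanish anyway).

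For the martingale identity, condition on $X_1,\dots,X_{t-1}$: the first paragraph makes $\hat P(e_t)$, $\mathcal{A}_t$ and the $F_t(u_i)$ deterministic, while $X_t\sim\Uni[0,1]$ is independent of the past, so $e_t$ is added with probability exactly $\hat P(e_t)$. Writing $C:=\tfrac{1}{\Delta+q}\sum_{u_i\in\mathcal{A}_t}\tfrac{1}{F_t(u_i)}$, the two cases give
\begin{equation*}
\E[Y_t-Y_{t-1}\mid X_1,\dots,X_{t-1}] = \hat P(e_t)\cdot(-C) + (1-\hat P(e_t))\cdot\frac{\hat P(e_t)}{1-\hat P(e_t)}\,C = 0,
\end{equation*}
i.e.\ $\E[Y_t\mid X_1,\dots,X_{t-1}]=Y_{t-1}$, which completes the verification.

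The delicate step is the second one: correctly pinning down which summands actually move at time $t$ — in particular the bookkeeping around $\min\{t,t_i\}$ versus $\min\{t+1,t_i\}$, and the fact that a neighbor whose edge to $v$ has already arrived is frozen into (or out of) $S_t$ no matter what happens to it afterwards. Once that case analysis is in place, the measurability/integrability checks and the conditional-expectation computation are routine.
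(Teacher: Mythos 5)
Your proof is correct and takes essentially the same route as the paper's: verify the conditions of \Cref{def:martingales}, identify the increment $Y_t-Y_{t-1}$ by a case analysis on whether $e_t$ is added to the matching, and conclude the martingale property from the two-case conditional expectation $\hat P(e_t)\cdot(-C)+(1-\hat P(e_t))\cdot\frac{\hat P(e_t)}{1-\hat P(e_t)}\cdot C=0$. Your bookkeeping of which terms actually move (only endpoints $u_i\in e_t$ with $t_i>t$ and $u_i\notin M_t$, the ``frozen'' neighbors being unaffected) is in fact slightly more explicit than the paper's, which tacitly restricts to such neighbors; the only slip is the parenthetical claim that $\hat P(e_t)=0$ holds exactly when an endpoint of $e_t$ is matched---it can also vanish because of the $q/(4\Delta)$ threshold---but this plays no role in your argument.
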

\begin{proof}
    To prove the martingale property, we check the conditions given by \Cref{def:martingales}. First, notice that fixing $X_1,\dots,X_t$ in \Cref{alg:edge-arrival} determines the set $S_t$ and the values of the random variables $F_{\min\{t,t_i\}}(u_i)$ used to define $Y_t$. Hence, $Y_t$ is a function of $X_1,\dots,X_t$. As $F_{\min\{t,t_i\}}(u_i) \geq q/(4\Delta)$, obviously $\E[Y_t] < \infty$.

    It remains to show that $\E[Y_t \mid X_1,X_2,\dots,X_{t-1}] = Y_{t-1}$. We first verify the claimed identities at \eqref{eq:diff_y_case1} and \eqref{eq:diff_y_case2}. If the edge $e_t$ arriving at time $t$ is not incident to any $u_i \in S_t$ with $t<t_i$, then $Y_{t+1} = Y_t$ deterministically, and \eqref{eq:diff_y_case1}, \eqref{eq:diff_y_case2} hold as their right hand sides are equal to $0$. On the other hand, if $e_t$ is incident to one or two such vertices $u_i \in S_t$, then we have two cases: $e_t$ might be added to $M_{t+1}$ or not. If $e_t$ was matched, $S_{t+1} = S_t \setminus (S_t\cap e_t)$, so some terms are dropped from the sum and the identity \eqref{eq:diff_y_case1} follows. Otherwise, if $e_t$ was not matched, we have for any $u_i \in S_t \cap e_t$:
    \begin{align*}
        \frac{1}{\Delta + q}\left(\frac{1}{F_{t+1}(u_i)} -\frac{1}{F_{t}(u_i)}\right) =\frac{1}{\Delta + q}\left(\frac{1}{F_{t}(u_i) (1-\hat P(e_t))} - \frac{1}{F_{t}(u_i)}\right) = \frac{1}{\Delta + q} \cdot \frac{\hat P(e_t)}{1 - \hat P(e_t)} \cdot \frac{1}{F_t(u_i)},
    \end{align*}
    and identity \eqref{eq:diff_y_case2} follows by summing the above for all $u_i \in S_t \cap e_t$.

    Now $\E[Y_t \mid X_1,X_2,\dots,X_{t-1}] = Y_{t-1}$ follows by direct computation using \eqref{eq:diff_y_case1} and \eqref{eq:diff_y_case2}. Hence, all conditions of \Cref{def:martingales} are fulfilled, and indeed  $Y_0,\dots,Y_m$ forms a martingale w.r.t.\ $X_1,\dots,X_m$.
\end{proof}

\subsubsection{Bounding martingale parameters}
We recall that our goal is to prove that \Cref{eq:main_inequality}, i.e., that $Y = Y_m$ satisfies $Y \leq \frac{\Delta}{\Delta + q/2}$ with high probability (in $\Delta$). As $Y_0 = \frac{\deg(v)}{\Delta+q} \le \frac{\Delta}{\Delta + q}$ and trivially $Y\leq Y_0 + |Y-Y_0|$, it thus suffices to bound the difference between the first and last step of the martingale, $|Y - Y_0|$, as follows:
\begin{fact}[Sufficient Martingale Condition] \label{claim:bounding_y_diff}
    \Cref{eq:main_inequality} holds if
    \begin{equation} \label{eq:main_ineq_equiv_form}
        |Y - Y_0| \leq \frac{\Delta}{\Delta + q/2} - \frac{\Delta}{\Delta + q}.
    \end{equation}
\end{fact}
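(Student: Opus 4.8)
The plan is to observe that this claim is an immediate consequence of the triangle inequality, combined with two facts already established above. First, by the martingale construction, $Y = Y_m$ equals the left-hand side of \eqref{eq:main_inequality}: evaluating the definitions of $S_t$ and $Y_{t-1}$ at $t = m$ gives $\min\{m, t_i\} = t_i$ for every $i$, so $S_m = S$ and $Y_m = \sum_{u_i \in S}\frac{1}{\Delta+q}\frac{1}{F_{t_i}(u_i)}$, which is precisely the sum appearing in \eqref{eq:main_inequality} (as noted right after \Cref{lemma:main_ineq}). Second, $Y_0 = \frac{\deg(v)}{\Delta+q} \le \frac{\Delta}{\Delta+q}$ since $\deg(v) \le \Delta$. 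So I would simply write $Y \le Y_0 + |Y - Y_0|$, and then, assuming the hypothesis \eqref{eq:main_ineq_equiv_form}, substitute both bounds:
\[
    Y \le \frac{\Delta}{\Delta+q} + \left(\frac{\Delta}{\Delta+q/2} - \frac{\Delta}{\Delta+q}\right) = \frac{\Delta}{\Delta+q/2},
\]
which is exactly \eqref{eq:main_inequality}, completing the proof.

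There is essentially no obstacle within this statement itself: its entire content is bookkeeping, isolating the deviation $|Y - Y_0|$ so that the remaining work can be phrased purely as a concentration question about the martingale $Y_0, \dots, Y_m$. The genuine difficulty is deferred to the next subsection, where one must show that \eqref{eq:main_ineq_equiv_form} holds with high probability in $\Delta$; there the real effort goes into bounding the martingale's maximum step size and its observed variance $W_m$ (using \Cref{lemma:proof_y_is_martingale} and the bounds of \Cref{obs:Ftv_and_Pet_bounds}) and then invoking Freedman's inequality (\Cref{thm:freedman_inequality}) with the chosen value of $q$.
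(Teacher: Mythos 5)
Your proof is correct and is essentially the paper's own argument: the paper likewise justifies this fact by noting $Y = Y_m$ equals the left-hand side of \eqref{eq:main_inequality}, $Y_0 = \frac{\deg(v)}{\Delta+q} \le \frac{\Delta}{\Delta+q}$, and $Y \le Y_0 + |Y - Y_0|$. (Only a cosmetic nit: with the paper's indexing, $Y_m$ corresponds to evaluating the definition of $Y_{t-1}$ at $t = m+1$, not $t = m$, but since $t_i \le m$ the identification with the sum over $S$ holds either way.)
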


Our idea for proving that inequality \eqref{eq:main_ineq_equiv_form} holds often is to use specialized concentration inequalities for martingales, which do not require independence or explicit bounds on the positive correlation. Specifically, we will appeal to Freedman's inequality (see \Cref{thm:freedman_inequality}).
To this end, in the following two lemmas we upper bound this martingale's step size and observed variance.
\begin{lemma}[Step size]\label{lemma:bounding_step_size}
    For all times $t$, we have $|Y_{t} - Y_{t-1}| \leq A$, where $A:=  8/q$.
\end{lemma}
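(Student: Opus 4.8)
My plan is to bound the two expressions for $Y_t - Y_{t-1}$ given in Lemma~\ref{lemma:proof_y_is_martingale}, namely \eqref{eq:diff_y_case1} and \eqref{eq:diff_y_case2}, uniformly over all outcomes. In both cases the difference is (up to sign and a multiplicative prefactor) the quantity $\frac{1}{\Delta+q}\sum_{u_i \in S_t \cap e_t} \frac{1}{F_t(u_i)}$. Since $G$ is simple, the edge $e_t$ meets $v$ in at most one endpoint, so $|S_t \cap e_t| \le 1$ and the sum has at most one term. The key input is the trivial lower bound $F_t(u_i) \ge q/(4\Delta)$ from Observation~\ref{obs:Ftv_and_Pet_bounds}, which gives $\frac{1}{\Delta+q}\cdot\frac{1}{F_t(u_i)} \le \frac{1}{\Delta+q}\cdot\frac{4\Delta}{q} \le \frac{4}{q}$.

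For the case \eqref{eq:diff_y_case1} (edge matched), this already yields $|Y_t - Y_{t-1}| \le 4/q$. For the case \eqref{eq:diff_y_case2} (edge not matched), there is an extra factor $\frac{\hat P(e_t)}{1-\hat P(e_t)}$; here I would use $\hat P(e_t) \le P(e_t) \le 1/4$ (again Observation~\ref{obs:Ftv_and_Pet_bounds}), so that $\frac{\hat P(e_t)}{1-\hat P(e_t)} \le \frac{1/4}{3/4} = 1/3 \le 1$, and therefore $|Y_t - Y_{t-1}| \le 4/q$ as well. In either case, and when the difference is $0$, the bound $|Y_t - Y_{t-1}| \le 4/q \le 8/q = A$ holds. (The slack factor of $2$ presumably anticipates a version where $|S_t \cap e_t|$ could be $2$, e.g.\ if one worked with a vertex-pair analogue; keeping $8/q$ is harmless.)

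This step is essentially a routine substitution of the bounds from Observation~\ref{obs:Ftv_and_Pet_bounds}, so I do not anticipate a genuine obstacle — the only point requiring a moment's care is confirming that $S_t \cap e_t$ has at most one element, which follows from $G$ being simple together with the fact that $S_t \subseteq N(v)$, so that the only endpoint of $e_t$ that can lie in $S_t$ is the one different from $v$ (and $v \notin N(v)$). The real difficulty is deferred to the companion lemma bounding the observed variance $W_m$, where one must sum squared differences over all $\Delta^2$ non-trivial steps and exploit that each step is itself small; that is where Freedman's inequality will earn its keep.
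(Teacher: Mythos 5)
There is a genuine flaw in your justification, even though the stated bound survives. You claim $|S_t \cap e_t| \le 1$ on the grounds that $e_t$ meets $v$ in at most one endpoint. But $e_t$ is an \emph{arbitrary} arriving edge — it need not be incident to $v$ at all. The martingale $Y$ (built for a fixed vertex $v$) takes a non-trivial step whenever the arriving edge $e_t$ is incident to some still-live neighbor $u_i \in S_t$ of $v$, because such an edge changes $F_t(u_i)$ or removes $u_i$ from $S_{t+1}$; this is exactly why the martingale has up to $\Delta^2$ non-trivial steps (the two-hop neighborhood of $v$), a point you yourself invoke at the end. In particular, an edge $e_t = (u_i, u_j)$ joining two neighbors of $v$ that both lie in $S_t$ has $|S_t \cap e_t| = 2$ (the paper even illustrates this with the edge $(u_2,u_3)$ in its Figure on the two-hop neighborhood). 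Your argument ``the only endpoint of $e_t$ that can lie in $S_t$ is the one different from $v$'' silently assumes $v \in e_t$, which is false in general, so the intermediate bound $4/q$ is not established.

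The fix is exactly what the paper does: use the trivial bound $|S_t \cap e_t| \le 2$ together with $F_t(u_i) \ge q/(4\Delta)$ and $\frac{\hat P(e_t)}{1-\hat P(e_t)} \le 1$ (from $\hat P(e_t) \le P(e_t) \le 1/4$), giving
\begin{equation*}
|Y_t - Y_{t-1}| \;\le\; \frac{1}{\Delta+q}\cdot \frac{2}{q/(4\Delta)} \;\le\; \frac{8}{q}.
\end{equation*}
So the factor of $2$ is not a ``harmless slack anticipating a hypothetical variant''; it is needed in this very setting, and with it your computation coincides with the paper's proof. Everything else in your write-up (the case split via the two expressions for $Y_t - Y_{t-1}$, the use of Observation~\ref{obs:Ftv_and_Pet_bounds}) matches the paper's argument.
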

\begin{proof}
    By using the expressions for the difference $Y_t - Y_{t-1}$ from \cref{lemma:proof_y_is_martingale}, we obtain:
    \begin{align*}
        |Y_{t} - Y_{t-1}| & \leq \frac{1}{\Delta + q} \cdot \max \left \{ \frac{\hat P(e_t)}{1 - \hat P(e_t)}, 1 \right \} \cdot \sum_{u_i \in S_t \cap e_t} \frac{1}{F_{t}(u_i)} \\
        &\leq \frac{1}{\Delta + q} \cdot \sum_{u_i \in S_t \cap e_t} \frac{1}{q/(4\Delta)} \\
        & \leq \frac{1}{\Delta + q} \cdot \frac{2}{q/(4\Delta)} \\
        & \leq 8/q.
    \end{align*}
    For the second inequality, first notice that $\hat P(e_t) \leq P(e_t) \leq 1/2$ (by \cref{obs:Ftv_and_Pet_bounds}) which implies $\frac{\hat P(e_t)}{1 - \hat P(e_t)} \leq 1$. Also, we have $F_t(u_i) \geq q/(4\Delta)$ (by \cref{obs:Ftv_and_Pet_bounds}) at any point of time in the algorithm. For the third inequality we used the trivial fact that $|S_t \cap e_t| \leq 2$. \qedhere
\end{proof}

We next upper bound the observed variance $W_m$. While the following proof is computation-heavy, we note that all our following manipulations are straightforward, except for the insight that the hard lower bound $F(v)\geq q/(4\Delta)$ allows us to also bound $\sum_{e\in \delta(u)} \hat{P}(e)$.

\begin{lemma}[Observed Variance] \label{lemma:ub_variance_martingale}
    For the martingale $Y_t$ described above, we have:
    \begin{align}
    W_m := \sum_{t=1}^m \E[ (Y_t - Y_{t-1})^2 \mid X_1,\dots,X_{t-1} ] \leq \frac{128 \Delta \ln \Delta}{q^2}.
    \end{align}
\end{lemma}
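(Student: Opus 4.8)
The plan is to start from the two-case description of the martingale step given in \Cref{lemma:proof_y_is_martingale}. Introduce the shorthand $Z_t := \frac{1}{\Delta+q}\sum_{u_i \in S_t \cap e_t} \frac{1}{F_t(u_i)}$, which — like $\hat P(e_t)$ — is determined by $X_1,\dots,X_{t-1}$ thanks to \Cref{obs:fpp-det-by-M}. Then, conditioned on $X_1,\dots,X_{t-1}$, the step $Y_t-Y_{t-1}$ equals $-Z_t$ with probability $\hat P(e_t)$ (edge matched) and $\tfrac{\hat P(e_t)}{1-\hat P(e_t)}\,Z_t$ with probability $1-\hat P(e_t)$ (edge not matched), so
\[
  \E\bigl[(Y_t - Y_{t-1})^2 \mid X_1,\dots,X_{t-1}\bigr] = \hat P(e_t)\, Z_t^2 + (1-\hat P(e_t))\cdot\frac{\hat P(e_t)^2}{(1-\hat P(e_t))^2}\, Z_t^2 = \frac{\hat P(e_t)}{1-\hat P(e_t)}\cdot Z_t^2.
\]

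Next I would make each summand linear in $\hat P$. Using $\hat P(e_t)\le 1/2$ (\Cref{obs:Ftv_and_Pet_bounds}) gives $\tfrac{\hat P(e_t)}{1-\hat P(e_t)}\le 2\hat P(e_t)$, and bounding one of the two factors of $Z_t$ by $8/q$ — exactly the estimate behind \Cref{lemma:bounding_step_size}, since $F_t(u_i)\ge q/(4\Delta)$ and $|S_t\cap e_t|\le 2$ — yields $\E[(Y_t-Y_{t-1})^2\mid X_1,\dots,X_{t-1}]\le \tfrac{16}{q}\,\hat P(e_t)\,Z_t$. Summing over $t$, expanding $Z_t$, and swapping the order of summation so as to group the contributions by the neighbor $u_i\in N(v)$ they belong to, I obtain
\[
  W_m \le \frac{16}{q(\Delta+q)}\sum_{u_i\in N(v)}\;\sum_{t\,:\,u_i\in e_t,\,u_i\in S_t}\frac{\hat P(e_t)}{F_t(u_i)}.
\]

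The heart of the argument is to bound the inner sum, for each fixed neighbor $u_i$, by $O(\Delta\ln\Delta/q)$. Bounding $1/F_t(u_i)\le 4\Delta/q$ term by term reduces this to bounding $\sum \hat P(e_t)$ over the edges $e_t\ni u_i$ arriving (before $t_i$ and) while $u_i$ is still free. Here the hard lower bound is used a second time, now multiplicatively: these edges form a subset of all edges at $u_i$ — and for every edge at $u_i$ arriving after $u_i$ is matched one has $\hat P(e)=P(e)=0$ — so the partial product $\prod(1-\hat P(e))$ over this subset is at least $F(u_i)=\prod_{e\ni u_i}(1-\hat P(e))\ge q/(4\Delta)$; combined with $\hat P(e)\le \ln\frac{1}{1-\hat P(e)}$ this gives $\sum\hat P(e)\le \ln\frac{1}{F(u_i)}\le \ln\frac{4\Delta}{q} = O(\ln\Delta)$. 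Since $v$ has at most $\deg(v)\le\Delta$ neighbors, each contributing at most $\tfrac{4\Delta}{q}\cdot O(\ln\Delta)$, we get $W_m\le \tfrac{16}{q(\Delta+q)}\cdot\Delta\cdot\tfrac{4\Delta}{q}\cdot O(\ln\Delta)=O(\Delta\ln\Delta/q^2)$, and tracking the constants (with considerable slack) yields the stated $\tfrac{128\Delta\ln\Delta}{q^2}$.

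The part I expect to require the most care is this last step: one must pin down precisely which times $t$ enter the double sum, check that the change of summation order genuinely decomposes neighbor-by-neighbor (including the harmless case where a single edge $e_t$ touches two neighbors of $v$ simultaneously), and verify that the full product defining $F(u_i)$ dominates the partial product over the edges actually counted. This is exactly the point where the invariant $F\ge q/(4\Delta)$ from \Cref{obs:Ftv_and_Pet_bounds} is leveraged not merely as a floor on $Z_t$, but to control $\sum_{e\ni u_i}\hat P(e)$; everything else is routine algebra.
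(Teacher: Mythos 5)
Your proof is correct and follows essentially the same route as the paper's: bound the per-step conditional second moment by $O(\hat P(e_t)/q^2)$ using $\hat P(e_t)\le 1/2$ and $F_t(u_i)\ge q/(4\Delta)$, regroup the sum over $t$ neighbor-by-neighbor, and control $\sum_{e\in\delta(u_i)}\hat P(e)\le\ln(4\Delta/q)\le\ln\Delta$ via the hard floor on $F$, finishing with $|N(v)|\le\Delta$. The only (harmless) cosmetic difference is that you compute the conditional variance exactly as $\frac{\hat P(e_t)}{1-\hat P(e_t)}Z_t^2$ and absorb one factor of $Z_t$ into $8/q$, whereas the paper splits the squared sum via $(a+b)^2\le 2a^2+2b^2$; both land comfortably within the stated constant.
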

\begin{proof}
    Using the expression for $Y_t - Y_{t-1}$ (see \cref{lemma:proof_y_is_martingale}), we first have:
    \begin{align*}
        &\E[ (Y_t - Y_{t-1})^2 \mid X_1, X_2, \ldots, X_{t-1}] \\
        = & \hat P(e_t) \cdot \left( \sum_{u_i \in S_t \cap e_t} \frac{1}{(\Delta + q)F_t(u_i)} \right)^2 + (1 - \hat P(e_t)) \cdot \left( \sum_{u_i \in S_t \cap e_t} \frac{\hat P(e_t)}{(\Delta + q)F_t(u_i) (1- \hat P(e_t))} \right)^2.
    \end{align*}
    Note that the $\hat{P}(e_t)$'s and $F_{t}(u_i)$'s depend on the variables $X_1, X_2, \ldots, X_{t-1}$ we are conditioning on, and that we will show the bound on the observed variance in \emph{any} execution of the algorithm.
    The above sums contain either one or two terms, as $|S_t \cap e_t| \leq 2$. By using the elementary inequality $(a + b)^2 \leq 2a^2 + 2b^2$, we obtain the following upper bound:
    \begin{align*}
        &\E[ (Y_t - Y_{t-1})^2 \mid X_1, X_2, \ldots, X_{t-1}] \\
        \leq & 2 \hat P(e_t) \cdot \sum_{u_i \in S_t \cap e_t} \left( \frac{1}{(\Delta + q)F_t(u_i)} \right)^2 + 2 (1 - \hat P(e_t)) \cdot \sum_{u_i \in S_t \cap e_t} \left( \frac{\hat P(e_t)}{(\Delta + q)F_t(u_i) (1-\hat P(e_t))} \right)^2.
    \end{align*}
    The above expression can be rewritten compactly by factoring out $\frac{2 \hat P(e_t)}{(\Delta + q)^2 \cdot (F_t(u_i))^2}$, which gives:
    \begin{align*}
        \E[ (Y_t - Y_{t-1})^2 \mid X_1, X_2, \ldots, X_{t-1}] \leq \sum_{u_i \in S_t \cap e_t} \frac{2 \hat P(e_t)}{(\Delta + q)^2 \cdot (F_t(u_i))^2} \left( 1 + \frac{\hat P(e_t)}{1-\hat P(e_t)} \right).
    \end{align*}
    Using \Cref{obs:Ftv_and_Pet_bounds}, we have $\hat P(e_t) \leq P(e_t) \leq 1/2$ and thus $1 + \frac{\hat P(e_t)}{1-\hat P(e_t)} \leq 2$, but also $F_t(u_i) \geq q/(4\Delta)$ for any $u_i \in S_t \cap e_t$.
    Additionally we note that $|S_t\cap e_t| \le 2$, and so we have:
    \begin{equation*}
        \E[ (Y_t - Y_{t-1})^2 \mid X_1, X_2, \ldots, X_{t-1}] \leq \sum_{u_i \in S_t \cap e_t} \frac{2 \hat P(e_t)}{(\Delta + q)^2} \cdot \frac{16\Delta^2}{q^2} \cdot 2 \le \frac{128 \hat P(e_t)}{q^2}.
    \end{equation*}

\begin{figure}[!t]

\captionsetup{justification=centering}
    \centering
    \includegraphics{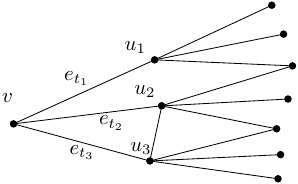}
    \caption{The configuration described in \cref{lemma:main_ineq}, i.e., the $2$-hop neighborhood of $v$. \\The (at most) $\Delta^2$ edges drawn correspond to the only non-trivial steps of the martingale.}
    \label{fig:two-neighbor}
\end{figure}
    
    By summing this inequality over all $t$ we will obtain an upper bound for $W_m$. Notice that an edge $e_t$ is thereby summed over on the right hand side only if it is incident to some vertex $u_i \in S_t$ at some time step~$t$ (see \Cref{fig:two-neighbor}). As $S_t \subseteq S_0 = N(v)$, we obtain the following upper bound, double counting edges $e_t$ that connect two distinct vertices of $S_t$ (for example edge $(u_2,u_3)$ in \Cref{fig:two-neighbor}):
    \begin{align} 
        W_m := \sum_{t = 1}^m \E[ (Y_t - Y_{t-1})^2 \mid X_1, X_2, \ldots, X_{t-1}] & \leq \sum_{u_i \in N(v)} \sum_{e_t \in \delta(u_i)} \frac{128 \hat P(e_t)}{q^2}. \label{eq:helper_eq_for_ub_for_martingale}
    \end{align}
    To upper bound the inner sum, fix a vertex $u_i \in N(v)$ and note that:
    \begin{equation*}
        \frac{q}{4\Delta} \leq F_m(u_i) = \prod_{e \in \delta(u_i)} (1-\hat P(e)) \leq \exp \left( -\sum_{e \in \delta(u_i)} \hat P(e) \right),
    \end{equation*}
    which implies that $\sum_{e \in \delta(u_i)} \hat P(e) \leq \ln \left( \frac{4\Delta}{q} \right) \leq \ln (\Delta)$ (using that $q\geq 8\sqrt{\Delta}\geq 8\geq 4$). By plugging the above into \eqref{eq:helper_eq_for_ub_for_martingale} and using the fact that $|N(v)| \leq \Delta$, we finally can conclude the proof of \cref{lemma:ub_variance_martingale}:
    \begin{align*}
        W_m := \sum_{t=1}^m \E[ (Y_t - Y_{t-1})^2 \mid X_1, X_2, \ldots, X_{t-1}] & \leq \frac{128 \Delta \ln \Delta}{q^2}. \qedhere
    \end{align*}
\end{proof}

\subsubsection{Conclusion of the analysis}
Having upper bounded both step size and variance of the martingale $Y_0,Y_1,\dots,Y_m=:Y$, we are now ready to leverage Freedman's inequality to prove that with high probability (in~$\Delta$), our desired upper bound on $|Y-Y_0|$ of inequality \eqref{eq:main_ineq_equiv_form} holds, and hence $F(v)\geq q/(3\Delta)$. 

\begin{lemma} \label{lemma:bound_on_Y}
    $\Pr[F(v)<q/(3\Delta)]\leq 2\Delta^{-3}.$
\end{lemma}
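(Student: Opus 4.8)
\textbf{Proof plan for \Cref{lemma:bound_on_Y}.}
The plan is to apply Freedman's inequality (\Cref{thm:freedman_inequality}) to the martingale $Y_0,\dots,Y_m$ constructed in \Cref{lemma:proof_y_is_martingale}, with the step-size bound $A = 8/q$ from \Cref{lemma:bounding_step_size} and the observed-variance bound $\sigma^2 = 128\Delta\ln\Delta / q^2$ from \Cref{lemma:ub_variance_martingale}. By \Cref{claim:bounding_y_diff}, it suffices to bound the probability of the bad event $|Y - Y_0| > \lambda$ with $\lambda := \frac{\Delta}{\Delta+q/2} - \frac{\Delta}{\Delta+q}$; indeed once $|Y-Y_0|\le\lambda$ we get $Y\le Y_0+\lambda\le \frac{\Delta}{\Delta+q}+\lambda=\frac{\Delta}{\Delta+q/2}$, which is exactly \eqref{eq:main_inequality}, and then \Cref{lemma:main_ineq} gives $F(v)\ge q/(3\Delta)$. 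So $\Pr[F(v)<q/(3\Delta)]\le \Pr[|Y-Y_0|>\lambda]$ and it remains to show the latter is at most $2\Delta^{-3}$.

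The first concrete step is to get a clean lower bound on $\lambda$. Writing $\lambda = \Delta\left(\frac{1}{\Delta+q/2}-\frac{1}{\Delta+q}\right) = \frac{\Delta\cdot(q/2)}{(\Delta+q/2)(\Delta+q)}$, and using $q\le\Delta/4$ from \eqref{eq:assumption} so that both $\Delta+q/2$ and $\Delta+q$ are at most $\tfrac54\Delta \le 2\Delta$, I get $\lambda \ge \frac{\Delta q/2}{4\Delta^2} = \frac{q}{8\Delta}$. (A slightly more careful bookkeeping gives a better constant, but $q/(8\Delta)$ suffices.) Then I plug $A$, $\sigma^2$, and this bound on $\lambda$ into Freedman's inequality. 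In the exponent's denominator $2(\sigma^2 + A\lambda/3)$, the term $A\lambda/3 = \frac{8}{q}\cdot\lambda/3$; since $\lambda\le \frac{q}{2\Delta}$ trivially (it is at most $\frac{\Delta\cdot q/2}{\Delta\cdot\Delta}$), we have $A\lambda/3 \le \frac{8}{q}\cdot\frac{q}{6\Delta} = \frac{4}{3\Delta}$, which is dominated by $\sigma^2 = 128\Delta\ln\Delta/q^2$ whenever $q \le O(\Delta\sqrt{\ln\Delta})$ — true by \eqref{eq:assumption}. Hence $\sigma^2 + A\lambda/3 \le 2\sigma^2 = 256\Delta\ln\Delta/q^2$ (say), so Freedman gives
\begin{equation*}
\Pr[|Y-Y_0|\ge\lambda] \le 2\exp\!\left(-\frac{\lambda^2}{4\sigma^2}\right) \le 2\exp\!\left(-\frac{(q/(8\Delta))^2}{4\cdot 128\Delta\ln\Delta/q^2}\right) = 2\exp\!\left(-\frac{q^4}{c\,\Delta^3\ln\Delta}\right)
\end{equation*}
for an absolute constant $c$ (here $c = 4\cdot 64\cdot 128$).

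The final step is to verify that the choice $q = \sqrt{200}\,\Delta^{3/4}\ln^{1/2}\Delta$ makes the exponent at least $3\ln\Delta$. We have $q^4 = 200^2\,\Delta^3\ln^2\Delta$, so $\frac{q^4}{c\,\Delta^3\ln\Delta} = \frac{200^2}{c}\ln\Delta$; it just needs to be checked that $200^2/c \ge 3$ with the constants as bounded above (the paper's remark that the value of $q$ is chosen "for reasons that will become clear in the proof of \Cref{lemma:bound_on_Y}" confirms this is exactly the calibration being done, so I would track the constants slightly more tightly than my crude bounds above — e.g. use the sharper $\sigma^2 + A\lambda/3 \le \tfrac{129}{128}\sigma^2$ and the sharper $\lambda \ge \frac{q}{8\Delta}$ or better — to land on the clean bound $2\Delta^{-3}$). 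I expect the main obstacle to be precisely this constant-chasing: making sure the slack between $\frac{\Delta}{\Delta+q/2}$ and $\frac{\Delta}{\Delta+q}$, which is only of order $q/\Delta^2$, is large enough relative to the standard deviation $\sqrt{\sigma^2}$ of order $\sqrt{\Delta\ln\Delta}/q$, which forces $q^4 \gtrsim \Delta^3\ln\Delta$ and pins down both the exponent $3/4$ on $\Delta$ and the constant $\sqrt{200}$. Everything else — the lower bound on $\lambda$, the domination of $A\lambda/3$ by $\sigma^2$ — is routine given \eqref{eq:assumption}.
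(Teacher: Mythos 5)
Your plan is the paper's proof: reduce via \Cref{lemma:main_ineq} and \Cref{claim:bounding_y_diff} to bounding $\Pr[|Y-Y_0|\geq\lambda]$, then apply Freedman's inequality (\Cref{thm:freedman_inequality}) with $A=8/q$ from \Cref{lemma:bounding_step_size} and $\sigma^2=128\Delta\ln\Delta/q^2$ from \Cref{lemma:ub_variance_martingale}; the only difference is the bookkeeping, and there your numbers as written do not reach the stated bound. With $\lambda\geq q/(8\Delta)$ and $\sigma^2+A\lambda/3\leq 2\sigma^2$ your exponent is $\frac{q^4}{32768\,\Delta^3\ln\Delta}=\frac{40000}{32768}\ln\Delta\approx 1.2\ln\Delta$, and even replacing the factor $2$ by $\tfrac{129}{128}$ (which itself needs $\Delta$ large, though that is harmless here) only raises it to $\frac{40000}{16512}\ln\Delta\approx 2.4\ln\Delta$, still short of $3\ln\Delta$. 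The sharpening that actually closes the gap is the one you waved at with ``or better'': the slack satisfies $\frac{\Delta}{\Delta+q/2}-\frac{\Delta}{\Delta+q}\geq \frac{q}{3\Delta}$, which after clearing denominators is exactly $3\Delta q+q^2\leq\Delta^2$ and follows from $q\leq\Delta/4$ in \eqref{eq:assumption}; this is what the paper proves and then takes $\lambda:=q/(3\Delta)$. With that choice the exponent becomes $\frac{200\ln\Delta}{18\left(128/200+(8/3)\Delta^{-1/2}\right)}$ (the paper even uses the cruder denominator $2(\sigma^2+A\lambda)$), which is roughly $17\ln\Delta$ asymptotically and at least $3\ln\Delta$ for all admissible $\Delta$, giving $2\Delta^{-3}$. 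So: same approach, correct skeleton, but the constant chase you deferred is not optional with your $q/(8\Delta)$ bound --- you must use the $q/(3\Delta)$ slack bound (or equivalent) to land on $2\Delta^{-3}$.
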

\begin{proof}
Let $\lambda := q/(3\Delta)$. By \Cref{claim:bounding_y_diff,lemma:main_ineq}, we have that 
\begin{align}\label{eqn:bring-Fv-forward}
    \Pr[F(v)<q/(3\Delta)] \leq \Pr\left[|Y - Y_0| \geq \frac{\Delta}{\Delta+q/2}-\frac{\Delta}{\Delta+q}\right] \leq  \Pr[|Y - Y_0| \geq \lambda].
\end{align}
Above, the second inequality used that 
\begin{equation*}
    \frac{q}{3\Delta} \leq \frac{\Delta}{\Delta + q/2} - \frac{\Delta}{\Delta + q},
\end{equation*}
which, by taking a common denominator, is seen to be equivalent to $3 \Delta q + q^2 \leq \Delta^2$, which follows directly from $q\leq \Delta / 4$ (see \eqref{eq:assumption}).

Next, to upper bound the RHS of \Cref{eqn:bring-Fv-forward}, we appeal to Freedman's inequality (\cref{thm:freedman_inequality}).
Letting $\sigma^2 := \frac{128 \Delta \ln \Delta}{q^2}$ and $A := 8/q$ (as in \cref{lemma:bounding_step_size,lemma:ub_variance_martingale}) and $\lambda$ as above, we have that:
\begin{align*} \label{eq:applying_friedman}
    \Pr[|Y - Y_0| \geq \lambda] & \leq 2 \exp\left( -\frac{\lambda^2}{2(\sigma^2 + A \lambda)}\right) \\
    & = 2 \exp\left( -\frac{(\frac{q}{3\Delta})^2}{2 \left( \frac{128 \Delta \ln \Delta}{q^2} + \frac{8}{q} \cdot \frac{q}{3\Delta} \right)}\right) \\
    & = 2 \exp \left( - \frac{200 \ln \Delta}{18 \left( 128/200 + 8/3 \cdot \Delta^{-1/2} \right)} \right) \\
    & \leq 2 \exp ( - 3 \ln \Delta) \\
    & = 2 \Delta^{-3},
\end{align*}
where for the second-to-last equality we used the definition of $q = \sqrt{200} \cdot \Delta^{3/4} \ln^{1/2}\Delta$. Combining the above with \Cref{eqn:bring-Fv-forward}, the lemma follows.
\end{proof}

To obtain the statement of \cref{thm:online_matching}, it remains to put the pieces together. With high probability in $\Delta$, the inequality from \cref{lemma:main_ineq} holds for both vertices incident at an edge $e$. Also, by a sequence of other lemmas we know that this inequality implies that $\hat P(e) = P(e)$ (with high probability in $\Delta$). Intuitively, this should guarantee the right marginal $1/(\Delta + O(q))$ of matching $e$. More formally, we now complete the proof of our main technical contribution:

\OnlineMatchingTheorem*
\begin{proof}
We prove that \cref{alg:edge-arrival} with $q$ as in this section is such an algorithm. Fix an edge $e_t = (u,v)$. The marginal probability that $e_t$ is matched by \cref{alg:edge-arrival} is given by $\Pr[X_t < \hat P(e_t)]$.
    We note that for the event $X_t<\hat P(e_t)$, we need that both $X_t<P(e_t)$ and $\hat P(e_t)=P(e_t)$ (else $\hat P(e_t)=0$ and so trivially $X_t\geq \hat P(e_t)$).
    We therefore have that \begin{align*}
    \Pr[X_t<\hat P(e_t)] & = \Pr\left[\left(X_t<P(e_t)\right) \wedge \left(\hat P(e_t) = {P}(e_t)\right)\right] \\
    & = \Pr[X_t<P(e_t)] -  \Pr\left[\left(X_t<P(e_t)\right) \wedge \left(\hat P(e_t) \neq {P}(e_t)\right)\right]
    \\
    & \geq \Pr[X_t<P(e_t)] - \Pr[\hat P(e_t)\neq P(e_t)] \\
    & \geq \Pr[X_t<P(e_t)] - (\Pr[F(v)< q/(3\Delta)]+\Pr[F(u)<q/(3\Delta)]) \\
    & \geq \frac{1}{\Delta+q} - 4\Delta^{-3} \\
    & \geq \frac{1}{\Delta + 4q}.
    \end{align*}
    Above, the second and third inequalities follow from \Cref{lemma:sufficient_cond_for_right_marginal} and \Cref{lemma:bound_on_Y} together with union bound over $u,v\in e_t$.
    The last inequality, on the other hand, 
    which is equivalent to $\frac{3q}{(\Delta+q)(\Delta+4q)}\geq 4\Delta^{-3}$, which clearly holds for large enough $\Delta$, and can be verified to hold for all $\Delta\geq 1$, using $8\sqrt{\Delta}\leq q\leq \Delta/4$, by \eqref{eq:assumption}.
\end{proof}

\paragraph{From Matching to Edge Coloring.}
We are finally ready to prove our main result, the existence of an online $\Delta(1+o(1))$-edge-coloring algorithm. To do so, it suffices to combine our online matching algorithm of \Cref{thm:online_matching} with the known reduction from online edge coloring to online matching given by \Cref{lemma:reduction2}, or alternatively, our strengthening of the latter in \cref{lemma:reduction_improved_appendix}. 
Combining these, we obtain the following quantitative result.

\begin{theorem} \label{thm:ec-body}
    There exists an online edge-coloring algorithm which, on $n$-vertex graphs of known maximum degree $\Delta = \omega(\log n)$,
    outputs with high probability (in $n$) a valid edge coloring using $\Delta + q'$ colors, where:
    \begin{equation*}
        q' = O(\Delta^{3/4}\log^{1/2}\Delta + \Delta^{2/3} \log^{1/3} n).
    \end{equation*}
\end{theorem}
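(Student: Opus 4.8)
The plan is to derive \Cref{thm:ec-body} as a black-box composition of the online matching algorithm of \Cref{thm:online_matching} with the matching-to-coloring reduction. First, I would invoke \Cref{thm:online_matching} to obtain an online matching algorithm $\calA$ that, on any graph of maximum degree $\Delta$, matches each edge with probability at least $1/(\Delta + \Theta(\Delta^{3/4}\log^{1/2}\Delta))$. Rewriting this guarantee in the form required by the reduction, $\calA$ matches each edge with probability at least $1/(\alpha\Delta)$ for $\alpha := 1 + \Theta(\Delta^{-1/4}\log^{1/2}\Delta)$; note in particular that $\alpha - 1 = o(1)$ (using $\Delta=\omega(\log n)$, or even just $\Delta\to\infty$), so that $\alpha\Delta = \Delta + o(\Delta)$.

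Next, I would feed $\calA$ into the reduction. Using the basic reduction \Cref{lemma:reduction2} already yields an online edge coloring using $(\alpha + O((\log n/\Delta)^{1/4}))\cdot\Delta = \Delta + O(\Delta^{3/4}\log^{1/2}\Delta) + O(\Delta^{3/4}\log^{1/4}n)$ colors with high probability in $n$, which is close to, but slightly weaker than, the advertised bound. To obtain the exact statement of \Cref{thm:ec-body}, I would instead invoke the strengthened reduction \Cref{lemma:reduction_improved_appendix}, whose additive slack is $O((\log n/\Delta)^{1/3})\cdot\Delta = O(\Delta^{2/3}\log^{1/3}n)$ colors. Composing, the total number of colors is $\alpha\Delta + O(\Delta^{2/3}\log^{1/3}n) = \Delta + \Theta(\Delta^{3/4}\log^{1/2}\Delta) + O(\Delta^{2/3}\log^{1/3}n) = \Delta + q'$ for $q' = O(\Delta^{3/4}\log^{1/2}\Delta + \Delta^{2/3}\log^{1/3}n)$, as claimed, and the ``with high probability in $n$'' guarantee is inherited verbatim from the reduction lemma (which is what converts the per-edge marginal guarantee of $\calA$ into a global high-probability statement about the whole coloring).

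Since this is a two-line composition of previously established results, there is essentially no mathematical obstacle here; the only points that require (minor) care are bookkeeping the exponents correctly and checking the hypotheses of the reduction. In particular, I would note that both \Cref{lemma:reduction2} and \Cref{lemma:reduction_improved_appendix} require $\Delta = \omega(\log n)$, which is precisely the hypothesis of \Cref{thm:ec-body}, so they apply directly; and that \Cref{thm:online_matching} places no lower bound on $\Delta$, so its marginal guarantee is available unconditionally. The ``hard part,'' such as it is, is simply choosing the $1/3$-exponent reduction rather than the $1/4$-exponent one in order to match the stated form of $q'$.
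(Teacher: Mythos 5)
Your proposal is correct and matches the paper's own proof essentially verbatim: both invoke \Cref{thm:online_matching} to get the per-edge matching guarantee with $\alpha = 1 + O(\Delta^{-1/4}\log^{1/2}\Delta)$ and then compose with the strengthened reduction \Cref{lemma:reduction_improved_appendix} to obtain $(\alpha + O(\Delta^{-1/3}\log^{1/3}n))\cdot\Delta = \Delta + O(\Delta^{3/4}\log^{1/2}\Delta + \Delta^{2/3}\log^{1/3}n)$ colors with high probability in $n$. Your bookkeeping of the exponents and the hypothesis check ($\Delta = \omega(\log n)$) are exactly what the paper does.
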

\begin{proof}
     By \Cref{thm:online_matching}, there exists an online matching algorithm matching each edge with probability at least $1/(\alpha\Delta)$, for $\alpha := 1 + O(\Delta^{-1/4} \log^{1/2}\Delta)$. 
     But by \cref{lemma:reduction_improved_appendix}, such an online matching algorithm can be used to obtain an $(\alpha + O(\Delta^{-1/3}\log^{1/3}n)\cdot \Delta$-coloring algorithm (assuming $\Delta = \omega(\log n)$).
     Combining  \Cref{thm:online_matching} and \cref{lemma:reduction_improved_appendix}, we therefore obtain an online edge coloring algorithm that, w.h.p.~in $n$, on graphs of degree $\Delta=\omega(\log n)$, uses the claimed number of colors:
     \begin{align*}
         (\alpha + O(\Delta^{-1/3}\log^{1/3}n))\cdot \Delta & = \Delta + O(\Delta^{3/4}\log^{1/2}\Delta + \Delta^{2/3} \log^{1/3} n). \qedhere 
     \end{align*}
\end{proof}

\appendix
\section*{APPENDIX}
\section{Online List Edge Coloring and Local Edge Coloring}
\label{sec:matching_peeling}

In this section we present a number of applications that follow from our \Cref{alg:edge-arrival}. In particular, we show generalizations to
\emph{online list edge coloring} (\cref{thm:list_edge_coloring}, in \Cref{sec:list-coloring}) and  \emph{online local edge coloring} (\cref{thm:local_edge_coloring_strenghtened}, in \Cref{sec:local-coloring}).

In the \underline{online \emph{list} edge coloring problem}, every online arriving edge $e$ presents a list $L(e)\subseteq \mathbb{N}$ of colors that can be used for coloring $e$. Unlike in the classical setting, this list $L(e)$ is not necessarily of the form $\{1,\dots,\Delta + q\}$ for some $q\geq 1$, but can instead be arbitrary. The objective of an algorithm is to provide a valid coloring of all edges with each edge $e$ assigned color in its list, $c(e)\in L(e)$, and no vertex having two incident edges assigned the same color. Our result for this problem is given by:
\begin{restatable}[Online List Edge Coloring]{theorem}{ListColoringTheorem} \label{thm:list_edge_coloring}
    There exists an online list edge-coloring algorithm which, on $n$-vertex graphs of known maximum degree $\Delta$,
    outputs with high probability (in $n$) a valid list-edge coloring, provided all lists $L(e)$ satisfy $|L(e)| \geq \Delta + q$, for $$q := 10^{24} \ln n + 10^4 \cdot \left( \Delta^{3/4} \ln^{1/2} \Delta + \Delta^{2/3} \ln^{1/3} n \right).$$
    In particular, if $\Delta = \omega(\log n)$, then lists of size $\Delta+q$ with $q = o(\Delta)$ suffice.
\end{restatable}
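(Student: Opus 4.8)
The plan is to derive \Cref{thm:list_edge_coloring} by feeding the online matching algorithm of \Cref{thm:online_matching} into a list-coloring analogue of the matching-to-edge-coloring reduction (\Cref{lemma:reduction2}, or rather its sharpened form \Cref{lemma:reduction_improved_appendix}), in the spirit of \cite{cohen2019tight}. Concretely, I would first prove a statement of the form: \emph{any} online matching algorithm that, on $n$-vertex graphs of known maximum degree $d = \omega(\log n)$, matches each edge with probability at least $1/(\alpha d)$ can be converted into an online list edge coloring algorithm that succeeds with high probability in $n$ whenever every list satisfies $|L(e)| \geq \alpha\Delta + O(\Delta^{2/3}\log^{1/3} n) + O(\log n)$. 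Plugging in $\alpha = 1 + O(\Delta^{-1/4}\log^{1/2}\Delta)$ from \Cref{thm:online_matching} then turns this into the stated bound on $q$, exactly as \Cref{thm:ec-body} follows in the non-list case.

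For the reduction itself I would use a \textbf{round-based peeling} argument built on a ``one matching instance per color'' template. Conceptually, run for each color $c$ a private copy $\mathcal{M}_c$ of the online matching algorithm; when an edge $e$ arrives, route it to a carefully chosen set $S(e)$ of colors, and if some $\mathcal{M}_c$ with $c \in S(e)$ matches $e$, commit $e$ to color $c$ (say the smallest such $c$). Properness is automatic regardless of the routing: if two adjacent edges both received color $c$, they would both lie in the matching produced by $\mathcal{M}_c$, a contradiction. The work is in designing the routing over $O(\log\Delta)$ rounds so that (i) the subgraph handed to each instance $\mathcal{M}_c$ in round $i$ has maximum degree $\Delta_i = \omega(\log n)$, so \Cref{thm:online_matching} applies with the intended slack $\Theta(\Delta_i^{3/4}\log^{1/2}\Delta_i)$; (ii) within round $i$ the residual degree of every vertex in the still-uncolored graph shrinks by a $(1-o(1))$ factor, so $\Delta_i$ decreases quickly and $O(\log\Delta)$ rounds suffice; and (iii) the residual list $L_i(e)$ of a surviving edge $e=(u,v)$ --- namely $L(e)$ minus the colors committed in earlier rounds to edges adjacent to $e$ --- retains size at least (residual degree of $e$) plus the slack needed to keep the recursion alive. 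Once $\Delta_i$ drops below $\mathrm{poly}\log n$ one stops the recursion and finishes the remaining sparse subgraph by a direct (e.g.\ greedy-type) argument, valid because the residual lists still have size $\Omega(\log n) \geq 2\cdot(\text{residual degree})$; this last step is exactly what forces the additive $10^{24}\ln n$ term into $q$, and also handles the regime $\Delta = \Theta(\log n)$, where the peeling cannot be iterated and instead $q = \Omega(\Delta)$ makes the whole instance ``easy''.

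The hard part, and the place where the extra $\Delta^{2/3}\log^{1/3} n$ and $\log n$ slack is spent, is maintaining invariant (iii). The naive loss to $L_i(e)$ per round is up to $2\Delta_i$ (colors consumed by edges at $u$ plus those at $v$), whereas the total color budget only allows a loss of $q = o(\Delta)$; so one must show that the color sets consumed around $u$ and around $v$ in each round overlap substantially, making their union close to $\max(\deg u, \deg v)$ rather than the sum. This is where the routing must be ``balanced'': restricting each round to a dedicated color band $B_i$ and having the edges at each vertex draw colors essentially uniformly within $B_i$, so that consumed-color sets at adjacent vertices are near-identical; quantifying the overlap and bounding its fluctuation needs concentration --- Chernoff--Hoeffding where the relevant quantities decompose into near-independent pieces, and Freedman's inequality (\Cref{thm:freedman_inequality}) exactly as in \Cref{sec:analysis} wherever genuine correlations across the band appear --- followed by a union bound over the $n$ vertices and the $O(\log\Delta)$ rounds (hence the $\log n$-type terms). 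Finally I would assemble everything: verify (i)--(iii) by induction on the round index, conclude that every edge is colored with high probability in $n$, and substitute $\alpha = 1 + O(\Delta^{-1/4}\log^{1/2}\Delta)$ to recover the claimed value of $q$, with the promised consequence that $q = o(\Delta)$ once $\Delta = \omega(\log n)$.
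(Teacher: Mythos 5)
Your high-level architecture matches the paper's: one copy of the matching algorithm of \Cref{thm:online_matching} per color, iterative peeling of the uncolored graph, and a final greedy phase once the residual degree is $O(\log n)$ (this is exactly \Cref{alg:reducing_max_degree_generic,alg:coloring_generic} and \Cref{thm:main_coloring}). However, two concrete steps of your plan fail quantitatively. First, the round schedule: with independent per-color instances, an edge whose round-$i$ routing set has $s=\beta\Delta_i$ colors is colored with probability about $1-(1-\tfrac{1}{\Delta_i+q_i})^{s}\approx 1-e^{-\beta}$, so reducing each dense vertex's degree by a constant (let alone $1-o(1)$) fraction in one round consumes $\beta\Delta_i$ colors from every incident list while only coloring $\approx(1-e^{-\beta})\Delta_i<\beta\Delta_i$ edges; the per-round waste is $\Theta(\beta^2\Delta_i)$, and summing over a geometric schedule gives total list consumption $(1+\Omega(\beta))\Delta$, not $(1+o(1))\Delta$. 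So ``$O(\log\Delta)$ rounds with multiplicative shrinkage'' cannot reach the claimed $q$; you are forced to take $\beta_i=o(1)$, i.e.\ additive decrements $\lambda_i=o(d_i)$ and $\omega(1)$ phases, and the stated slack comes precisely from optimizing this trade-off: the paper takes $\lambda_i=d_i^{2/3}\ln^{1/3}n$, so there are $\Theta((d_i/\ln n)^{1/3})$ phases (\Cref{lemma:bounding_num_phases}), each costing $O(\lambda_i q_i/d_i+\lambda_i^2/d_i+\sqrt{\lambda_i\ln n})$ extra colors, which is exactly where the $\Delta^{3/4}\ln^{1/2}\Delta$ and $\Delta^{2/3}\ln^{1/3}n$ terms in $q$ arise (\Cref{lemma:bounding_slacks}).

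Second, the list-specific difficulty is not addressed by ``dedicated color bands'': the lists $L(e)$ are arbitrary subsets of $\mathbb{N}$, so a fixed band $B_i$ may intersect some edge's list in far fewer (or far more) than $\lambda_i$ colors, and invariant (i)/(iii) breaks. The paper's device is to \emph{randomly partition the color universe online}: after pruning every residual list to length exactly $d_i+a_i$, each newly seen color is assigned to phase $i$ with probability $p_i=(\lambda_i+5\sqrt{\lambda_i\ln n})/(d_i+a_i)$, and a per-edge Chernoff bound plus a union bound (\Cref{lemma:helper_lemma_list_edge_coloring}) shows every $\ell^i(e)=L^i(e)\cap C_i$ has size in $[\lambda_i,\lambda_i+10\sqrt{\lambda_i\ln n}]$, which is the admissibility needed by \Cref{thm:reducing_max_degree_generic}. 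Your substitute---arguing that the colors \emph{actually committed} around $u$ and around $v$ overlap, so that only those are deleted from $L(e)$---also hides a reuse assumption that is not sound in this template: a color $c$ whose instance $\mathcal{M}_c$ already saw $e$ (and declined it) cannot be offered to $e$ again, and launching a second instance for $c$ would destroy properness; so an unsuccessful edge effectively forfeits its whole band intersection each round, which is the accounting the paper uses and which makes your overlap argument unnecessary. (Minor point: the degree-reduction concentration in the paper is a negative-association/balls-and-bins Chernoff bound, not Freedman; Freedman is only used inside \Cref{alg:edge-arrival} itself.) With the schedule replaced by the paper's additive $\lambda_i$-decrements and the random online color partitioning added, your outline becomes essentially the paper's proof.
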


In the \underline{online \emph{local} edge coloring problem}, the setting is the same as in the classical version. However, we now aim to color each edge $e = (u,v)$ using a color of index $c(e)$ that is not much larger than the max degree of its endpoints, $d_{\max}(e):=\max\{\deg(u), \deg(v)\}$.
Here we prove the following result:
\begin{restatable}[Online Local Edge Coloring]{theorem}{LocalColoringTheorem}
 \label{thm:local_edge_coloring_strenghtened}
    There exists an online edge-coloring algorithm which, on $n$-vertex graphs with \emph{a priori} known degree sequence $\{\deg(v)\mid v\in V\}$,\footnote{This theorem also holds if $\deg(v)$ are only \emph{upper bounds} for the true degrees, in which case the guarantees of the theorem will be with respect to these upper bounds.} computes, with high probability (in $n$), an edge coloring $c : E \to \mathbb{N}$ that colors each edge $e$ using a color $c(e)$ which satisfies:
    \begin{align*}
        c(e) \leq d_\text{max}(e) + 2 \cdot 10^{24} \ln n + 10^5 \cdot \left( d_{\text{max}}^{3/4}(e) \ln^{1/2} d_{\text{max}}(e) + d_{\text{max}}^{2/3}(e) \ln^{1/3} n \right).
    \end{align*}
    In particular, if $d_{\text{max}}(e) = \omega(\log n)$, then $c(e) \leq d_{\max}(e)\cdot (1+ o(1))$.
\end{restatable}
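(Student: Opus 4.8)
The plan is to combine \Cref{thm:online_matching} --- or rather a \emph{local} version of \Cref{alg:edge-arrival}, matching each edge $e$ with probability at least $1/(d_{\max}(e)+q(d_{\max}(e)))$ --- with a local analogue of the matching-to-coloring reduction behind \Cref{lemma:reduction2}. Two observations make locality possible. First, because the degree sequence is known in advance, when $e=(u,v)$ arrives the algorithm knows $\deg(u),\deg(v)$, and since it has recorded the colors already assigned it knows at all times each vertex $w$'s number of still-uncolored incident edges $\tilde d(w)$ (seen-but-uncolored ones plus those still to arrive, counted via $\deg(w)$); this quantity is at most $\deg(w)\le d_{\max}(e)$ for any uncolored $e\ni w$, and it drops by one each time an incident edge of $w$ is colored. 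Second --- the structural fact driving the reduction --- for any edge $e=(u,v)$ and at any time, $\max\{\tilde d(u),\tilde d(v)\}\le d_{\max}(e)$, so a ``local budget'' for $e$ never needs to exceed $d_{\max}(e)$.

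The coloring then uses color classes $1,2,\dots$ in order. For class $c$, run a single instance of the local matching algorithm on the subgraph $H_c$ of currently-uncolored edges $e$ with $d_{\max}(e)\ge c-s$ (for a slack $s=o(\Delta)$, so that $e$ stays active for roughly its first $d_{\max}(e)$ classes), where each edge $e=(u,v)$ is handled with the budget $\max\{\tilde d(u),\tilde d(v)\}\le d_{\max}(e)$ in place of the global $\Delta$; color the resulting matching with color $c$. In each class where $e$ is active and still uncolored it is colored with probability at least $1/(d_{\max}(e)+q)$, and more sharply this probability grows once many incident edges of $e$ have been colored and its budget has dropped. Telescoping this exactly as in the analysis of \Cref{lemma:reduction2} shows that $e$ is colored within its first $d_{\max}(e)+O(\text{lower-order})$ classes with high probability in $d_{\max}(e)$.

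I expect the main obstacle to be carrying out the analysis of the \emph{local} matching algorithm: one must re-run the argument of \Cref{sec:analysis} with $\Delta$ replaced throughout by local degree quantities --- re-establishing a hard floor $F_t(v)\gtrsim q(\deg v)/\deg(v)$, the ensuing bound $\sum_{e\ni v}\hat P(e)\le\ln\deg(v)$, and then the martingale $Y_t$ governing $F_t(v)$ with step size and observed variance expressed via $\deg(v)$ rather than $\Delta$ --- and then invoke Freedman's inequality (\Cref{thm:freedman_inequality}) to obtain, for a single vertex in a single class, a failure probability small enough to survive a union bound over the $\mathrm{poly}(n)$ many (vertex, color-class) pairs. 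This union bound is what introduces the additive $\ln n$-type terms and the hypothesis $d_{\max}(e)=\omega(\log n)$, under which the per-edge guarantee becomes high-probability in $n$.

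Two further points are needed but are routine. (i) The marginals --- that $e$ is matched with probability at least $1/(\text{budget}+q)$ --- follow as in \Cref{subsec:analysis_part1}, by conditioning on the randomness outside the $1$-neighborhood of $e$. (ii) The low-degree tail: edges with $d_{\max}(e)=O(\log n)$ enjoy no $o(1)$-concentration, but the subgraph they form has maximum degree $O(\log n)$, so it can be colored greedily with $2\cdot 10^{24}\ln n$ colors, which is exactly the additive $\ln n$ slack in the statement; since a vertex incident to such an edge has degree $O(\log n)$, these colors block only $O(\log n)$ colors at any vertex, which is absorbed either by widening each edge's color window by an additive $O(\log n)$ or by handing the matching-peeling a size-$O(\log n)$ per-vertex forbidden set. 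Assembling the per-edge w.h.p.-in-$d_{\max}(e)$ guarantees (those with $d_{\max}(e)=\omega(\log n)$ being automatically w.h.p.\ in $n$) together with the greedy tail yields the stated explicit bound.
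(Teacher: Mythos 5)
There is a genuine gap, and it is in the step you treat as routine: the claim that ``telescoping exactly as in the analysis of \Cref{lemma:reduction2} shows that $e$ is colored within its first $d_{\max}(e)+O(\text{lower-order})$ classes with high probability in $d_{\max}(e)$.'' This is false for your architecture, because you key an edge's participation (and its budget) to its \emph{own} $d_{\max}(e)$, while the telescoping requires the \emph{endpoints'} residual degrees to drop. Consider $e=(u,v)$ with $\deg(u)=\deg(v)=D$, where every other neighbor of $u$ and of $v$ has degree $D^{2}$. All other edges at $u,v$ then carry budget $\approx D^{2}$ and are matched in each class with probability only $\approx 1/D^{2}$, so during the $\approx D$ classes of $e$'s activation window the residual degrees $\tilde d(u),\tilde d(v)$ drop by just $O(\log n)$ w.h.p.; hence $e$'s per-class coloring probability stays $\approx 1/(D+q)$, and $e$ survives its entire window uncolored with probability $\approx e^{-1}$ --- a constant, not ``w.h.p.\ in $d_{\max}(e)$,'' let alone in $n$. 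Even in the favorable regime where the budget does shrink by one per class, the survival probability is about $\exp\bigl(-\sum_{j\le D}\tfrac{1}{j+q}\bigr)\approx q/D\approx D^{-1/4}$, which cannot survive a union bound over the up to $n^{2}$ edges. Your only fallback is a greedy stage for edges with $d_{\max}(e)=O(\log n)$, so a surviving high-degree edge has no color within its stated budget, and you give no argument that one exists; note also that any attempt to fix this by letting $e$ keep trying in later classes immediately violates the local bound.

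This is precisely the difficulty the paper's construction is built to avoid, and it does so without any ``local'' variant of \Cref{alg:edge-arrival}: it reuses the global matching algorithm as a black box inside \Cref{alg:reducing_max_degree_generic}. In \Cref{alg:coloring_generic} an edge participates in phase $i$ whenever one of its endpoints is \emph{dense in the current uncolored graph} $U_i$, irrespective of the edge's own $d_{\max}$; consequently \Cref{thm:reducing_max_degree_generic} (a Chernoff bound via negative association over the $\approx\lambda_i$ colors of a phase, with failure probability $n^{-15}$) drives \emph{every} vertex's uncolored degree below $d_{i+1}$ in lockstep, which is a per-vertex, not per-edge, high-probability statement. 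Locality of the color bound is then enforced structurally, not probabilistically: phase $i$ is assigned the descending interval $C^i=\{d_i+a_i-\lambda_i+1,\dots,d_i+a_i\}$ and edge $e$ the list $\{1,\dots,\approx d_{i(e)}+a_{i(e)}\}$, so whatever color $e$ receives while it participates is automatically at most $d_{\max}(e)$ plus lower-order terms (via the bounds relating $d_{i(e)}$ to $d_{\max}(e)$ and the slack bound of \Cref{lemma:bounding_slacks}); and \emph{all} leftover edges, of every degree, are finished greedily on the final uncolored graph, whose maximum degree is $O(\log n)$ w.h.p., using only colors $\{1,\dots,O(\log n)\}$, which is within every edge's budget. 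The combination of endpoint-denseness-driven participation, descending color intervals, and a universal low-color greedy finish is the missing idea; without it, the per-edge w.h.p.\ coloring claim at the heart of your plan does not hold.
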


To obtain both generalizations of \Cref{thm:edge-coloring} above, we rely on an online coloring subroutine, \cref{alg:reducing_max_degree_generic}, which we provide in the following section.
This algorithm colors a (potentially strict) subset of the edges of the graph, and assigns each colored edge $e$ a color from some individual (small) list of colors $\ell(e)$, revealed when $e$ arrives. 
Under the mild technical condition (which holds with high probability for our invocations of this algorithm in later sections) that $|\ell(e)|\approx \lambda$ for some appropriate parameter $\lambda$, the subroutine reduces the maximum degree $\Delta$ of the remaining uncolored subgraph by $\approx \lambda$. 
In other words, for all high-degree vertices in $U$ (those with degree $\approx \Delta$), the algorithm colors $\approx \lambda$ of their incident edges. This subroutine is heavily inspired by \cite[Algorithm~4]{cohen2019tight}, which uses (instead of our \Cref{alg:edge-arrival}) an older online matching rounding algorithm, called MARKING \cite{cohen2018randomized, wajc2020matching}, for sampling matchings. 
Unlike our subroutine which works under edge arrivals, the previous \cite[Algorithm~4]{cohen2019tight} is restricted to one-sided vertex arrivals, because MARKING is also restricted to this regime.
Moreover, our algorithm's more general ability to color from lists $\ell(e)$ for each edge $e$ allows us to obtain both the list edge coloring result and local edge coloring results, which could not be obtained from \cite[Algorithm~4]{cohen2019tight}.

In Section \ref{subsec:generic_coloring}, we present \cref{alg:coloring_generic}, which applies \cref{alg:reducing_max_degree_generic} successively on monotonically decreasing subgraphs of $G$ to color edges. 
This coloring algorithm is very flexible, in the sense that every arriving edge $e$ presents a personalized list $L(e)$ of colors that can be used to color it. 
This algorithm forms the underpinning of both theorems above, proven in later sections.

\subsection{Subroutine to Reduce the Maximum Degree}

In this section we provide an algorithmic subroutine for reducing the uncolored maximum degree. This will be used for our list and local edge coloring results, as well as our improvement on the reduction from edge coloring to matching of \Cref{lemma:reduction2}.

The psueocode of this algorithmic suroutine is given in \Cref{alg:reducing_max_degree_generic}, and it works as follows.
Its input is a graph $U$ with $n$ vertices and maximum degree upper bounded by some $\Delta(U)$ arriving online edge-by-edge. (We think of $U$ as the \emph{uncolored} subgraph of an initial graph $G$ on the same vertex set.) Let $\lambda := \Delta(U)^{2/3} \ln^{1/3} n$. We call a vertex $v \in V$ \emph{dense} if $\deg_U(v) \geq \Delta(U) - \lambda$.
For any arriving edge $e$, the algorithm receives a list $\ell(e)$ of available colors. If $e$ is incident to a dense vertex $v$, the list $\ell(e)$ is guaranteed to have size $|\ell(e)|\approx \lambda$. 
For each new color revealed, we run a copy of \Cref{alg:edge-arrival}, guaranteeing that each edge is matched (and hence) colored by each of the colors $c\in \ell(e)$ with probability $\approx \frac{1}{\Delta}$.

\begin{figure}[h!]
\begin{center}
\begin{minipage}{0.95\textwidth}
\begin{mdframed}[hidealllines=true, backgroundcolor=gray!15]

\begin{algorithm}[Coloring Algorithm Reducing Maximum Degree] \ \\

\item \textbf{Input:} Graph $U$ with $n$ vertices, arriving edge-by-edge. Each edge $e$ arrives with list of colors $\ell(e)$. 

\textbf{Promise:} A value $\Delta(U)$ upper bounding the maximum degree of $U$ is given. 

\textbf{Promise:} For any edge $e$ incident to a dense vertex, we have  $\lambda\leq |\ell(e)| \leq \lambda + 10 \sqrt{\lambda \ln n}$, for $\lambda := \Delta(U)^{2/3} \ln^{1/3} n$.
\item \textbf{Output}: Coloring of a subset of edges of $U$.

\item When edge $e$ together with list $\ell(e)$ arrives:
\item Iterate through $c \in \ell(e)$:
\begin{itemize}
\item If $c$ was never seen before, launch a new instance of \Cref{alg:edge-arrival} corresponding to $c$ on $U$, using $\Delta(U)$ as an upper bound for the maximum degree of $U$.
\item Input the edge $e$ to the instance of \Cref{alg:edge-arrival} corresponding to $c$.
\item If $e$ was matched by \cref{alg:edge-arrival} and was not already colored, color $e$ with color $c$.
\end{itemize}
\label{alg:reducing_max_degree_generic}
\end{algorithm}
\end{mdframed}
\end{minipage}
\end{center}
\end{figure}

As we now show, if the maximum degree of $U$ is sufficiently large(r than $\ln n$), the above algorithm guarantees with high probability that dense vertices have $\approx \lambda$ edges colored by this algorithm, and therefore the maximum degree of the remaining uncolored graph decreases by roughly $\lambda$ as well.

\begin{theorem} \label{thm:reducing_max_degree_generic}
    Every edge $e$ colored by \Cref{alg:reducing_max_degree_generic} 
    is assigned a color $c \in \ell(e)$. 
    Moreover, after running \Cref{alg:reducing_max_degree_generic} on $U$ with $\Delta(U)\geq 10^{24}\ln n$, and using \Cref{alg:edge-arrival} with $q := 10^2 \cdot \Delta(U)^{3/4} \ln^{1/2} \Delta(U)$, with the necessary promises, the maximum degree of the remaining uncolored subgraph $U'\subseteq U$ is upper bounded by $\Delta(U') := \Delta(U) - \lambda + 2 \lambda \cdot \frac{q + \lambda}{\Delta + q} + 6 \sqrt{\lambda \ln n} \leq \Delta(U)-0.9\lambda$ with probability at least $1 - \frac{1}{n^{15}}$.
\end{theorem}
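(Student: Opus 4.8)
The plan is to analyze a fixed dense vertex $v$ (with $\deg_U(v)\geq \Delta(U)-\lambda$) and show that with very high probability $v$ gets roughly $\lambda$ of its incident edges colored by \Cref{alg:reducing_max_degree_generic}, so that its residual uncolored degree drops to at most $\Delta(U')$; a union bound over the $\leq n$ vertices then yields the claimed $1-n^{-15}$ guarantee. First I would set up the bookkeeping: for each edge $e\ni v$ and each color $c\in\ell(e)$, the instance of \Cref{alg:edge-arrival} associated with $c$ matches $e$ with probability exactly $1/(\Delta(U)+q)$ if $\hat P=P$ throughout, and by \Cref{thm:online_matching} with the stated $q$ this holds with probability $1-O(\Delta^{-3})$ per edge; so across all $\leq \Delta n$ relevant (edge, color) pairs, with probability $1-n^{-20}$ (say) every invoked copy of \Cref{alg:edge-arrival} behaves ideally, matching each fed edge with probability $1/(\Delta(U)+q)$. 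Condition on this good event. The number of edges of $v$ that get colored is at least $\#\{e\ni v : e \text{ matched in its color instance}\}$ minus a correction for edges matched by \emph{more than one} color instance (double-counting) and for those $\leq \Delta-\deg_U(v)\leq \lambda$ slots where $v$ is ``missing'' edges; so I would lower bound the colored-degree of $v$ by $\big(\sum_{e\ni v}\mathbb 1[e\text{ colored}]\big)$ and correspondingly upper bound the residual degree.

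Second, the concentration step. For a dense vertex $v$, each incident edge $e$ presents a list $\ell(e)$ with $\lambda\leq|\ell(e)|\leq\lambda+10\sqrt{\lambda\ln n}$, and $e$ is colored iff at least one of its $|\ell(e)|$ color-instances matches it. Using that (conditioned on the good event) each color-instance matches $e$ with probability $1/(\Delta(U)+q)$, the probability $e$ is colored is $\approx 1-(1-\tfrac{1}{\Delta(U)+q})^{|\ell(e)|}\approx \tfrac{|\ell(e)|}{\Delta(U)+q}$, up to a second-order term of size $O(\lambda^2/\Delta^2)$; summed over the $\deg_U(v)\approx\Delta(U)$ incident edges, the expected colored-degree of $v$ is $\approx \lambda\cdot\frac{\Delta(U)}{\Delta(U)+q} = \lambda\big(1-\frac{q}{\Delta(U)+q}\big)$, so the expected residual degree is $\approx\Delta(U)-\lambda+\lambda\frac{q}{\Delta(U)+q}+(\text{second order }\lambda^2/\Delta)$, matching the deterministic part of $\Delta(U')$. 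The extra $2\lambda\cdot\frac{q+\lambda}{\Delta+q}$ absorbs both the $\lambda q/(\Delta+q)$ drift and the $O(\lambda^2/\Delta)$ double-counting/second-order terms, and the $+6\sqrt{\lambda\ln n}$ is the deviation term. To get concentration I cannot just use independence across colors/edges, because the same \Cref{alg:edge-arrival} instance handles many edges of $v$ with genuine correlations; so again I would exhibit the colored-degree (or rather the count $\sum_{e\ni v}\mathbb 1[e\text{ colored}]$, tracked over the arrival sequence) as a martingale in the random bits $X_t$ of all invoked instances, bound its step size by $O(1)$ (one arriving edge changes $v$'s colored-degree by at most $1$) and its observed variance by $O(\lambda)$ (at most $\approx\lambda$ incident edges, each contributing variance $O(1)$, actually $O(\lambda/\Delta)$ per edge if one is careful, giving variance $O(\lambda\cdot\lambda/\Delta)=O(\lambda^2/\Delta)\leq O(\lambda)$), and invoke Freedman's inequality (\Cref{thm:freedman_inequality}) with $\lambda_{\text{Freedman}}\asymp\sqrt{\lambda\ln n}$ to get a failure probability of $n^{-\Omega(1)}$, tuned to beat the union bound with room to spare.

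Third, I would combine: on the intersection of the ``all instances ideal'' event and the per-vertex concentration events (union bound over $\leq n$ dense vertices, each failing with probability $\leq n^{-16}$, plus the $n^{-20}$ for ideal instances), every dense vertex has residual uncolored degree at most $\Delta(U)-\lambda+2\lambda\frac{q+\lambda}{\Delta+q}+6\sqrt{\lambda\ln n}=:\Delta(U')$, while non-dense vertices trivially already have degree $<\Delta(U)-\lambda<\Delta(U')$; hence $\Delta(U')$ upper bounds the maximum degree of $U'$ with probability $\geq 1-n^{-15}$. The final inequality $\Delta(U')\leq\Delta(U)-0.9\lambda$ is a routine computation: with $q=10^2\Delta(U)^{3/4}\ln^{1/2}\Delta(U)$, $\lambda=\Delta(U)^{2/3}\ln^{1/3}n$ and $\Delta(U)\geq 10^{24}\ln n$, one checks $2\lambda\frac{q+\lambda}{\Delta+q}=O(\lambda\cdot\Delta^{-1/4}\mathrm{polylog})=o(\lambda)$ and $6\sqrt{\lambda\ln n}=O(\Delta^{1/3}\ln^{2/3}n)=o(\lambda)$, so the total lower-order correction is well below $0.1\lambda$. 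The statement that every colored edge $e$ receives a color $c\in\ell(e)$ is immediate from the algorithm, since $e$ is only ever colored with a color drawn from $\ell(e)$.

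The main obstacle I expect is the martingale/variance bound for the per-vertex colored-degree count: one must set up a single martingale that simultaneously accounts for \emph{all} the color-instances of \Cref{alg:edge-arrival} touching edges at $v$ and their interdependence (in particular that conditioning on the ``ideal'' event still leaves a genuine martingale, or handling the conditioning by a union-bound split rather than a conditional-measure argument), and must carefully control the second-order terms ($\lambda^2/\Delta$ from $1-(1-p)^k\neq kp$, and the double-counting of edges matched by two colors) so that they are cleanly absorbed into the $2\lambda\frac{q+\lambda}{\Delta+q}$ slack rather than polluting the deviation term. Everything else — the ``all instances ideal'' event via \Cref{thm:online_matching}, the expectation computation, and the final numeric inequality — is routine given the earlier results.
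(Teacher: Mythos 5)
Your skeleton (fix a dense vertex $v$, show its colored degree concentrates around $\mu=(d-\lambda)\frac{\lambda}{d+q}(1-\frac{\lambda}{d+q})$ up to a $6\sqrt{\lambda\ln n}$ deviation, union bound over vertices, then finish with the numeric inequalities of \cref{lemma:aux_lemma}) matches the paper, but two steps of your argument have genuine gaps. First, the conditioning on an ``all instances ideal'' event does not work as stated. ``Matches each fed edge with probability $1/(\Delta+q)$'' is a distributional statement, not an event; if you instead mean the event $\{\hat P(e)=P(e)$ for all (edge,color) pairs$\}$, its per-pair failure probability is only $O(\Delta^{-3})$, and since $\Delta$ may be as small as $\mathrm{polylog}(n)$ while there are $\Omega(n)$ pairs, the union bound gives nothing like $1-n^{-20}$ --- it can exceed $1$. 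Moreover, even granting the event, conditioning on it distorts the per-color marginals, so the ``exactly $1/(\Delta+q)$ conditioned on the good event'' claim does not follow. The paper needs none of this: it invokes the \emph{unconditional} marginal guarantee of \cref{thm:online_matching}, namely that each copy of \cref{alg:edge-arrival} (run with $q'=\sqrt{200}\,d^{3/4}\ln^{1/2}d$, $4q'\le q$) matches each edge with probability at least $1/(d+4q')\ge 1/(d+q)$, the $\hat P\neq P$ failure already being folded into that bound.

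Second, the concentration mechanism you propose is left as an unproven sketch, and it is precisely where the paper's key idea lives. The paper observes that, for a fixed dense vertex $v$, each color instance matches at most one edge of $\delta(v)$ and distinct color instances use independent randomness; viewing colors as balls and edges of $v$ (plus a dummy) as bins, the indicators $X_e=\mathbb{1}[e\text{ colored}]$ are negatively associated, so a standard Chernoff bound with $\varepsilon=\sqrt{30\ln n/\mu}$ gives the $n^{-15}$ guarantee directly --- no martingale, no Freedman. Your alternative (a Doob-type martingale for $\sum_{e\ni v}X_e$ over the arrival sequence) is not obviously wrong, but the step-size $O(1)$ and observed-variance $O(\lambda)$ bounds are asserted rather than proved: arrivals of edges \emph{not} incident to $v$ also move the conditional expectation through all the color instances, and since the required deviation $6\sqrt{\lambda\ln n}$ is far below the $\sqrt{\Delta\ln n}$ that a worst-case step-size argument yields, the variance bound is essential and not routine; you also never resolve how this martingale interacts with your (flawed) conditioning. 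Incidentally, the ``double-counting correction'' you budget for is unnecessary --- an edge matched by several instances is colored once and $X_e$ is an indicator --- and no such term appears in the paper; the slack $2\lambda\frac{q+\lambda}{d+q}$ is consumed entirely by the $q$-drift and the second-order $(\lambda/(d+q))^2$ term in $1-(1-\frac{1}{d+q})^{\lambda}$.
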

Before proving the above theorem, it will be convenient to present some useful inequalities in a separate lemma, whose technical proof (which are easy to verify hold asymptotically, but also hold for all choices of $n\geq 2$ due to the large constants chosen) is deferred to the end of the section:
\begin{fact} \label{lemma:aux_lemma}
    Suppose $\Delta(U) \geq 10^{24} \ln n$ and define the following parameters: 
    \begin{align*} 
    d & := \Delta(U),\\
    q & := 10^2 \cdot d^{3/4} \ln^{1/2} d \\
    \lambda & := d^{2/3} \ln^{1/3} n \\
    \mu & := (d - \lambda) \cdot \frac{\lambda}{d + q} \left( 1 - \frac{\lambda}{d + q} \right).
    \end{align*} 
    Then, these parameters satisfy the following inequalities:
    \begin{align}
        &\mu \geq \lambda - 2 \lambda \cdot \frac{q + \lambda}{d + q} \label{ineq:aux_1} \\
        & 0.1 \lambda \geq 2 \lambda \cdot \frac{q + \lambda}{d + q} + 6\sqrt{\lambda \ln n} \label{ineq:aux_2} \\
        & \mu \geq 30\ln n. \label{ineq:aux_3}
    \end{align}
\end{fact}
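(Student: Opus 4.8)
The plan is to first record a handful of crude size estimates that follow from the single hypothesis $d := \Delta(U) \geq 10^{24}\ln n$ (together with $n \geq 2$), and then verify \eqref{ineq:aux_1}--\eqref{ineq:aux_3} by elementary substitution, using these estimates to absorb all lower-order terms into the enormous constants. From $d \geq 10^{24}\ln n$ one immediately gets $\lambda/d = (\ln n/d)^{1/3} \leq 10^{-8}$, $\ln n/\lambda = (\ln n/d)^{2/3} \leq 10^{-16}$, and $\lambda = d^{2/3}\ln^{1/3}n \geq (10^{24}\ln n)^{2/3}\cdot\ln^{1/3}n = 10^{16}\ln n$. The only estimate requiring a moment's care is $q/d = 10^2\, d^{-1/4}\ln^{1/2}d$: the function $t \mapsto t^{1/4}/\ln^{1/2}t$ has logarithmic derivative $\tfrac{1}{4t}\bigl(1 - \tfrac{2}{\ln t}\bigr) > 0$ for $t > e^2$, hence is increasing on the admissible range $d \geq 10^{24}\ln 2$, so evaluating at the left endpoint gives $q/d \leq 10^{-3}$. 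In particular $d+q \leq 2d$ and $q+\lambda \leq \tfrac{\lambda}{d+q}\cdot(\dots)$, more simply $\tfrac{q+\lambda}{d+q} \leq \tfrac{q+\lambda}{d} \leq 10^{-3}+10^{-8}$.

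Inequality \eqref{ineq:aux_1} needs no smallness at all, only algebra: write $\tfrac{d-\lambda}{d+q} = 1 - \tfrac{q+\lambda}{d+q}$ and note $1 - \tfrac{\lambda}{d+q} \geq 1 - \tfrac{q+\lambda}{d+q} \geq 0$, so
\[
\frac{\mu}{\lambda} \;=\; \frac{d-\lambda}{d+q}\Bigl(1 - \frac{\lambda}{d+q}\Bigr) \;\geq\; \Bigl(1 - \frac{q+\lambda}{d+q}\Bigr)^{2} \;\geq\; 1 - 2\cdot\frac{q+\lambda}{d+q},
\]
and multiplying by $\lambda$ yields \eqref{ineq:aux_1}.

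For \eqref{ineq:aux_2}, divide through by $\lambda$; it suffices that $2\tfrac{q+\lambda}{d+q} + 6\sqrt{\ln n/\lambda} \leq 0.1$, and by the estimates above the first term is at most $2(10^{-3}+10^{-8}) \leq 3\cdot 10^{-3}$ while the second is at most $6\cdot 10^{-8}$, so the sum is far below $0.1$. For \eqref{ineq:aux_3}, the same estimates give $d-\lambda \geq d/2$, $d+q \leq 2d$ and $1 - \tfrac{\lambda}{d+q} \geq 1/2$, whence $\mu \geq \tfrac{d}{2}\cdot\tfrac{\lambda}{2d}\cdot\tfrac12 = \lambda/8 \geq \tfrac{10^{16}}{8}\ln n \geq 30\ln n$.

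The only step with any subtlety is the uniform (as opposed to merely asymptotic) bound on $q/d$, which is handled by the monotonicity observation above; everything else is routine arithmetic with the stated constants, and can alternatively be confirmed by checking that each inequality holds comfortably in the asymptotic regime and that the chosen constants leave enough slack to cover all $n \geq 2$.
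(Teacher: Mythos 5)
Your proposal is correct and follows essentially the same route as the paper: \eqref{ineq:aux_1} by pure algebraic expansion of the product, and \eqref{ineq:aux_2}--\eqref{ineq:aux_3} by crude numeric estimates ($\lambda/d\leq 10^{-8}$, $\lambda\geq 10^{16}\ln n$, $q/d$ small) derived from $d\geq 10^{24}\ln n$. The only differences are cosmetic: you justify the uniform bound $q/d\leq 10^{-3}$ via monotonicity of $t\mapsto t^{1/4}/\ln^{1/2}t$ where the paper simply asserts the analogous numeric bound, and you obtain \eqref{ineq:aux_3} directly from $\mu\geq\lambda/8$ rather than via $\mu\geq 0.9\lambda$ from the first two inequalities.
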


\begin{proof}[Proof of \cref{thm:reducing_max_degree_generic}]
   That each edge $e$ is assigned a color (if any) from $\ell(e)$ is immediate from the algorithm's description. The ``meat'' of the proof is therefore in proving that removal of the edges colored yields a subgraph of maximum degree as low as claimed in the theorem statement.
   
   Consider a dense vertex $v$ of $U$, i.e., with degree $\deg_U(v) \geq d - \lambda$. (If no such vertex exists, the statement to prove follows trivially.) By the proof of \cref{thm:online_matching}, each edge $e$ incident to $v$ is picked by color $c \in \ell(e)$ with probability
    at least $\frac{1}{d + 4q'}\geq \frac{1}{d+q}$, for $q'=\sqrt{200}d^{3/4}\ln^{1/2}d$, where $4q'\leq q$. Let $X_e$ be the indicator variable for the event that $e$ is colored by \cref{alg:reducing_max_degree_generic} with any color $c \in \ell(e)$. As $|\ell(e)| \geq \lambda$ for dense vertices, we have by the first two terms of the Taylor expansion of $\exp(-\lambda/(d+q))$ for $\lambda/(d+q)\leq 1$ (as in our case):
    \begin{equation*}
        \Pr[X_e = 1] \geq 1 - \left(1 - \frac{1}{d + q}\right)^\lambda \geq 1-\exp\left(-\frac{\lambda}{d+q}\right)\geq \frac{\lambda}{d+q} - \left(\frac{\lambda}{d+q}\right)^2.
    \end{equation*}
    Let $X := \sum_{e \in \delta(v)} X_e$ be the number of edges incident to $v$ that are colored by \cref{alg:reducing_max_degree_generic}.
    We wish to argue that $X$ is not much less than the following lower bound $E[X] \geq \mu := (d - \lambda) \cdot \frac{\lambda}{d + q} \left( 1 - \frac{\lambda}{d + q} \right)$ on its expectation. (This lower bound follows by our lower bound on $\Pr[X_e=1]$, linearity of expectation, and the lower bound on the number of edges of dense vertices.)
    
    We now argue that with high probability in $n$, the expectation $\E[X]$ does not fall short of this lower bound of $\mu$. This would follow easily by standard Chernoff bounds if the $X_e$ were independent, though they clearly are not. However, we can interpret these variables as indicator variables of a balls and bins process, which are \emph{negatively associated (NA)}, and hence admit the same Chernoff bounds as independent random variables \cite{dubhashi1996balls,wajc2017negative}.
    In more detail, we have a ball for each color $c$, and each ball (color) $c$ falls into a (single) bin corresponding to either an edge $e$ of $v$ or a dummy bin, depending on whether or not $e$ is matched by the $c^{th}$ copy of \Cref{alg:edge-arrival}, which happens independently (but not i.i.d) for different $c$.
    The values $X_e$ are therefore indicators for whether bin $e$ is non-empty, and these random variables are NA \cite[Corollary 2.4.6]{wajc2020matching}.
    Consequently, the sum $X$ of the NA (but not independent) random variables $X_e$, which satisfies $\E[X]\geq \mu$, also satisfies the following standard Chernoff bound for any $\varepsilon<1$:
    \begin{equation}\label{eq:helper_chernoff_aux}
        \Pr \left[ X \leq (1 - \epsilon) \cdot \mu \right] \leq \exp \left( - \frac{\varepsilon^2 \cdot \mu}{2} \right).
    \end{equation}
    Fix $\varepsilon:= \sqrt{\frac{30 \ln n}{\mu}}$. By \Cref{ineq:aux_3} from \Cref{lemma:aux_lemma}, have that $\varepsilon\in [0,1]$.
    On the other hand, as we shall soon see, we also have that
    \begin{equation} \label{eq:helper_thm:reducing_max_degree_generic}
        (1 - \varepsilon)\cdot \mu \geq \lambda - 2 \lambda \cdot \frac{q + \lambda}{\Delta + q} - 6 \sqrt{\lambda \ln n},
    \end{equation}
    To see this, first note that, as $\mu = \lambda \cdot \frac{d-\lambda}{d+q} \cdot \left( 1 - \frac{\lambda}{d + q} \right) \leq \lambda$, we have that $\varepsilon \cdot \mu \leq \sqrt{30 \lambda \ln n} \leq 6 \sqrt{\lambda \ln n}$. Furthermore, $\mu \geq \lambda - 2 \lambda \cdot \frac{q + \lambda}{\Delta + q}$, by \eqref{ineq:aux_1} from \cref{lemma:aux_lemma}. We obtain \eqref{eq:helper_thm:reducing_max_degree_generic} by summing up these last two inequalities. 
\color{black} 
    Consequently, combining \eqref{eq:helper_chernoff_aux} and \eqref{eq:helper_thm:reducing_max_degree_generic} and using our choice of $\varepsilon$, we have that 
    \begin{equation*}
        \Pr \left [ X \leq \lambda - 2 \lambda \cdot \frac{q + \lambda}{\Delta + q} - 6 \sqrt{\lambda \ln n} \right] \leq \Pr [X \leq (1 - \varepsilon)\cdot \mu] \leq \frac{1}{n^{15}}.
    \end{equation*}
    The claimed upper bound on the maximum degree of $U'$ then follows, since this upper bound holds for all non-dense vertices in $U$ (which is a super graph of $U'$), together with union bound over all dense vertices $v$, which all have degree at most $\Delta(U)$ in $U$, and so, with high probability in $n$, in the new graph $U'$ all vertices have degree at most $\Delta(U)-\left(\lambda - 2 \lambda \cdot \frac{q + \lambda}{\Delta + q} - 6 \sqrt{\lambda \ln n}\right)\leq \Delta(U)-0.9\lambda$, where the last inequality follows from \eqref{ineq:aux_2} from \Cref{lemma:aux_lemma}.
\end{proof}

The inequalities from \cref{lemma:aux_lemma} are obtained by direct computation as follows:
\begin{proof}[Proof of \cref{lemma:aux_lemma}]
    \eqref{ineq:aux_1} follows easily:
    \begin{align*}
        \mu & = (d - \lambda) \cdot \frac{\lambda}{d + q} \left( 1 - \frac{\lambda}{d + q} \right) = \lambda \cdot \left( 1 - \frac{q + \lambda}{d + q} \right) \left( 1 - \frac{\lambda}{d + q} \right) \\
        &\geq \lambda - \lambda \cdot \frac{q + \lambda}{d + q} - \lambda \cdot \frac{\lambda}{d + q} \geq \lambda - 2 \lambda \cdot \frac{q + \lambda}{d + q}.
    \end{align*}
    To get \eqref{ineq:aux_2} it suffices to show:
    \begin{align}
        &2 \lambda \cdot \frac{q + \lambda}{d + q} \leq 0.05 \lambda \label{ineq:aux_4} \\
        &6 \sqrt{\lambda \ln n} \leq 0.05 \lambda. \label{ineq:aux_5}
    \end{align}
    For \eqref{ineq:aux_4}, notice that:
    \begin{align*}
        2 \cdot \frac{q + \lambda}{d + q} \leq 2 \cdot \frac{q + \lambda}{d} =
        \frac{2 \cdot 10^2 \cdot d^{3/4} \ln^{1/2} d}{d} + \frac{2 \cdot d^{2/3} \ln^{1/3} n}{d}.
    \end{align*}
    The first term can be upper bounded by $0.025$ for any $n \geq 2$ given $d\geq 10^{24}\ln n$. The second term can be upper bounded by $\frac{2 \ln^{1/3} n}{d^{1/3}}\leq 2/10^8 \leq 0.025$, since $d \geq 10^{24} \ln n$. Summing the upper bounds for the two terms gives \eqref{ineq:aux_4}.
    Inequality \eqref{ineq:aux_5} follows from $\sqrt{\lambda/\ln n} \geq 10^8 \geq 120$, again since $d \geq 10^{24} \ln n$ and so $\lambda = d^{2/3}\ln^{1/3}n \geq 10^{16}\ln n$. 
    
    It remains to prove \eqref{ineq:aux_3}. By \eqref{ineq:aux_1} and \eqref{ineq:aux_2}, we have that $\mu \geq 0.9 \lambda$. And indeed, again using that $d \geq 10^{24} \ln n$ and so $\lambda\geq 10^{16}\ln n$, we obtain the desired inequality, as $\mu \geq 0.9\cdot 10^{16}\ln n\geq 30 \ln n$.     
\end{proof}
\color{black}

We now turn to using this subroutine to completely color a given graph revealed online.

\subsection{Generic Coloring Algorithm} \label{subsec:generic_coloring}

\paragraph{Strategy.} For subsequent applications, we consider graphs $G$ with a maximum degree known to be at most $\Delta$, with edges arriving online one by one. Moreover, every arriving edge $e$ provides a list of available colors $L(e)$. We will repeatedly apply \cref{alg:reducing_max_degree_generic}, consuming different sublists $\ell(e) \subseteq L(e)$ of colors per phase, to reduce the maximum degree of the currently uncolored graph $U \subseteq G$, by coloring subsets of the edges of $U$. The following definition introduces the relevant parameters and is used to define \cref{alg:coloring_generic}:
\begin{definition}[Degree Sequence] \label{def:deg_seq}
    Let $d_0 := \Delta$ be (an upper bound on) the maximum degree of a graph $G$ with $n$ vertices. For $i\geq 0$, we define the following:
    \begin{align*}
        &\lambda_i := d_i^{2/3} \ln^{1/3} n, \\
        &q_i := 10^2 \cdot d_i^{3/4} \ln^{1/2} d_i, \\
        &d_{i+1} := d_i - \lambda_i + 2 \lambda_i \cdot \frac{q_i + \lambda_i}{d_i + q_i} + 6\sqrt{\lambda_i \ln n} \leq d_i.
    \end{align*}
    Let $f$ be the minimal value for which $d_f<10^{24}\ln n$. (Such an $f$ exists, since $d_i \geq 10^{24} \ln n$, then by \cref{lemma:aux_lemma}, $d_{i+1} \leq d_i - 0.9\lambda_i \leq d_i - 1$.)
    We call the parameters $D(d_0) := \{d_i : 0 \leq i \leq f + 1\}$ the \emph{degree sequence of} $d_0$.
\end{definition}

\begin{figure}[h!]
\begin{center}
\begin{minipage}{0.95\textwidth}
\begin{mdframed}[hidealllines=true, backgroundcolor=gray!15]

\begin{algorithm}[Generic Coloring Algorithm] \ \\

\item \textbf{Input:} Graph $G$ with $n$ vertices, arriving edge-by-edge together with lists $L(e)$ of available colors. We use the notations introduced in \cref{def:deg_seq}, and denote by  
$\mathcal{C} := \cup_{e} L(e)$ the set of all colors.

\item \textbf{Output}: Coloring of a subset of edges of $G$.

\begin{itemize}

\item Let $C_0,\dots,C_{f+1}$ be a partitioning of $\mathcal{C}$ which is computed online. 

\item Set $U_0\gets G$.

\item Iterate through phases $i \in \{0,\dots,f\}$:

\begin{itemize}
\vspace{-2mm}
\item Apply \cref{alg:reducing_max_degree_generic} (online) on the currently uncolored graph $U_i$.
\item For any edge $e$ incident to a \emph{dense} vertex in $U_i$ (i.e., having degree $\geq d_i - \lambda_i$), use the sublist $\ell^i(e) := L(e) \cap C_i$ to input online to \cref{alg:reducing_max_degree_generic}.
\item Set $U_{i+1}\gets U_i \setminus \{\textrm{edges colored by \Cref{alg:reducing_max_degree_generic} in phase } $i$\}$

\end{itemize}

\item Try to color the final uncolored graph $U_{f+1}$ using Greedy with the remaining lists of available colors $\ell^{f+1}(e) := L(e) \cap C_{f+1}$ for each edge $e$.
\end{itemize}

\label{alg:coloring_generic}
\end{algorithm}
\end{mdframed}
\end{minipage}
\end{center}
\end{figure}

Let $\mathcal{C} := \cup_{e \in G} L(e)$ be the set of all colors. Note that this set is \emph{a priori} unknown to the online algorithm and is only revealed indirectly through the lists $L(e)$ of the arriving edges. \cref{alg:coloring_generic} partitions this set into $f+2$ subsets of colors $C_0,\dots,C_{f+1}$ online. More concretely, whenever the arriving list $L(e)$ of an edge $e$ contains a color $c \in L(e)$ which is seen for the first time, it decides online to which of the sets $C_0,\dots,C_{f+1}$ color $c$ will be assigned. \emph{We insist on the fact that this choice can be made arbitrarily depending on the application. However, there is a restriction which we discuss in the next paragraph.}

The \emph{$i$-th phase} is the iteration of \cref{alg:coloring_generic} on the uncolored subgraph $U_i$. The partitioning of $\mathcal{C}$ into $C_0,\dots,C_{f+1}$ defines the colors to be used in each phase. Thereby, $\ell^i(e) := L(e) \cap C^i$ is the chosen sublist of $L(e)$ which is inputted online to \cref{alg:reducing_max_degree_generic} for edge $e$ during the execution of the $i$-th phase. Let $L^i(e) := L(e) \setminus \cup_{j = 0}^{i-1} \ \ell^j(e)$ be the set of colors which are still available to use for $e$ during the $i$-th phase. For all edges $e$, the sublists $\ell^i(e) \subseteq L^i(e)$ must have the size required to apply \cref{thm:reducing_max_degree_generic}, that is, $\lambda_i \leq |\ell^i(e)| \leq \lambda_i + 10 \sqrt{\lambda_i \ln n}$. This motivates the following:
\begin{definition}[Admissible Partitioning of Colors] \label{def:admissible_partitioning}
    Given a fixed input of \cref{alg:coloring_generic}, let $C_0,\dots,C_{f+1}$ be a partitioning of the set of colors $\mathcal{C} := \cup_{e \in G} L(e)$ which is computed online. 
    We say that the partitioning $C_0,\dots,C_{f+1}$ is \emph{admissible} if for any edge $e$ and phase $i \in \{0,\dots,f\}$ in which $e$ is incident to a \emph{dense} vertex in $U_i$ (i.e., having degree $\geq d_i - \lambda_i$),
    we have that $\ell^i(e) := L(e) \cap C_i$ satisfies:
    \begin{equation*}
        \lambda_i \leq |\ell^i(e)|\leq \lambda_i + 10 \sqrt{\lambda_i \ln n}.
    \end{equation*}
\end{definition}

By successively applying \cref{alg:reducing_max_degree_generic} (starting with the initial graph $U_0 := G$), as done in \cref{alg:coloring_generic}, and \emph{using an admissible (online) partitioning of $\mathcal{C}$ into $C_0,\dots,C_{f+1}$ as defined in \cref{def:admissible_partitioning}}, one obtains a sequence of subgraphs $U_0 \supseteq U_1 \supseteq \cdots \supseteq U_{f+1}$ of $G$, such that by successive applications of \cref{thm:reducing_max_degree_generic} we obtain the following:
\begin{lemma} \label{lemma:deg_seq_property}
    With high probability in $n$, the graphs $U_i$ computed online by \cref{alg:coloring_generic} (consisting of yet uncolored edges) have their maximum degree bounded by $d_i$. Moreover, for $i \leq f$ we have $d_i \geq 10^{24} \ln n$.
\end{lemma}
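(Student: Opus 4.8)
The plan is to prove the claim by induction on the phase index $i$, showing that with high probability in $n$, the uncolored graph $U_i$ produced by \cref{alg:coloring_generic} has maximum degree at most $d_i$, where $d_i$ is given by the degree sequence of \cref{def:deg_seq}. The base case $i=0$ is immediate: $U_0 = G$ and $d_0 = \Delta$ is an upper bound on its maximum degree by assumption. For the inductive step, suppose that $U_i$ has maximum degree at most $d_i$ and that $d_i \geq 10^{24}\ln n$ (the latter holds for all $i \leq f$ by the definition of $f$). The $i$-th phase runs \cref{alg:reducing_max_degree_generic} on $U_i$ with the promised parameter $\Delta(U_i) := d_i$ and with instances of \cref{alg:edge-arrival} using $q_i := 10^2 \cdot d_i^{3/4}\ln^{1/2} d_i$. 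To invoke \cref{thm:reducing_max_degree_generic}, I need to verify that all the promises of \cref{alg:reducing_max_degree_generic} are met: the degree bound $\Delta(U_i) = d_i$ is correct by the inductive hypothesis, and the list-size promise $\lambda_i \leq |\ell^i(e)| \leq \lambda_i + 10\sqrt{\lambda_i \ln n}$ for edges $e$ incident to dense vertices of $U_i$ holds precisely because we assumed the partitioning of $\mathcal{C}$ into $C_0,\dots,C_{f+1}$ is \emph{admissible} in the sense of \cref{def:admissible_partitioning}.

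Having checked the promises, \cref{thm:reducing_max_degree_generic} applies and guarantees that, with probability at least $1 - n^{-15}$, the uncolored subgraph $U_{i+1} \subseteq U_i$ remaining after phase $i$ has maximum degree at most
\[
    d_i - \lambda_i + 2\lambda_i \cdot \frac{q_i + \lambda_i}{d_i + q_i} + 6\sqrt{\lambda_i \ln n} = d_{i+1},
\]
which is exactly the recurrence defining the degree sequence. This closes the induction for the maximum-degree bound. The ``moreover'' part, namely $d_i \geq 10^{24}\ln n$ for $i \leq f$, is immediate from the very definition of $f$ as the minimal index with $d_f < 10^{24}\ln n$ (\cref{def:deg_seq}).

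Finally, I would take a union bound over the $f+1$ phases $i \in \{0,\dots,f\}$ to conclude that \emph{all} the degree bounds hold simultaneously with probability at least $1 - (f+1)/n^{15}$. Since each phase strictly decreases $d_i$ by at least $1$ (indeed by at least $0.9\lambda_i$, as noted in \cref{def:deg_seq} via \cref{lemma:aux_lemma}) as long as $d_i \geq 10^{24}\ln n$, we have $f \leq d_0 = \Delta \leq n$, so the failure probability is at most $(n+1)/n^{15} \leq n^{-13}$, which is high probability in $n$. The main subtlety — really the only place where there is anything to check beyond bookkeeping — is ensuring that \cref{thm:reducing_max_degree_generic}'s hypotheses are genuinely satisfied at each phase: that the inductively maintained bound $d_i$ is passed correctly as $\Delta(U_i)$, that $d_i \geq 10^{24}\ln n$ so the theorem is even applicable, and that admissibility of the color partition supplies the required list sizes. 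Everything else is a direct unrolling of the recurrence together with a union bound.
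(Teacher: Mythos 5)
Your proposal is correct and follows essentially the same route as the paper: induct on the phase, invoke \cref{thm:reducing_max_degree_generic} (after checking its promises via admissibility and $d_i \geq 10^{24}\ln n$) to get the per-phase failure probability, and finish with a union bound over the at most $n$ phases. The paper phrases the induction via conditional probabilities $\Pr[\text{good}_{i+1} \mid \text{good}_i]$ rather than a direct union bound over phases, but this is only a cosmetic difference.
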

\begin{proof}
    Let $\text{good}_i$ be the event that $U_i$ has its degree upper bounded by $d_i$, where $i \in \{0,\dots,f\}$. Clearly, $\text{good}_0$ holds by the fact that $d_0$ is an upper bound on the maximum graph of the initial graph $G$. Assuming $\text{good}_i$ holds, consider the $i$-th phase of \cref{alg:coloring_generic}. Then, by \cref{thm:reducing_max_degree_generic}, the probability that $\text{good}_{i+1}$ holds is at least $1 - \frac{1}{n^{10}}$. Hence:
    \begin{equation} \label{eq:lower_bound_goodness}
        \Pr[\text{good}_{i+1} \mid \text{good}_{i}] \geq 1 - \frac{1}{n^{10}}.
    \end{equation}
    By induction it easily follows that $\Pr[\text{good}_{i}] \geq 1 - \frac{i}{n^{10}}$ for any $i \in \{0,\dots,f\}$. For $i = 0$ this is true with with probability $1$, and for the induction step we use \eqref{eq:lower_bound_goodness} together with the induction hypothesis to obtain:
    \begin{equation*}
         \Pr[\text{good}_{i+1}] \geq \Pr[\text{good}_{i+1} \mid \text{good}_{i}] \cdot \Pr[\text{good}_{i}] \geq \left( 1 - \frac{1}{n^{10}} \right) \cdot \left( 1 - \frac{i}{n^{10}} \right) \geq 1 - \frac{i+1}{n^{10}}.
    \end{equation*}
    In particular, for the complementary events we have $\Pr[\overline{\text{good}_{i}}] \leq \frac{1}{n^9}$ for any $i$, and the lemma follows by union bound over the at most $n$ possible values of $i$:
    \begin{align*}
        \Pr \left[ \cup_{i \in \{0,\dots f\}} \ \overline{{\text{good}_{i}}} \right] & \leq n \cdot \frac{1}{n^9} \leq \frac{1}{n^8}. \qedhere
    \end{align*}
\end{proof}

\paragraph{Analysis.} We have designed a generic coloring algorithm which, with high probability, successively reduces the maximum degree of the uncolored subgraphs $U_i$ until their maximum degree drops below $O(\log n)$. It remains to argue how to ensure the following properties required implicitly by \cref{alg:coloring_generic}:
\begin{itemize}
    \item The lists $L^i(e)$ of remaining colors need to have sufficiently large size to allow extracting the sublists $\ell^i(e) \subseteq L^i(e)$, such that $|\ell^i(e)| \geq \lambda_i$ as required by the application of \cref{alg:reducing_max_degree_generic} inside \cref{alg:coloring_generic}.
    \item In particular, to color $U_{f+1}$ successfully using Greedy, one needs to ensure that $L^{f+1}(e) \geq 2 d_{f+1}$.
\end{itemize}
To obtain these guarantees, we maintain by induction, throughout all phases $i$, the property $|L^i(e)| \geq d_i + a_i$ for edges connected to dense vertices, where $a_i$ is a large enough slack. We define these slacks precisely:
\begin{definition}[Slack Sequence] \label{def:slck_seq}
    Let $d_0 \geq 1$ and introduce the degree sequence $D(d_0)$ of $d_0$, and the parameters $f, \lambda_i, q_i$ as in \cref{def:deg_seq}. We define the \emph{slack sequence} $Sl(d_0) := \{a_i : 0 \leq i \leq f+1\}$ where:
    \begin{align*}
        &a_{f+1} := 10^{24} \ln n > d_{f+1} \\
        &a_i := a_{i+1} + 2 \lambda_i \cdot \frac{q_i + \lambda_i}{d_i + q_i} + 16\sqrt{\lambda_i \ln n}.
    \end{align*}
\end{definition}
As anticipated, we prove by induction that these slacks fulfill the required guarantees:
\begin{lemma} \label{lemma:slck_seq_property}
    Assume that $|L(e)| \geq 2 \cdot 10^{24} \ln n$ for all edges $e$. Furthermore, assume that, before the execution of the $i$-th phase of \cref{alg:coloring_generic}, for any edge $e$ connected to a dense vertex, $|L^i(e)| \geq d_i + a_i$. Then, after the execution of the $i$-th phase, one has $|L^{i+1}(e)| = |L^i(e) \setminus \ell^i(e)| \geq d_{i+1} + a_{i+1}$. Furthermore, executing Greedy on $U_{f+1}$ in \cref{alg:coloring_generic} is possible, as $|L^{f+1}(e)| \geq 2d_{f+1}$ for all edges $e \in U_{f+1}$.
\end{lemma}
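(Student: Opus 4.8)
The plan is to prove \Cref{lemma:slck_seq_property} by induction on the phase index $i$, tracking the invariant $|L^i(e)| \geq d_i + a_i$ for every edge $e$ incident to a dense vertex in $U_i$. The base case $i = 0$ is the hypothesis $|L(e)| = |L^0(e)| \geq 2\cdot 10^{24}\ln n \geq d_0 + a_0$ — here I would need to check that $d_0 + a_0 \leq 2\cdot 10^{24}\ln n$ is \emph{not} actually what is being claimed; rather the lemma takes as a hypothesis that $|L^i(e)| \geq d_i + a_i$ holds before phase $i$, so the induction is really just the single-step implication. Thus the core of the proof is: assuming $|L^i(e)| \geq d_i + a_i$ for an edge $e$ dense in $U_i$, show $|L^{i+1}(e)| = |L^i(e)\setminus \ell^i(e)| \geq d_{i+1} + a_{i+1}$.

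First I would bound how much of the list is consumed in phase $i$. By admissibility (\Cref{def:admissible_partitioning}), if $e$ is incident to a dense vertex in $U_i$ then $|\ell^i(e)| \leq \lambda_i + 10\sqrt{\lambda_i \ln n}$, so $|L^{i+1}(e)| \geq |L^i(e)| - \lambda_i - 10\sqrt{\lambda_i\ln n} \geq d_i + a_i - \lambda_i - 10\sqrt{\lambda_i\ln n}$. (If $e$ is \emph{not} incident to a dense vertex in $U_i$, then either $e$ is already colored, or we need only worry about it in a later phase where it becomes dense; in that case no colors are consumed for $e$ in phase $i$, i.e.\ $\ell^i(e) = \emptyset$ or the contribution is vacuous, and the invariant is maintained trivially or re-established when $e$ first becomes relevant — this edge-case bookkeeping is something I would state carefully.) Now I substitute the recursions from \Cref{def:deg_seq} and \Cref{def:slck_seq}: $d_i = d_{i+1} + \lambda_i - 2\lambda_i\cdot\frac{q_i+\lambda_i}{d_i+q_i} - 6\sqrt{\lambda_i\ln n}$ and $a_i = a_{i+1} + 2\lambda_i\cdot\frac{q_i+\lambda_i}{d_i+q_i} + 16\sqrt{\lambda_i\ln n}$. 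Adding these, $d_i + a_i = d_{i+1} + a_{i+1} + \lambda_i + 10\sqrt{\lambda_i\ln n}$, so $d_i + a_i - \lambda_i - 10\sqrt{\lambda_i\ln n} = d_{i+1} + a_{i+1}$ exactly, which gives the claimed bound $|L^{i+1}(e)| \geq d_{i+1} + a_{i+1}$. The slack sequence was evidently reverse-engineered precisely so that this telescoping works, so the computation should be clean.

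For the final claim — that Greedy can color $U_{f+1}$ — I would argue that by \Cref{lemma:deg_seq_property} the maximum degree of $U_{f+1}$ is at most $d_{f+1} < 10^{24}\ln n$ with high probability, and the invariant just proven (applied at $i = f$, noting all vertices of $U_{f+1}$ are trivially ``dense'' once we are past all degree-reduction phases, or by carrying the invariant for all edges that remain uncolored) gives $|L^{f+1}(e)| \geq d_{f+1} + a_{f+1} \geq 2 d_{f+1}$, using $a_{f+1} = 10^{24}\ln n > d_{f+1}$. Since the greedy edge-coloring algorithm succeeds whenever each edge has a list of size at least $2\Delta(U_{f+1}) - 1$, and $2d_{f+1} \geq 2\Delta(U_{f+1}) - 1$, every edge of $U_{f+1}$ gets colored.

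The main obstacle I anticipate is not any single inequality — those telescope by design — but rather the careful handling of which edges the invariant applies to in each phase. An edge $e$ may be incident to a dense vertex in some phases but not others, and the bound $|\ell^i(e)| \leq \lambda_i + 10\sqrt{\lambda_i\ln n}$ from admissibility is only guaranteed when $e$ is dense in $U_i$; when it is not, I must ensure no colors from $L^i(e)$ are spuriously consumed (or that the amount consumed is still controlled). I would therefore phrase the invariant as: for every edge $e$ and every phase $i$ \emph{such that $e$ is still uncolored and incident to a dense vertex at the start of phase $i$}, $|L^i(e)| \geq d_i + a_i$; and separately verify that in phases where $e$ is not dense, $|L^i(e)|$ only decreases by consuming $\ell^i(e)$ which — by how \Cref{alg:reducing_max_degree_generic} is invoked in \Cref{alg:coloring_generic} — is empty for non-dense edges (the algorithm only feeds sublists to edges incident to dense vertices). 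This monotonicity observation closes the gap.
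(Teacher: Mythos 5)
Your proposal is correct and follows essentially the same route as the paper: bound $|\ell^i(e)|\leq \lambda_i+10\sqrt{\lambda_i\ln n}$ via admissibility and telescope using the definitions of $d_{i+1}$ and $a_{i+1}$, then for the Greedy step use $a_{f+1}>d_{f+1}$ for edges once incident to a dense vertex and the fact that never-dense edges keep their full list of size $\geq 2\cdot 10^{24}\ln n>2d_{f+1}$. Only your parenthetical that vertices of $U_{f+1}$ are ``trivially dense'' is off, but your own fallback (the case split between once-dense and never-dense edges, with untouched lists in the latter case) is exactly the paper's argument.
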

\begin{proof}
    Using $|\ell^i(e)| \leq \lambda_i + 10 \sqrt{\lambda_i \ln n}$, we obtain:
    \begin{align*}
        |L^{i+1}(e)| = |L^i(e)| - |\ell^i(e)| \geq d_i + a_i - |\ell^i(e)| \geq d_i + a_i - (\lambda_i + 10 \sqrt{\lambda_i \ln n}) = d_{i+1} + a_{i+1},
    \end{align*}
    where the last inequality follows by the definitions of $a_i$ and $d_{i+1}$ (see \cref{def:deg_seq} and \cref{def:slck_seq}).
    As for the property $|L^{f+1}(e)| \geq 2d_{f+1}$, if $e$ was ever connected to a dense vertex in some phase $i$, then the property follows because $|L^{f+1}(e)| \geq d_{f+1} + a_{f+1} \geq 2d_{f+1}$, where the last inequality is due to $a_{f+1} > d_{f+1}$ (see \cref{def:slck_seq}). If on the contrary $e$ was never connected to a dense vertex, then $L^{f+1}(e) = L(e)$ (i.e.\ the list of available colors never changed during the execution of \cref{alg:coloring_generic}) and so we have $|L^{f+1}(e)| = |L(e)| \geq 2 \cdot 10^{24} \ln n > 2d_{f+1}$.
\end{proof}

To finish the analysis, it remains to upper bound the slacks $a_i$ in closed form, as opposed to the (convenient but) recursive definition from \cref{def:slck_seq}. In particular, the upper bounds on $a_i$ indicate how large the lists $L^i(e)$ need to be to make \cref{alg:coloring_generic} succeed. It is clear that:
\begin{align}
    a_i &= 10^{24} \ln n + \sum_{j = i}^f \left(2 \lambda_j \cdot \frac{q_j + \lambda_j}{d_j + q_j} + 16\sqrt{\lambda_j \ln n} \right) \nonumber \\
    & \leq 10^{24} \ln n +  \sum_{j = i}^f \left(2 \lambda_j \cdot \frac{q_j}{d_j} + 2 \lambda_j \cdot \frac{\lambda_j}{d_j} + 16\sqrt{\lambda_j \ln n} \right) \nonumber \\
    & \leq 10^{24} \ln n +  (f - i + 1) \cdot \left(2 \lambda_i \cdot \frac{q_i}{d_i} + 2 \lambda_i \cdot \frac{\lambda_i}{d_i} + 16\sqrt{\lambda_i \ln n} \right), \label{eq:aux_slack_upper_bound}
\end{align}
where the last inequality follows because all three terms inside the parentheses are non-increasing in $j$. By the above manipulations, it remains to upper bound the quantity $f - i + 1$, where $f$ is the number of phases. For this purpose, the following lemma is helpful:
\begin{lemma} \label{lemma:aux_local_result}
    Let $a > 0$ be any number.
    Consider a sequence $(x_k)_{k \geq 0}$ of non-negative integer numbers, such that $x_0 \geq 1$ and:
    \begin{equation}
        x_{k+1} := 
        \begin{cases}
            x_k - \ceil{a \cdot x_k^{2/3}} & \text{if $x_k \geq 1$,} \\
            0 & \text{if $x_k < 1$.}
        \end{cases}
    \end{equation}
    Then, for (integer) $k \geq \frac{3 \sqrt[3]{x_0}}{a}$, we have $x_k \leq 1$.
\end{lemma}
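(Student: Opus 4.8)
The plan is to track how many steps it takes for the sequence $(x_k)$ to drop below $1$, by lower-bounding the per-step decrease $x_k - x_{k+1} = \ceil{a \cdot x_k^{2/3}} \geq a \cdot x_k^{2/3}$. The natural continuous analogue is the ODE $x' = -a x^{2/3}$, whose solution $x(t) = (x_0^{1/3} - \tfrac{a}{3}t)^3$ hits zero at $t = 3x_0^{1/3}/a$; this is exactly the claimed bound, so the discrete argument should mirror this. I would make the change of variables $y_k := x_k^{1/3}$ and show that each step decreases $y_k$ by at least roughly $a/3$, so after $3x_0^{1/3}/a$ steps we have $y_k \leq 0$, i.e. $x_k \leq 1$ (since the $x_k$ are integers, $x_k^{1/3} \le 0$ would force $x_k = 0$, but more carefully we want $x_k \le 1$).

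\textbf{Key steps.} First, while $x_k \geq 1$, bound $x_{k+1} = x_k - \ceil{a x_k^{2/3}} \le x_k - a x_k^{2/3}$. Second, apply the function $t \mapsto t^{1/3}$, which is concave, to get $x_{k+1}^{1/3} \le (x_k - a x_k^{2/3})^{1/3} = x_k^{1/3}(1 - a x_k^{-1/3})^{1/3}$; then use the elementary inequality $(1-z)^{1/3} \le 1 - z/3$ for $z \in [0,1]$ to conclude $x_{k+1}^{1/3} \le x_k^{1/3} - \tfrac{a}{3}$, valid whenever $a x_k^{-1/3} \le 1$. One must handle the regime where $a x_k^{-1/3} > 1$, i.e. $x_k < a^3$: but there $\ceil{a x_k^{2/3}} \ge a x_k^{2/3} \ge x_k$ (since $x_k^{1/3} \le a$), so $x_{k+1} \le 0$ in one step, giving $x_{k+1} < 1$ immediately and we are done even faster. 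Third, in the main regime, telescoping $x_k^{1/3} \le x_0^{1/3} - \tfrac{a}{3}k$ shows that once $k \ge 3x_0^{1/3}/a$ we get $x_k^{1/3} \le 0$, hence $x_k \le 1$ (in fact $x_k = 0$ if it is a nonnegative integer with $x_k^{1/3}\le 0$; and once some $x_j < 1$ the recursion sets all later terms to $0$). Care is needed because the telescoped inequality $x_{k+1}^{1/3} \le x_k^{1/3} - a/3$ only holds while $x_k \ge 1$; but once $x_k$ drops below $1$ it stays $0$, so the bound "$x_k \le 1$ for $k \ge 3x_0^{1/3}/a$" is preserved regardless.

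\textbf{Main obstacle.} The only real subtlety is bookkeeping around the boundary: the recursion changes form at $x_k < 1$, the ceiling can overshoot (possibly making $x_{k+1}$ negative before the clamp — though the problem statement's definition should be read as clamping to $0$), and the inequality $(1-z)^{1/3}\le 1-z/3$ needs $z \le 1$ which corresponds to $x_k \ge a^3$. I would dispose of the $x_k < a^3$ case separately (one step kills it) and run the telescoping argument only in the range $x_k \ge a^3 \ge 1$, then observe monotonicity of the "$\le 1$" conclusion in $k$. None of this requires heavy computation — it is all one-line elementary inequalities — so I expect the write-up to be short.
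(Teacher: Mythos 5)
Your proposal is correct and is essentially the paper's argument in telescoped form: the paper proves the same key estimate $x_1^{1/3} \le x_0^{1/3} - a/3$ (by cubing, where you use concavity via $(1-z)^{1/3}\le 1-z/3$) and packages the iteration as strong induction on $x_0$, with your ``$x_k < a^3$ dies in one step'' case playing the role of its base case $x_0^{1/3} < a$. No gaps; the bookkeeping you flag around the clamping at $1$ is handled the same way in the paper.
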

\begin{proof}
    We prove the statement by strong induction on the starting value $x_0 \geq 1$ of the sequence.
    If $x^{1/3}_0 < a$, then $a x_0^{2/3} > x_k$, and in particular, $x_1 = x_0 - \lceil a\cdot x_0^{2/3}\rceil < 0$, so the statement holds.

    Now assume instead that $x^{1/3}_0 \ge a$, and that the statement holds for any integer $x'_0 < x_0$.
     We apply the induction hypothesis on the sequence starting with $x'_0 := x_1 = x_0 - \ceil{a \cdot x_0^{2/3}}$ (trivially $x_1 < x_0$). We can assume $x_1 \ge 1$, else the statement already holds.
     The induction hypothesis gives us that $x_{k+1} \leq 1$ for $k \geq \frac{3 \sqrt[3]{x_1}}{a}$. It thus suffices to prove:
     \begin{align*}
     \frac{3\sqrt[3]{x_1}}{a} +1 &\le \frac{3\sqrt[3]{x_0}}{a},
     \\
     \intertext{which we rearrange into:}
     \sqrt[3]{x_1} &\le \sqrt[3]{x_0} - \frac{a}{3}.
     \\
     \intertext{Now taking the third power of both sides (which are indeed both positive):}
     x_0-\lceil a\cdot x_0^{2/3} \rceil &\le x_0 - 
     a x^{2/3}_0
     +\frac{a^2 x^{1/3}_0}{3}
     -\frac{a^3}{9}.
     \end{align*}
     The above inequality follows from the fact that
     $\lceil a x_0^{2/3}\rceil \ge a x_0^{2/3}$, and $a^2x_0^{1/3}/3 \ge a^3/3 > a^3/9$ (since we assumed $x_0^{1/3} \ge a$). Thus the induction proof is concluded and the lemma proven.
\end{proof}
We are now ready to upper bound $f - i + 1$.
\begin{lemma} \label{lemma:bounding_num_phases}
    For every $i \in \{0,\dots,f\}$, one has $f - i + 1 \leq 7 \sqrt[3]{\frac{d_i}{\ln n}}$.
\end{lemma}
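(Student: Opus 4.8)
The plan is to compare the degree sequence $(d_{i+k})_{k\ge 0}$ with the auxiliary recursion of \cref{lemma:aux_local_result} after rescaling by $\ln n$. First recall from \cref{def:deg_seq} (or, equivalently, from \eqref{ineq:aux_2} in \cref{lemma:aux_lemma}, which applies since $d_j\ge 10^{24}\ln n$ for all $j\le f$ by \cref{lemma:deg_seq_property}) that $d_{j+1}\le d_j-0.9\lambda_j=d_j-0.9\,d_j^{2/3}\ln^{1/3}n$ for every $j\le f$. Dividing by $\ln n$, the rescaled quantities $w_k:=d_{i+k}/\ln n$ satisfy $w_{k+1}\le w_k-0.9\,w_k^{2/3}$ for all $k\le f-i$, and $w_k\ge 10^{24}$ throughout that range.

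Next I would introduce the comparison sequence $(x_k)_{k\ge 0}$ of \cref{lemma:aux_local_result} with parameters $a:=0.8$ and $x_0:=\lceil d_i/\ln n\rceil$, i.e.\ $x_{k+1}=x_k-\lceil 0.8\,x_k^{2/3}\rceil$ while $x_k\ge 1$. The heart of the argument is the \emph{domination claim} that $w_k\le x_k$ for all $0\le k\le f-i+1$, proved by induction on $k$. The base case is immediate since $w_0\le\lceil w_0\rceil=x_0$. For the inductive step with $k\le f-i$, write $\phi(t):=t-0.9\,t^{2/3}$; one checks $\phi$ is increasing on $[1,\infty)$ (as $\phi'(t)=1-0.6\,t^{-1/3}\ge 0.4$ there), so $w_{k+1}\le\phi(w_k)\le\phi(x_k)=x_k-0.9\,x_k^{2/3}$ using $w_k\le x_k$; and since $x_k\ge w_k\ge 10^{24}$ gives $0.1\,x_k^{2/3}\ge 1$, we get $0.9\,x_k^{2/3}\ge 0.8\,x_k^{2/3}+1\ge\lceil 0.8\,x_k^{2/3}\rceil$, hence $\phi(x_k)\le x_{k+1}$, completing the step.

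With domination established, I would invoke \cref{lemma:aux_local_result}: since $d_i/\ln n\ge 10^{24}\ge 1$ we have $x_0\le 2d_i/\ln n$, so $\tfrac{3\sqrt[3]{x_0}}{a}=\tfrac{15}{4}\sqrt[3]{x_0}<6\sqrt[3]{d_i/\ln n}$, and therefore $x_{k^*}\le 1$ for the integer $k^*:=\lceil 6\sqrt[3]{d_i/\ln n}\rceil$. If we had $i+k^*\le f$, then $k^*\le f-i+1$, so domination yields $w_{k^*}\le x_{k^*}\le 1$, i.e.\ $d_{i+k^*}\le\ln n$, contradicting $d_{i+k^*}\ge 10^{24}\ln n$ (\cref{lemma:deg_seq_property}, as $i+k^*\le f$). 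Hence $i+k^*>f$, i.e.\ $f-i+1\le k^*\le 6\sqrt[3]{d_i/\ln n}+1\le 7\sqrt[3]{d_i/\ln n}$, where the last inequality uses $\sqrt[3]{d_i/\ln n}\ge 1$.

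The only genuinely delicate point, worth flagging as the main obstacle, is the domination step. Because the recursion in \cref{lemma:aux_local_result} subtracts a \emph{ceiling} $\lceil a\,x_k^{2/3}\rceil$, a naive choice of decay rate $a=0.9$ would let the comparison sequence drop slightly \emph{below} $d_{i+k}/\ln n$ and break the induction; choosing a marginally smaller rate $a=0.8$ (any $a<0.9$ works) and exploiting that $d_{i+k}/\ln n\ge 10^{24}$ is enormous — so the ceiling's $+1$ slack is absorbed by $0.1\,x_k^{2/3}\ge 1$ — is exactly what makes the comparison go through. Everything else is bookkeeping, and the loose constant $7$ leaves ample room.
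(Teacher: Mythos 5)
Your proposal is correct and takes essentially the same route as the paper: both reduce to \cref{lemma:aux_local_result} via the decrement $d_{j+1}\le d_j-0.9\lambda_j$ supplied by \cref{lemma:aux_lemma}, obtain a bound of roughly $6\sqrt[3]{d_i/\ln n}$ steps, and absorb the extra $+1$ into the constant $7$. The only difference is bookkeeping: the paper invokes the auxiliary lemma with $a=0.5\ln^{1/3}n$ directly on the $d_j$ (using $0.9\lambda_j\ge\lceil 0.5\lambda_j\rceil$), whereas you rescale by $\ln n$, take $a=0.8$, and spell out the comparison/domination induction that the paper leaves implicit.
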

\begin{proof}
    Fix such an $i$. Notice that for any $k$ with $i \leq k + i \leq f$ we have $d_{k+i+1} \leq d_{k+i} - 0.9\lambda_{k+i}$ (\cref{lemma:aux_lemma}) and $0.9 \lambda_{k+i} = 0.9 \cdot d_{k+i}^{2/3} \ln^{1/3} n \geq \ceil{0.5 \cdot d_{k+i}^{2/3} \ln^{1/3} n}$ for $n, d_{k+i} \geq 10$. Therefore:
    \begin{equation*}
        d_{k+i+1} \leq d_{k+i} - \ceil{0.5 \cdot d_{k+i}^{2/3} \ln^{1/3} n} \text{  for any $k$ with $i \leq k + i \leq f$.}
    \end{equation*}
    By \Cref{lemma:aux_local_result}, the sequence $(d_{k+i})_{k \geq 0}$ will drop below $1$ after at most $k_{\text{max}} := 6 \sqrt[3]{\frac{d_i}{\ln n}} $ steps. However, $f$ is defined as the highest index $k + i$ for which $d_{k + i} \geq 10^{24} \ln n \geq 1$. This implies that $f \leq i + k_{\text{max}} \leq i + 6 \sqrt[3]{\frac{d_i}{\ln n}}$. In particular: $f - i + 1 \leq 6 \sqrt[3]{\frac{d_i}{\ln n}} + 1 \leq 7 \sqrt[3]{\frac{d_i}{\ln n}}$.
\end{proof}
Finally, we can now upper bound the slacks $a_i$:
\begin{lemma}[Upper bounding slacks $a_i$] \label{lemma:bounding_slacks}
    For $d_0 \geq 1$, define the \emph{slack sequence} $Sl(d_0)$ according to \cref{def:slck_seq}. We have, for any $i \in \{0,\dots,f\}$:
    \begin{equation}
        a_i \leq 10^{24} \ln n + 10^4 \cdot \left( d_i^{3/4} \ln^{1/2} d_i + d_i^{2/3} \ln^{1/3} n \right).
    \end{equation}
\end{lemma}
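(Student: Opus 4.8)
The plan is to start from the already-established bound \eqref{eq:aux_slack_upper_bound},
\[
a_i \leq 10^{24} \ln n + (f - i + 1) \cdot \left(2 \lambda_i \cdot \frac{q_i}{d_i} + 2 \lambda_i \cdot \frac{\lambda_i}{d_i} + 16\sqrt{\lambda_i \ln n} \right),
\]
and feed in the two ingredients already proved: the bound $f - i + 1 \leq 7\sqrt[3]{d_i/\ln n}$ from \cref{lemma:bounding_num_phases}, and the explicit forms $\lambda_i = d_i^{2/3}\ln^{1/3} n$ and $q_i = 10^2 \cdot d_i^{3/4}\ln^{1/2} d_i$ from \cref{def:deg_seq}. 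So the whole proof is a substitution followed by exponent bookkeeping.

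Concretely, I would multiply the factor $7 d_i^{1/3}\ln^{-1/3} n$ into each of the three terms in the parentheses, one at a time. For the first term, $2\lambda_i q_i/d_i = 2\cdot 10^2 \cdot d_i^{2/3+3/4-1}\ln^{1/3} n\,\ln^{1/2} d_i$, and since $2/3 + 3/4 - 1 + 1/3 = 3/4$, multiplying by $7 d_i^{1/3}\ln^{-1/3} n$ yields $14\cdot 10^2\cdot d_i^{3/4}\ln^{1/2} d_i$. For the second term, $2\lambda_i^2/d_i = 2 d_i^{1/3}\ln^{2/3} n$, which after multiplication becomes $14\, d_i^{2/3}\ln^{1/3} n$. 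For the third term, $16\sqrt{\lambda_i \ln n} = 16\, d_i^{1/3}\ln^{2/3} n$, which becomes $112\, d_i^{2/3}\ln^{1/3} n$. Summing, one gets
\[
a_i \leq 10^{24}\ln n + 1400\, d_i^{3/4}\ln^{1/2} d_i + 126\, d_i^{2/3}\ln^{1/3} n,
\]
and since $1400 \leq 10^4$ and $126 \leq 10^4$, this is at most $10^{24}\ln n + 10^4\big(d_i^{3/4}\ln^{1/2} d_i + d_i^{2/3}\ln^{1/3} n\big)$, as claimed.

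There is no real obstacle here: the only thing to watch is that the three error terms in the recursion for $a_j$ are non-increasing in $j$ (already used to pass from the summation to $(f-i+1)$ times the $j=i$ term in \eqref{eq:aux_slack_upper_bound}), and that the constants $14$, $14$, $112$ all comfortably fit under $10^4$ with room to spare, so the loose-looking $10^4$ absorbs everything. If one wanted to be extra careful one could note $\ln^{1/2} d_i$ and $\ln^{1/3} n$ are each $\geq 1$ for the ranges of interest, but this is not even needed since the bound is stated with those exact factors. The lemma then follows directly for every $i \in \{0,\dots,f\}$.
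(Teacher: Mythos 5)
Your proposal is correct and follows essentially the same route as the paper's proof: starting from \eqref{eq:aux_slack_upper_bound}, applying \cref{lemma:bounding_num_phases}, substituting $\lambda_i = d_i^{2/3}\ln^{1/3} n$ and $q_i = 10^2 d_i^{3/4}\ln^{1/2} d_i$, and arriving at the same intermediate constants ($1400$ and $14+112=126$) before absorbing them into $10^4$. The exponent bookkeeping checks out, so there is nothing to add.
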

\begin{proof}
    We begin by recalling \eqref{eq:aux_slack_upper_bound}:
    \begin{equation*}
        a_i \leq 10^{24} \ln n +  (f - i + 1) \cdot \left(2 \lambda_i \cdot \frac{q_i}{d_i} + 2 \lambda_i \cdot \frac{\lambda_i}{d_i} + 16\sqrt{\lambda_i \ln n} \right).
    \end{equation*}
    Replacing $\lambda_i = d_i^{2/3} \ln^{1/3} n$ and applying \cref{lemma:bounding_num_phases}, we get:
    \begin{equation*}
        a_i \leq 10^{24} \ln n + 14 q_i + 14 \cdot d_i^{2/3} \ln^{1/3} n + 112 \cdot d_i^{2/3} \ln^{1/3} n. 
    \end{equation*}
    Replacing $q_i = 10^2 \cdot d_i^{3/4} \ln^{1/2} d_i$ gives:
    \begin{align*}
        a_i &\leq 10^{24} \ln n + 14 \cdot 10^2 \cdot d_i^{3/4} \ln^{1/2} d_i + 126 \cdot d_i^{2/3} \ln^{1/3} n \\
        &\leq 10^{24} \ln n + 10^4 \cdot \left( d_i^{3/4} \ln^{1/2} d_i + d_i^{2/3} \ln^{1/3} n \right).
    \end{align*}
    which is the claimed statement.
\end{proof}

\paragraph{Putting everything together.} By the above discussion, we can, with high probability, color a graph $G$ of known (upper bound on the) maximum degree $d_0 := \Delta$ arriving online edge-by-edge assuming some mild conditions which we now discuss. In the following, we consider the \emph{degree sequence} $D(d_0) := \{d_0,\dots,d_{f+1}\}$ and the \emph{slack sequence} $Sl(d_0) := \{a_0,\dots,a_{f+1}\}$ as defined in \cref{def:deg_seq} and \cref{def:slck_seq}.
\begin{itemize}
    \item By \cref{lemma:deg_seq_property}, for any phase $i$ during the execution of \cref{alg:coloring_generic}, the currently uncolored subgraph $U_i$ has maximum degree at most $d_i$ (see \cref{lemma:deg_seq_property}).

    \item For every arriving edge $e$ of $G$, \cref{alg:coloring_generic} expects to be provided a list $L(e)$ of available colors online. It is required that $|L(e)| \geq 2 \cdot 10^{24} \ln n$ (see \cref{lemma:slck_seq_property}). Furthermore, the algorithm partitions (online) the set of colors $\mathcal{C} := \cup_{e \in G} L(e)$ into $C_0,\dots,C_{f+1}$, and this partitioning is \emph{admissible} as defined in \cref{def:admissible_partitioning}. 
    
    \item For any phase $i \in \{0,\dots,f\}$ in which $e$ is connected to a dense vertex of $U_i$, i.e., $\text{deg}_{U_i}(v) \geq d_i - \lambda_i$, the algorithm uses the sublist $\ell^i(e) := L(e) \cap C_i(e)$, made up of currently unused colors. By the \emph{admissible} choice of the partitioning of colors, it is guaranteed that $\lambda_i \leq |\ell^i(e)| \leq \lambda_i + 10 \sqrt{\lambda_i \ln n}$.
    
    \item If, during some phase $i$ of \cref{alg:coloring_generic}, the edge $e$ is connected to a dense vertex $v$ in $U_i$, i.e.\ $\text{deg}_{U_i}(v) \geq d_i - \lambda_i$, the list $L^i(e)$ must have size at least $|L^i(e)| \geq d_i + a_i$. By \cref{lemma:slck_seq_property}, it suffices that $|L^i(e)| \geq d_i + a_i$ holds for the first phase $i$ in which $e$ is connected to a dense vertex.
\end{itemize}

We sum up the discussion above in a concise theorem:
\begin{theorem}[Main Coloring Theorem] \label{thm:main_coloring}
    Assume the input for \Cref{alg:coloring_generic} is such that every arriving edge $e$ presents a list $L(e)$ with $|L(e)| \geq 2 \cdot 10^{24} \ln n$ colors. Furthermore, the input contains a partitioning of the set of colors $\mathcal{C} := \cup_{e \in G} L(e)$ into $C_0,\dots,C_{f+1}$, satisfying:
    \begin{enumerate}
        \item \label{cond_1_of_main_coloring} 
        The partitioning $C_0,\dots,C_{f+1}$ of the colors is \emph{admissible} as defined in \cref{def:admissible_partitioning}.

        \item \label{cond_2_of_main_coloring}
        For any edge $e = (u,v)$, we have:
        \begin{equation} \label{eq:main_cond_in_main_coloring}
            |L(e)| \geq d_{i(e)} + 10^{24} \ln n + 10^4 \cdot \left( d_{i(e)}^{3/4} \ln^{1/2} d_{i(e)} + d_{i(e)}^{2/3} \ln^{1/3} n \right),
        \end{equation}
        where:
        \begin{equation}
            i(e) := \max \{ i \in \{0,\dots,f+1\} : d_i \in D(d_0), \text{$d_i \geq \deg_G(u)$ \emph{and} $d_i \geq \deg_G(v)$}\}.
        \end{equation}
    \end{enumerate}
    Under the guarantee of such an input, \cref{alg:coloring_generic} colors all edges $e$ of $G$ online with high probability (in $n$), such that every edge $e$ is assigned a color $c \in L(e)$.
\end{theorem}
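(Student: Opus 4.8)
This statement is the payoff of the preceding development, so my plan is to assemble the pieces in the right order while being careful about conditioning. I would first condition on the single global event of \cref{lemma:deg_seq_property}, which holds with probability at least $1-n^{-8}$ and guarantees both that every uncolored subgraph $U_i$ produced by \cref{alg:coloring_generic} has maximum degree at most $d_i$ and that $d_i\geq 10^{24}\ln n$ for all $i\leq f$; since the failure probability in that lemma already absorbs the randomness of every call to \cref{alg:reducing_max_degree_generic}, everything below is deterministic given this event. On this event, phase $i$ runs \cref{alg:reducing_max_degree_generic} on $U_i$ with $\Delta(U_i)=d_i$, and condition~\ref{cond_1_of_main_coloring} (admissibility of $C_0,\dots,C_{f+1}$) is exactly the hypothesis that for every edge $e$ incident to a dense vertex of $U_i$ the sublist $\ell^i(e)=L(e)\cap C_i$ has size in $[\lambda_i,\lambda_i+10\sqrt{\lambda_i\ln n}]$, which is the promise needed to invoke \cref{thm:reducing_max_degree_generic}. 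Hence each such phase colors the edges it touches with colors drawn from their current sublists $\ell^i(e)\subseteq L(e)$ and leaves an uncolored graph $U_{f+1}$ of maximum degree $d_{f+1}<10^{24}\ln n$.

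The crux is the list accounting, which I would organize around the invariant: for every edge $e$ and every phase $i$ in which $e$ is incident to a dense vertex of $U_i$, one has $|L^i(e)|\geq d_i+a_i$, with $a_i$ the slack sequence of \cref{def:slck_seq}. To start the invariant, note that if phase $i$ is the \emph{first} phase in which $e$ touches a dense vertex, then $e$'s list has not been consumed yet, so $|L^i(e)|=|L(e)|$; combining condition~\ref{cond_2_of_main_coloring} with the closed form $a_{i(e)}\leq 10^{24}\ln n+10^4\bigl(d_{i(e)}^{3/4}\ln^{1/2}d_{i(e)}+d_{i(e)}^{2/3}\ln^{1/3}n\bigr)$ of \cref{lemma:bounding_slacks} gives $|L(e)|\geq d_{i(e)}+a_{i(e)}$, and since $d_j+a_j$ is non-increasing in $j$ it suffices to know that $i\geq i(e)$, i.e.\ that $e$ cannot touch a dense vertex before phase $i(e)$. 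To propagate, \cref{lemma:slck_seq_property} turns $|L^i(e)|\geq d_i+a_i$ at a dense-incident phase $i$ into $|L^{i+1}(e)|\geq d_{i+1}+a_{i+1}$, while phases in which $e$ is not dense-incident leave its list untouched, so monotonicity of $d_j+a_j$ again keeps the invariant alive. Feeding the invariant at the end into the Greedy part of \cref{lemma:slck_seq_property} (and using $|L^{f+1}(e)|=|L(e)|\geq 2\cdot 10^{24}\ln n>2d_{f+1}$ for edges never touched) yields $|L^{f+1}(e)|\geq 2d_{f+1}$ for every $e\in U_{f+1}$, so the final Greedy pass legally colors $U_{f+1}$ from the leftover lists. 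Thus every edge of $G$ is colored with a color of its own list, with the only probabilistic loss being the $n^{-8}$ from \cref{lemma:deg_seq_property}.

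The step I expect to fight with is the clause ``$e$ cannot touch a dense vertex before phase $i(e)$'': one must check that a vertex $w$ with $\deg_{U_i}(w)\geq d_i-\lambda_i$ forces $d_i$ to be small enough relative to $\deg_G(w)$ to conclude $i\geq i(e)$, which should fall out of $\deg_{U_i}(w)\leq\deg_G(w)$ together with the near-geometric decay $d_{i+1}\leq d_i-0.9\lambda_i$ from \cref{lemma:aux_lemma}, but pinning down the constants here (and, dually, confirming that ``inactive'' phases genuinely leave $L^i(e)$ unchanged under whatever admissible partition the calling application supplies) is the delicate point; everything else is routine bookkeeping over the lemmas already established.
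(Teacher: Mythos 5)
Your proposal is correct and follows essentially the same route as the paper's own proof: condition on \cref{lemma:deg_seq_property}, run \cref{thm:reducing_max_degree_generic} phase by phase via admissibility, seed the invariant $|L^i(e)|\ge d_i+a_i$ at the first dense-incident phase using condition~\ref{cond_2_of_main_coloring} together with \cref{lemma:bounding_slacks}, propagate it with \cref{lemma:slck_seq_property}, and finish with Greedy. The step you flag as delicate---that $e$ cannot be dense-incident before phase $i(e)$---is precisely the step the paper also asserts without proof, and your suggested derivation in fact only yields ``not before phase $i(e)-1$'': a vertex is already dense in $U_i$ when its degree is $\ge d_i-\lambda_i$, which is at most $d_{i+1}$, so the first dense-incident phase can be $i(e)-1$; but since $d_j+a_j$ decreases by only $\lambda_j+10\sqrt{\lambda_j\ln n}$ per phase, this one-phase slip (like the analogous looseness in your claim that $|L^i(e)|=|L(e)|$ at that first phase) is comfortably absorbed by the factor-$10^4$ slack in \eqref{eq:main_cond_in_main_coloring}, so it does not change the verdict that your argument and the paper's coincide.
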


\begin{proof}
    By the discussion preceding the theorem statement, it suffices to prove that the lists $L^i(e)$ are large enough across all phases $i$, such that the statement follows by \cref{lemma:deg_seq_property} and \cref{lemma:slck_seq_property}. More concretely, as discussed previously and implied by \cref{lemma:slck_seq_property}, it suffices to prove for any edge $e$ that $|L^i(e)| \geq d_i + a_i$ holds for the first phase $i$ in which $e$ is connected to a dense vertex. We claim that condition 2 from the theorem statement implies this fact. Indeed, first let:
    \begin{equation}
        d_{\text{max}}(e) := \max\{ \text{deg}_G (u) : e \in u \}
    \end{equation}
    be the degree of the largest-degree vertex connected to $e$ in the original graph $G$. Furthermore, consider $i(e)$ as defined in the theorem statement be the largest index $\leq f + 1$ in the \emph{degree sequence} of $d_0$ for which $d_{i(e)} \geq d_{\text{max}}(e)$. In particular, the index of the first phase $i$ in which $e$ is connected to a dense vertex in $U_i$ is at least $i(e)$. By \cref{lemma:slck_seq_property}, if the initial list $L(e)$ provided for $e$ has size $|L(e)| \geq d_{i(e)} + a_{i(e)}$, then for all phases $j \geq i(e)$ in \cref{alg:coloring_generic} we will have $|L^j(e)| \geq d_j + a_j$. To end the proof, notice that by \cref{lemma:bounding_slacks}, the imposed condition \eqref{eq:main_cond_in_main_coloring} implies $|L(e)| \geq d_{i(e)} + a_{i(e)}$.
\end{proof}
\subsection{List Edge Coloring} \label{sec:list-coloring}

In this section we consider online \emph{list} edge coloring, where we recall that each edge $e=(u,v)$ arrives with a list $L(e)\subseteq \mathbb{N}$, and can only be colored using a color from $L(e)$ (while again guaranteeing no vertex has more than one edge of any color).
Our main result for this problem is the following.

\ListColoringTheorem*

\begin{proof}
We prove \cref{thm:list_edge_coloring} by running \cref{alg:coloring_generic} and applying \cref{thm:main_coloring}. For this purpose, we design an online algorithm $\mathrm{Input}$ such that conditions \ref{cond_1_of_main_coloring} and \ref{cond_2_of_main_coloring} are fulfilled. First, by assumption, all provided lists $L(e)$ have size $|L(e)| \geq \Delta + q$, which already guarantees condition \ref{cond_2_of_main_coloring} of \cref{thm:main_coloring}. Hence, it remains to design an online algorithm which partitions $\mathcal{C} := \cup_{e \in G} L(e)$ into $C_0,\dots,C_{f+1}$ such that condition \ref{cond_1_of_main_coloring} is fulfilled (i.e.\ \emph{admissibility} of the partitioning as defined in \cref{def:admissible_partitioning}).

As an additional point, we run \cref{alg:coloring_generic} with a minor change, which can be made without loss of generality. Concretely, before the start of any phase $i \in \{0,\dots,f\}$, if any list $L^i(e)$ of remaining colors for some edge $e$ has size $|L^i(e)| > d_i + a_i$, we prune some of the extra colors (chosen arbitrarily) in the list such that $|L^i(e)| = d_i + a_i$ holds. Effectively these extra colors are simply ignored.

Now we describe the \emph{admissible} partitioning of colors. We define $C_0,\dots,C_f$ as follows: For any $i \in \{0,\dots,f\}$ let $\mathcal{C}^i := \mathcal{C} \setminus \cup_{0 \leq j < i} \ C_j$ be the set of not (yet) used colors before phase $i$. Construct $C_i$ by sampling (online) any color $c \in \mathcal{C}^i$ with probability $p_i := (\lambda_i + 5\sqrt{\lambda_i \ln n})/(d_i + a_i)$. We argue that $p_i$ is a valid probability, i.e., $p_i \leq 1$. Indeed, this inequality is implied by $d_i \geq \lambda_i + 5\sqrt{\lambda_i \ln n}$. To show this, notice that by \cref{lemma:aux_lemma}, we have $5 \sqrt{\lambda_i \ln n} \leq 0.1 \lambda_i$, such that the previous inequality is implied by $d_i \geq 1.1 \lambda_i$, or $d_i \geq (1.1)^2 \ln n$ which is obviously true.

It is clear that this sampling algorithm can be implemented online. Whenever a color $c \in L(e)$ is seen for the first time, in the list $L(e)$ of some arriving edge $e$, we iterate through $i \in \{0,\dots,f\}$ and assign $c$ to $C_i$ with corresponding probability $p_i$ (and stop the algorithm as soon as $c$ is assigned to some such $C_i$). Finally, let $C_{f+1}$ contain all those colors that have been assigned to none of $C_0,\dots,C_f$.

We claim that the above partitioning algorithm provides an \emph{admissible} partitioning as defined in \cref{def:admissible_partitioning} and as required by condition \ref{cond_1_of_main_coloring} of \cref{alg:coloring_generic}. In fact, we will prove something slightly stronger by arguing the admissibility condition holds for \emph{all} edges $e$, and not only those which are connected to a dense vertex. In order to do this, fix a phase $0 \leq i \leq f$ and assume that condition \ref{cond_1_of_main_coloring} was fulfilled for all previous phases and edges in the past. It remains to prove that the sublists induced by the algorithm, $\ell^i(e) := L(e) \cap C^i(e)$, have the required size for all edges $e$:
\begin{lemma} \label{lemma:helper_lemma_list_edge_coloring}
    The above partitioning algorithm induces sublists $\ell^i(e) := L^i(e) \cap C_i(e)$, which satisfy with probability at least $1 - \frac{1}{n^5}$:
    \begin{equation*}
        \lambda_i \leq |\ell^i(e)| \leq \lambda_i + 10 \sqrt{\lambda_i \ln n} \text{ for all $e$.}
    \end{equation*}
\end{lemma}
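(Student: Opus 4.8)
The plan is to fix a phase $i$ and an edge $e$, condition on the history up to the start of phase $i$ (so that $L^i(e)$ is a fixed set of size exactly $d_i+a_i$, after the pruning step), and view $|\ell^i(e)| = |L^i(e)\cap C_i|$ as a sum of independent indicator variables, one for each color $c\in L^i(e)$, where the indicator for $c$ is $1$ iff $c$ is sampled into $C_i$. A color $c$ lands in $C_i$ exactly when it was not placed in $C_0,\dots,C_{i-1}$ and then is placed in $C_i$; but once we condition on $c\in\mathcal{C}^i$ (equivalently $c\in L^i(e)$, since surviving colors are precisely those not yet assigned), the event ``$c\to C_i$'' happens with probability exactly $p_i=(\lambda_i+5\sqrt{\lambda_i\ln n})/(d_i+a_i)$, independently across the colors in $L^i(e)$. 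Hence $\E[|\ell^i(e)|]=(d_i+a_i)\cdot p_i=\lambda_i+5\sqrt{\lambda_i\ln n}$, which sits comfortably between the target bounds $\lambda_i$ and $\lambda_i+10\sqrt{\lambda_i\ln n}$, with slack $5\sqrt{\lambda_i\ln n}$ on each side.

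The next step is a Chernoff bound. Writing $m_i:=\E[|\ell^i(e)|]=\lambda_i+5\sqrt{\lambda_i\ln n}$, I would apply the standard additive/multiplicative Chernoff inequality to the independent $0/1$ sum, with deviation parameter chosen so that a deviation of $5\sqrt{\lambda_i\ln n}$ is ruled out: concretely, both lower tail $\Pr[|\ell^i(e)|<\lambda_i]=\Pr[|\ell^i(e)|<m_i-5\sqrt{\lambda_i\ln n}]$ and upper tail $\Pr[|\ell^i(e)|>\lambda_i+10\sqrt{\lambda_i\ln n}]=\Pr[|\ell^i(e)|>m_i+5\sqrt{\lambda_i\ln n}]$ are bounded by $\exp(-\Omega((\sqrt{\lambda_i\ln n})^2/\lambda_i))=\exp(-\Omega(\ln n))$. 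Since $\lambda_i=d_i^{2/3}\ln^{1/3}n$ and $d_i\geq 10^{24}\ln n$ for $i\leq f$ (by \cref{lemma:deg_seq_property}), one has $\lambda_i\geq 10^{16}\ln n$, so $m_i$ is within a constant factor of $\lambda_i$ and the exponent is at least, say, $6\ln n$ after plugging in constants; this gives failure probability at most $n^{-6}$ for a single edge $e$. A union bound over the at most $n^2$ edges of $G$ then yields the claimed ``$\lambda_i\leq|\ell^i(e)|\leq\lambda_i+10\sqrt{\lambda_i\ln n}$ for all $e$'' with probability at least $1-n^{-5}$ (absorbing constants; if tighter constants are needed one simply inflates the Chernoff exponent, which costs only a constant factor in the already-generous slack).

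The main obstacle — really the only subtle point — is justifying the \emph{independence} of the color-assignment indicators and the fact that the conditional assignment probability is exactly $p_i$ regardless of how many colors were consumed in earlier phases: one must argue that conditioning on the phase-$i$ history fixes $L^i(e)$ but leaves the phase-$i$ coin flips (for colors newly seen in phase $i$, or re-examined) fresh, and that the online implementation ``iterate through $j=0,\dots,f$, assign $c$ to $C_j$ with probability $p_j$, stop on success'' makes ``$c\in C_i \mid c\in\mathcal{C}^i$'' a single independent $\mathrm{Ber}(p_i)$ event for each $c$. This is a standard deferred-decisions argument, but it should be spelled out so that the Chernoff application to independent variables is airtight; everything after that is routine calculation using $d_i\geq 10^{24}\ln n$ and the inequalities of \cref{lemma:aux_lemma}. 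I would also remark that the ``$5\sqrt{\lambda_i\ln n}$'' slack baked into $p_i$ was chosen precisely so that both tails have the same width, and that the pruning-to-exactly-$d_i+a_i$ step is what lets us treat $|L^i(e)|$ as deterministic when computing the mean.
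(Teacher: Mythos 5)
Your proposal is correct and follows essentially the same route as the paper: after pruning so that $|L^i(e)| = d_i + a_i$ exactly, view $|\ell^i(e)|$ as a Binomial with mean $\lambda_i + 5\sqrt{\lambda_i\ln n}$ (independent per-color coins at rate $p_i$), and apply a Chernoff bound with deviation $5\sqrt{\lambda_i\ln n}$, using $d_i \geq 10^{24}\ln n$ to make the exponent $\Omega(\ln n)$. The only cosmetic differences are that you perform the union bound over edges inside the lemma (the paper proves the per-edge bound $1-n^{-5}$ and defers the union bound over edges and phases to the surrounding text), and your quoted constants ($n^{-6}$ per edge times $n^2$ edges) are slightly loose for the stated $1-n^{-5}$, but as you note the exponent has ample slack, matching the paper's level of precision.
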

\begin{proof}[Proof of \cref{lemma:helper_lemma_list_edge_coloring}]
    Fix an edge $e$ and its list of available colors $L^i(e)$. Since condition \ref{cond_1_of_main_coloring} of \cref{thm:main_coloring} is satisfied for all edges $e$ until the current point in the execution of \cref{alg:coloring_generic}, we have $|L^i(e)| \geq d_i + a_i$. By our slight modification of \cref{alg:coloring_generic}, we may assume equality, $|L^i(e)| = d_i + a_i$.
    
    Hence, every color $c \in L^i(e)$ is picked by the sampling algorithm with probability $p := (\lambda_i + 5\sqrt{\lambda_i \ln n})/|L^i(e)|$, such that the expected number $X$ of chosen colors is distributed $X \sim \text{Bin}(|L^i(e)|, p)$ with $\mu := \E[X] = \lambda_i + 5 \sqrt{\lambda_i \ln n}$. By a standard Chernoff bound, we have for any $\varepsilon \in [0,1]$:
    \begin{equation*}
        \Pr[ |X - \mu| \geq \varepsilon \mu ] \leq 2 \exp \left( -\frac{\varepsilon^2 \cdot \lambda_i}{3} \right).
    \end{equation*}
    Fix $\varepsilon := \sqrt{18 \cdot \frac{\ln n}{\mu}}$. As $d_i \geq 10^{24} \ln n$, we have that $\mu \geq \lambda_i = d_i^{2/3} \ln^{1/3} n \geq 18 \ln n$, and so $\varepsilon \in [0,1]$. Furthermore , we have $\varepsilon \mu \leq 5 \sqrt{\lambda_i \ln n}$. Indeed, this is equivalent (by squaring) to:
    \begin{align*}
        18 \ln n \cdot (\lambda_i &+ 5 \sqrt{\lambda_i \ln n}) \leq 25 \lambda_i  \ln n \\
        90\sqrt{\lambda_i \ln n} &\leq 7 \lambda_i.
    \end{align*}
    This last expression is equivalent to $d_i \geq (90/7)^4 \ln n$ which is obvious. Hence, the Chernoff inequality gives:
    \begin{equation*}
        \Pr[ |X - \mu| \geq 5 \sqrt{\lambda_i \ln n}] \leq 2 \cdot \frac{1}{n^{6}} \leq \frac{1}{n^5},
    \end{equation*}
    and so $X = |\ell^i(e)| \in [\lambda_i, \lambda_i + 10 \sqrt{\lambda_i \ln n}]$ with probability at least $1 - \frac{1}{n^5}$ as claimed.
\end{proof}

Now that the lemma is proven, it follows easily by union bound that, with high probability in $n$, all induced sublists $\ell^i(e)$ in all phases $i \in \{0,\dots,f\}$ have the required size, and so condition \ref{cond_1_of_main_coloring} of \cref{thm:main_coloring} is fulfilled. This finishes the proof of \cref{thm:list_edge_coloring}.
\end{proof}

\paragraph{An improvement of \cref{lemma:reduction2}.} As a final note, we observe that the slack $q$ in \cref{thm:list_edge_coloring} can be naturally decomposed into two parts: the first part is the $O(\Delta^{3/4}\log^{1/2}\Delta)$-term, which comes directly from the application of \cref{thm:reducing_max_degree_generic}, which in turn relies on the fact that---by \cref{thm:online_matching}---we have access to an online matching algorithm that matches any edge $e$ with probability $1 / (\Delta + \Theta(\Delta^{3/4}\log^{1/2}\Delta))$. The second part of the slack is the $O(\Delta^{2/3}\log^{1/3}n)$-term, which comes from our choice of $\lambda$ in \cref{alg:coloring_generic}.

However, it is not hard to see that these two parts of the final slack $q$ in \cref{thm:list_edge_coloring} arise independently of each other and, more generally, if one had access to an online matching algorithm with a different guarantee than $1 / (\Delta + \Theta(\Delta^{3/4}\log^{1/2}\Delta))$ matching probability per edge, this would directly translate to a change in the corresponding first term of the final slack $q$. More concretely, for the classical online edge coloring problem, we can generalize \cref{lemma:reduction2} to obtain the following reduction from online edge coloring to online matching:

\begin{lemma}[Improved Reduction]\label{lemma:reduction_improved_appendix}
    Let $\calA$ be an online matching algorithm that, on any graph of maximum degree $\Delta = \omega(\log n)$, matches each edge with probability at least $1 / (\alpha \cdot \Delta)$, for $\alpha \geq 1$. 
    Then, there exists an online edge coloring algorithm $\calA'$ that on any graph with maximum degree $\Delta = \omega(\log n)$ outputs an edge coloring with $(\alpha + O((\log n / \Delta)^{1/3})) \cdot \Delta$ colors with high probability in $n$.
\end{lemma}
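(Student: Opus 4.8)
The plan is to rerun the generic coloring framework of \Cref{sec:matching_peeling} --- \Cref{alg:reducing_max_degree_generic} driven by \Cref{alg:coloring_generic} --- with the single black-box substitution of $\calA$ for \Cref{alg:edge-arrival}, and with the per-phase color budgets inflated by the factor $\alpha$. Two preliminary simplifications: first, since $(\alpha+o(1))\Delta\ge 2\Delta-1$ once $\alpha\ge 2$, we may assume $\alpha<2$ (otherwise run Greedy), so $\alpha=O(1)$; second, for the \emph{classical} edge coloring problem the full palette is known in advance, so --- unlike in \Cref{thm:list_edge_coloring} --- no randomized partition of the colors is needed. Concretely, set $\lambda_i:=d_i^{2/3}\ln^{1/3}n$ and define a degree sequence $d_0:=\Delta>d_1>\cdots$ by $d_{i+1}:=d_i-\lambda_i+O(\lambda_i^2/d_i+\sqrt{\lambda_i\ln n})$ --- this is \Cref{def:deg_seq} with every $q_i$ set to $0$ --- terminated at the first index $f+1$ for which $d_{f+1}<\Delta^{2/3}\ln^{1/3}n$, so that $d_{f+1}=O(\Delta^{2/3}\ln^{1/3}n)$. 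Take the target palette $\{1,\dots,C\}$ with $C:=\sum_{i=0}^{f}\ceil{\alpha\lambda_i}+2d_{f+1}$, partitioned into consecutive blocks $C_0,\dots,C_{f+1}$ with $|C_i|=\ceil{\alpha\lambda_i}$ and $|C_{f+1}|=2d_{f+1}$. In phase $i$, feed each edge incident to a dense vertex of $U_i$ (degree $\ge d_i-\lambda_i$) to \Cref{alg:reducing_max_degree_generic} with sublist $\ell^i(e):=C_i$; at the end, color $U_{f+1}$ greedily using $C_{f+1}$.

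The one ingredient needing a (routine) re-derivation is the per-phase degree-drop guarantee of \Cref{thm:reducing_max_degree_generic}. Fix a phase $i$ and a dense vertex $v$ of $U_i$. Each color $c\in C_i$ runs an independent copy of $\calA$ on the subgraph of currently uncolored, dense-incident edges, which has maximum degree $\le d_i$; since $d_i\ge\Delta^{2/3}\ln^{1/3}n=\omega(\log n)$ (because $\Delta=\omega(\log n)$), $\calA$'s promise applies and colors each such edge $e$ with probability $\ge 1/(\alpha d_i)$. Hence $e$ is colored by \emph{some} $c\in C_i$ with probability at least
\[
1-\Big(1-\tfrac{1}{\alpha d_i}\Big)^{|C_i|}\;\ge\;1-e^{-\lambda_i/d_i}\;\ge\;\tfrac{\lambda_i}{d_i}-\Big(\tfrac{\lambda_i}{d_i}\Big)^{2},
\]
which is exactly the lower bound used in the proof of \Cref{thm:reducing_max_degree_generic} with ``$d+q$'' replaced by ``$d$''. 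As there, the indicators that the edges of $v$ in $U_i$ get colored are negatively associated via a balls-into-bins coupling (balls $=$ colors of $C_i$, bins $=$ edges of $v$), so a Chernoff bound gives that with probability $\ge 1-n^{-15}$ at least $\lambda_i-O(\lambda_i^2/d_i+\sqrt{\lambda_i\ln n})$ of $v$'s edges are colored; the verification that $d_{i+1}\le d_i-0.9\lambda_i$ is the analogue of \Cref{lemma:aux_lemma} with $q=0$, which holds since $d_i=\omega(\log n)$. A union bound over the $\le n$ phases then shows that w.h.p.\ in $n$ every $U_i$ has maximum degree $\le d_i$, so in particular $U_{f+1}$ has maximum degree $<\Delta^{2/3}\ln^{1/3}n$ and Greedy succeeds with the $2d_{f+1}$ colors of $C_{f+1}$.

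It remains to bound $C$. Summing $\lambda_i=(d_i-d_{i+1})+O(\lambda_i^2/d_i+\sqrt{\lambda_i\ln n})$ over the phases and plugging in $\lambda_i=d_i^{2/3}\ln^{1/3}n$ (so each per-phase error term is $O(d_i^{1/3}\ln^{2/3}n)$, and by \Cref{lemma:aux_local_result} the number of phases is $O(\Delta^{1/3}\ln^{-1/3}n)$) yields
\[
\sum_{i=0}^{f}\lambda_i\;=\;(\Delta-d_{f+1})+O\!\big(\Delta^{2/3}\ln^{1/3}n\big)\;=\;\Delta+O\!\big(\Delta^{2/3}\ln^{1/3}n\big),
\]
using $d_{f+1}=O(\Delta^{2/3}\ln^{1/3}n)$; equivalently, this is the slack-sequence bound of \Cref{lemma:bounding_slacks}. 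Therefore, since $\alpha=O(1)$,
\[
C\;=\;\alpha\sum_{i=0}^{f}\lambda_i+O\!\big(\Delta^{1/3}\ln^{-1/3}n\big)+2d_{f+1}\;=\;\alpha\Delta+O\!\big(\Delta^{2/3}\ln^{1/3}n\big)\;=\;\Big(\alpha+O\big((\log n/\Delta)^{1/3}\big)\Big)\cdot\Delta,
\]
as claimed; note that the additive error $\Delta^{2/3}\ln^{1/3}n$ matches the lower-order term already appearing in \Cref{thm:ec-body}.

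The step I expect to require the most care is the accounting that keeps the factor $\alpha$ attached to the leading term only: each phase spends $\approx\alpha\lambda_i$ colors but shrinks the degree by only $\approx\lambda_i$, so one must show $\sum_i\lambda_i=\Delta+o(\Delta)$ --- via the telescoping estimate above / the slack bound of \Cref{lemma:bounding_slacks} --- rather than merely $\sum_i\lambda_i=\Theta(\Delta)$, which would inflate the leading term to $c\alpha\Delta$ for a constant $c>1$. A minor secondary point is that a generic $\calA$ is only guaranteed in the regime $\Delta=\omega(\log n)$, which is why the recursion is terminated at $d_{f+1}=\Theta(\Delta^{2/3}\log^{1/3}n)$ (still $\omega(\log n)$, and still within the error budget) instead of at $\Theta(\log n)$ as in \Cref{def:deg_seq}.
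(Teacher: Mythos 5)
Your proposal is correct and follows essentially the same route the paper intends: it reruns the degree-reduction/peeling framework of \Cref{alg:reducing_max_degree_generic,alg:coloring_generic} with the black-box matcher $\calA$ in place of \Cref{alg:edge-arrival}, so that the $\Delta^{3/4}\log^{1/2}\Delta$ part of the slack is replaced by $(\alpha-1)\Delta$ while the $\Delta^{2/3}\log^{1/3}n$ part from the choice of $\lambda_i$ persists --- exactly the decomposition the paper sketches when stating \cref{lemma:reduction_improved_appendix}. Your write-up in fact fills in details the paper leaves implicit (the telescoping color accounting, and terminating the recursion at $d_{f+1}=\Theta(\Delta^{2/3}\log^{1/3}n)$ so every phase stays in the $\omega(\log n)$ regime where $\calA$'s guarantee applies), and these are handled correctly.
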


\subsection{Local Edge Coloring} \label{sec:local-coloring}

In this section we consider online \emph{local} edge coloring, where we recall that we wish to color each edge $e=(u,v)$ with a color not much higher than $d_{\text{max}}(e) := \max\{\deg(u),\deg(v)\}$. Our main result for this problem is the following.

\LocalColoringTheorem*

\begin{proof}
The statement of \cref{thm:local_edge_coloring_strenghtened} is almost an immediate consequence of our more general \cref{thm:main_coloring}. For $i \in \{0,\dots,f+1\}$, let $d_i$ and $a_i$ denote the entries from the \emph{degree sequence} and \emph{slack sequence} of $d_0 := \Delta(G)$ as defined in \cref{def:deg_seq} and \cref{def:slck_seq} (also see \cref{thm:main_coloring}). We define the set of colors $\mathcal{C}$ to be $\mathcal{C} := \{1,\dots,d_0 + a_0\}$ and propose the following partitioning:
\begin{align*}
    C^i &:= \{d_i + a_i - (\lambda_i - 1), \dots, d_i + a_i\} \text{ for $i \in \{0,\dots,f\}$} \\
    C^{f+1} &:= \{1,\dots,2 \cdot d_f \}.
\end{align*}
The fact that this is a valid partitioning follows from \cref{lemma:slck_seq_property}.

Now fix an edge $e =(u,v)$ and let $d_{\text{max}}(e) := \max\{\deg(u), \deg(v)\}$. If $d_{\text{max}}(e) \leq d_{f+1}$, consider the following input list $L(e)$ of available colors:
\begin{equation}
    L(e) := \{1, \dots, 2 \cdot d_{f+1}\}.
\end{equation}
As $e$ is never connected to a dense vertex during phases $i \in \{0,\dots,f\}$, condition \ref{cond_1_of_main_coloring} is vacuously true. Furthermore, $e$ can be assigned in phase $f+1$ a color $c(e) \in \ell^{f+1}(e) = L(e) \cap C^{f+1} = L(e)$, such that $c(e) \leq 2 \cdot d_{f+1} \leq 2 \cdot 10^{24} \ln n$, which implies the statement from \cref{thm:main_coloring} for edge $e$.

In the following, assume $d_{\text{max}}(e) > d_{f+1}$. Define $i(e)$ as in \cref{thm:main_coloring} and notice that $i(e) \leq f$ because $d_{\text{max}}(e) > d_{f+1}$. Consider the following input list $L(e)$:
\begin{equation} \label{eq:construction_of_Le_local_edge_coloring}
    L(e) := \left\{ 1, \dots, d_{i(e)} + 10^{24} \ln n + 10^4 \cdot \left( d_{i(e)}^{3/4} \ln^{1/2} d_{i(e)} + d_{i(e)}^{2/3} \ln^{1/3} n \right) \right\}.
\end{equation}
Condition \ref{cond_2_of_main_coloring} of \cref{thm:main_coloring} is trivially fulfilled. Consider the induced partitioning of $L(e)$ into sublists $\ell^0(e),\dots,\ell^{f+1}(e)$, where $\ell^{i}(e) := L(e) \cap C^i$. It holds that $|\ell^i(e)| = \lambda_i$ for any phase $i \leq f$ in which edge $e$ is connected to a dense vertex of $U_i$. This is because, for all such $i$, we have $C^i \subseteq L(e)$ by \cref{lemma:bounding_slacks}, and $|C^i| = \lambda_i$. This property guarantees condition \ref{cond_1_of_main_coloring} of \cref{thm:main_coloring}.

By \cref{thm:main_coloring}, we obtain that, with high probability, \cref{alg:coloring_generic} assigns each edge $e$ a color $c(e)$ satisfying $c(e) \leq |L(e)|$.
Recall $d_{\text{max}}(e) := \max\{\deg(u), \deg(v)\} > d_{f+1}$, which implies $i(e) \leq f$. First, we will make use the following inequality, whose proof is deferred:
\begin{lemma} \label{lemma:helper_lemma_local_edge_coloring_1}
    With the above notations, and assuming $i(e) \leq f$, it holds that:
    \begin{equation}
        d_{i(e)} \leq d(e) := d_{\text{max}}(e) + 2 \cdot d_{\text{max}}^{2/3}(e) \ln^{1/3} n.
    \end{equation}
\end{lemma}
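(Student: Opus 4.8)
The plan is to read off what $i(e)$ means and combine it with the one-step recurrence $d_{i+1} = d_i - \lambda_i + 2\lambda_i\cdot\frac{q_i+\lambda_i}{d_i+q_i} + 6\sqrt{\lambda_i\ln n}$ from \cref{def:deg_seq}. Since $i(e)$ is the largest index in the degree sequence $D(d_0)$ with $d_i \ge d_{\text{max}}(e)$ and we are assuming $i(e)\le f$, the next entry $d_{i(e)+1}$ is still a legitimate member of the sequence, and by maximality of $i(e)$ we must have $d_{i(e)+1} < d_{\text{max}}(e)$. Moreover, because $i(e)\le f$, \cref{lemma:deg_seq_property} (equivalently the definition of $f$) gives $d_{i(e)} \ge 10^{24}\ln n$; this ``largeness'' of $d_{i(e)}$ is exactly what powers the bootstrap below, so I would record it first.

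Next I would use the recurrence to bound the single drop $d_{i(e)} - d_{i(e)+1}$. Since the two correction terms $2\lambda_i\cdot\frac{q_i+\lambda_i}{d_i+q_i}$ and $6\sqrt{\lambda_i\ln n}$ are non-negative, one immediately gets $d_{i(e)} - d_{i(e)+1} \le \lambda_{i(e)} = d_{i(e)}^{2/3}\ln^{1/3}n$. Combining this with $d_{i(e)+1} < d_{\text{max}}(e)$ yields the preliminary bound $d_{i(e)} < d_{\text{max}}(e) + d_{i(e)}^{2/3}\ln^{1/3}n$, which is almost the claimed inequality except that the lower-order term is written in terms of $d_{i(e)}$ rather than $d_{\text{max}}(e)$.

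The only nontrivial step — and the main (quite minor) obstacle — is converting $d_{i(e)}^{2/3}$ into $d_{\text{max}}^{2/3}(e)$. I would do this by a short bootstrap using the lower bound $d_{i(e)} \ge 10^{24}\ln n$: if $d_{i(e)} > 2 d_{\text{max}}(e)$, then the preliminary bound forces $d_{i(e)}/2 < d_{i(e)}^{2/3}\ln^{1/3}n$, i.e. $d_{i(e)}^{1/3} < 2\ln^{1/3}n$, i.e. $d_{i(e)} < 8\ln n$, contradicting $d_{i(e)} \ge 10^{24}\ln n$. Hence $d_{i(e)} \le 2 d_{\text{max}}(e)$, so $d_{i(e)}^{2/3} \le 2^{2/3} d_{\text{max}}^{2/3}(e) \le 2\, d_{\text{max}}^{2/3}(e)$. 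Plugging this back into the preliminary bound gives $d_{i(e)} < d_{\text{max}}(e) + 2\, d_{\text{max}}^{2/3}(e)\ln^{1/3}n = d(e)$, which is the assertion. All steps are elementary arithmetic; the only point requiring care is that $i(e)+1$ must remain a valid index in $\{0,\dots,f+1\}$, which is precisely guaranteed by the hypothesis $i(e)\le f$, and that \cref{lemma:aux_lemma} is available to justify that $d_i\ge 10^{24}\ln n$ for all $i\le f$.
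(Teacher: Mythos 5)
Your proposal is correct and follows essentially the same route as the paper: both rest on the maximality of $i(e)$ giving $d_{i(e)+1} \leq d_{\text{max}}(e)$, the recurrence of \cref{def:deg_seq} giving $d_{i(e)} - d_{i(e)+1} \leq \lambda_{i(e)} = d_{i(e)}^{2/3}\ln^{1/3} n$, and the floor $d_{i(e)} \geq 10^{24}\ln n$ for $i(e) \leq f$. The only (cosmetic) difference is the final algebraic step: the paper substitutes $d_{i(e)} \geq d_{\text{max}}(e)$ inside the cube root and uses $\frac{1}{1-x} \leq 1+2x$ for $x < 1/2$, whereas you bootstrap $d_{i(e)} \leq 2\,d_{\text{max}}(e)$ by contradiction and then convert $d_{i(e)}^{2/3} \leq 2\,d_{\text{max}}^{2/3}(e)$ --- both are valid.
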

Combining with the definition of $L(e)$ in \eqref{eq:construction_of_Le_local_edge_coloring} and with the fact that $c(e) \leq |L(e)|$, we obtain by the above lemma that:
\begin{equation} \label{eq:upper_bound_c(e)_1}
    c(e) \leq |L(e)| \leq g(e) := d(e) + 10^{24}\ln n + 10^4 \cdot \left( d(e)^{3/4} \ln^{1/2} d(e) + d(e)^{2/3} \ln^{1/3} n \right).
\end{equation}
To finish the proof of the theorem, it suffices to show that:
\begin{lemma} \label{lemma:helper_lemma_local_edge_coloring_2}
    With $g(e)$ as defined above, we have:
    \begin{equation} \label{eq:upper_bound_c(e)_2}
        g(e) \leq d_\text{max}(e) + 2 \cdot 10^{24} \ln n + 10^5 \cdot \left( d_{\text{max}}^{3/4}(e) \ln^{1/2} d_{\text{max}}(e) + d_{\text{max}}^{2/3}(e) \ln^{1/3} n \right).
    \end{equation}
\end{lemma}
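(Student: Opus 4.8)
The plan is a direct, if slightly tedious, computation comparing $g(e)$ with the claimed bound. Throughout, abbreviate $D := d_{\text{max}}(e)$ and $L := \ln n$, so that $d(e) = D + 2D^{2/3}L^{1/3}$ and
\[
    g(e) = d(e) + 10^{24}L + 10^4\bigl(d(e)^{3/4}\ln^{1/2} d(e) + d(e)^{2/3}L^{1/3}\bigr).
\]
I would split into two regimes according to whether $D$ is large relative to $L$, namely $D \geq 10^{24}L$ or $D < 10^{24}L$.

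\emph{Main regime: $D \geq 10^{24}L$.} Here $(L/D)^{1/3} \leq 10^{-8}$, so $d(e) = D\bigl(1 + 2(L/D)^{1/3}\bigr) \leq (1+10^{-7})D \leq 1.001\,D$; in particular $d(e) - D = 2D^{2/3}L^{1/3}$ is exactly (a constant times) one of the allowed correction terms. Using that $x\mapsto x^{3/4}$ and $x\mapsto x^{2/3}$ are increasing, and that $\ln(1.001\,D) = \ln D + \ln 1.001 \leq 1.001\ln D$ (valid since $D \geq 10^{24}L \geq 10$), I would bound $10^4 d(e)^{3/4}\ln^{1/2} d(e) \leq 1.01\cdot 10^4\, D^{3/4}\ln^{1/2} D$ and $10^4 d(e)^{2/3}L^{1/3} \leq 1.01\cdot 10^4\, D^{2/3}L^{1/3}$. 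Adding these to $d(e) = D + 2D^{2/3}L^{1/3}$ and to $10^{24}L$, and observing that the resulting coefficient of $D^{3/4}\ln^{1/2} D$ is at most $10^5$, that of $D^{2/3}L^{1/3}$ is $2 + 1.01\cdot 10^4 \leq 10^5$, and $10^{24}L \leq 2\cdot 10^{24}L$, gives precisely $g(e) \leq D + 2\cdot 10^{24}L + 10^5\bigl(D^{3/4}\ln^{1/2} D + D^{2/3}L^{1/3}\bigr)$, as desired.

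\emph{Small regime: $D < 10^{24}L$.} Then $D^{2/3}L^{1/3} < (10^{24}L)^{2/3}L^{1/3} = 10^{16}L$, so $d(e) < 10^{24}L + 2\cdot 10^{16}L < 2\cdot 10^{24}L$. Consequently $10^4 d(e)^{3/4}\ln^{1/2} d(e)$ and $10^4 d(e)^{2/3}L^{1/3}$ are each bounded by a crude function of $L$ alone, which is in turn at most a small constant times $10^{24}L$ for every $n\geq 2$; together with $d(e) - D < 2\cdot 10^{16}L$ and the explicit $10^{24}L$ term, all the corrections fit inside the $2\cdot 10^{24}L$ slack on the right-hand side (the $10^5(\cdots)$ terms are not even needed here). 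As with \cref{lemma:aux_lemma}, these last numeric inequalities hold comfortably in the asymptotic regime and are readily checked to hold for all $n\geq 2$ thanks to the enormous constants; I would dispatch that verification in the same direct-computation style.

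There is no genuine obstacle in this lemma: the only care required is (i) choosing the threshold $D \gtrless 10^{24}\ln n$ so that in the large regime one gets the clean absolute estimate $d(e) \leq 1.001\,D$, and (ii) ensuring that in the small regime every leftover correction is bounded purely in terms of $\ln n$, so that the $10^{24}\ln n$ term absorbs it. Both are routine bookkeeping with the constants.
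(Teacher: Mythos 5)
Your proof is correct and takes essentially the same route as the paper's, which performs the same two-regime case analysis but splits at $d_{\text{max}}(e) \lessgtr \ln n$ (using $d(e) \leq 3\ln n$ in the small case, so that $g(e) \leq 2\cdot 10^{24}\ln n$ outright, and $d(e) \leq 3\, d_{\text{max}}(e)$ in the large case, absorbing the constant factors into the $10^5$) rather than at your threshold $10^{24}\ln n$. The one step you leave as "readily checked'' --- that $10^4\, d(e)^{3/4}\ln^{1/2} d(e)$ with $d(e) < 2\cdot 10^{24}\ln n$ is a small constant fraction of $10^{24}\ln n$ for every $n \geq 2$ --- does indeed hold (the worst case is $n=2$, where this term is below $0.15\cdot 10^{24}\ln n$), so there is no gap.
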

As in the case of \cref{lemma:helper_lemma_local_edge_coloring_1}, we defer the proof (see below). By combining \eqref{eq:upper_bound_c(e)_1} and \eqref{eq:upper_bound_c(e)_2}, we get the desired upper bound for $c(e)$, and the statement of \cref{thm:local_edge_coloring_strenghtened} follows.
\end{proof}

We now present the proofs of lemmas \cref{lemma:helper_lemma_local_edge_coloring_1} and \cref{lemma:helper_lemma_local_edge_coloring_2}, which were deferred previously:
\begin{proof}[Proof of \cref{lemma:helper_lemma_local_edge_coloring_1}]
    By definition, $i(e)$ is the largest index $i \in \{0,\dots,f+1\}$, such that $d_{i(e)} \geq d_{\text{max}}(e)$. As $i(e) \neq f+1$, it follows that $d_{i(e) + 1} \leq d_{\text{max}}(e)$, and from \cref{def:deg_seq} we have $d_{i(e) + 1} \geq d_{i(e)} - \lambda_{i(e)}$. Recalling that $\lambda_{i(e)} = d^{2/3}_{i(e)} \ln^{1/3} n$, we conclude:
    \begin{equation} \label{eq:aux_local_application_2}
        d_{\text{max}}(e) \geq d_{i(e)} - \lambda_{i(e)} = 
        d_{i(e)} \cdot \left( 1 - \sqrt[3]{\frac{\ln n}{d_{i(e)}}} \right) \geq d_{i(e)} \cdot \left( 1 - \sqrt[3]{\frac{\ln n}{d_{\text{max}}(e)}} \right).
    \end{equation}
    Notice that, as $i(e) \leq f$, we have $d_{i(e)} \geq 10^{24} \ln n$, which implies $d_{\text{max}}(e) \geq d_{i(e)} - \lambda_{i(e)} > 4 \ln n$, such that $\sqrt[3]{\frac{\ln n}{d_{\text{max}}(e)}} < 1 / 2$. Therefore, inequality \eqref{eq:aux_local_application_2} gives:
    \begin{align*}
        d_{i(e)} & \leq d_{\text{max}}(e) \cdot \frac{1}{1 - \sqrt[3]{\frac{\ln n}{d_{\text{max}}(e)}}} \leq d_{\text{max}}(e) \cdot \left( 1 + 2 \cdot \sqrt[3]{\frac{\ln n}{d_{\text{max}}(e)}} \right) = d_{\text{max}(e)} + 2 \cdot d_{\text{max}}^{2/3}(e) \ln^{1/3} n. \qedhere
    \end{align*}
\end{proof}

\begin{proof}[Proof of \cref{lemma:helper_lemma_local_edge_coloring_2}]
    Recall that $g(e)$ was defined as:
    \begin{equation} \label{eq:helper_bound_g(e)}
        g(e) := d(e) + 10^{24}\ln n + 10^4 \cdot \left( d(e)^{3/4} \ln^{1/2} d(e) + d(e)^{2/3} \ln^{1/3} n \right),
    \end{equation}
    where $d(e) = d_{\text{max}}(e) + 2 \cdot d_{\text{max}}^{2/3}(e) \ln^{1/3} n$. Now, if $d_{\text{max}}(e) \leq \ln n$, then  $d(e) \leq 3 \ln n$. But then \eqref{eq:helper_bound_g(e)} immediately gives that $g(e) \leq 2 \cdot 10^{24} \ln n$, which implies the lemma.
    Conversely, if $d_{\text{max}}(e) \geq \ln n$, then $d(e) \leq 3 \cdot d_{\text{max}}(e)$. Therefore, we have:
        \begin{align*}
            d(e)^{3/4} \ln^{1/2} d(e) &\leq 10 \cdot d_{\text{max}}^{3/4}(e) \ln^{1/2} d_{\text{max}}(e) \\
            d(e)^{2/3} \ln^{1/3} n &\leq 3 \cdot d_{\text{max}}^{2/3}(e) \ln^{1/3} n.
        \end{align*}
        But then, by \eqref{eq:helper_bound_g(e)}, we obtain the desired inequality:
        \begin{align*}
            g(e) &\leq d_{\text{max}}(e) + 2 \cdot d_{\text{max}}^{2/3}(e) \ln^{1/3} n + 10^{24}\ln n \ + 10^4 \cdot \left( 10 \cdot d_{\text{max}}^{3/4}(e) \ln^{1/2} d_{\text{max}}(e) + 3 \cdot d_{\text{max}}^{2/3}(e) \ln^{1/3} n \right) \\
            &\leq d_{\text{max}}(e) +  10^{24}\ln n + 10^5 \cdot \left( d_{\text{max}}^{3/4}(e) \ln^{1/2} d_{\text{max}}(e) + d_{\text{max}}^{2/3}(e) \ln^{1/3} n \right). \qedhere
        \end{align*}
\end{proof}
\section{Extension: Online Rounding of Fractional Matchings} \label{sec:extension}

In this section we generalize the result in \cref{sec:analysis} to a rounding algorithm for fractional matchings. By \emph{fractional matching}, we mean an assignment $x\in [0,1]^{E}$ to the edges such that for each vertex $v$, we have $\sum_{e\in \delta(v))} x_e \le 1$. In the online model, the value of $x_e$ is revealed alongside the edge $e$. An online rounding algorithm must decide immediately and irrevocably whether to match the newest arriving edge. The objective is to match each edge $e$ with probability close to $x_e$.

\Cref{alg:edge-arrival} can be viewed as an online algorithm which rounds the uniform fractional matching $\{x_e = \frac{1}{\Delta}\}_{e\in E}$ to an integral one, while nearly preserving marginals, $\Pr[e\text{ matched}] = \frac{1}{\Delta}(1-o(1))$. We show that one can---in a straightforward manner---generalize our \cref{alg:edge-arrival} to round arbitrary fractional matchings (also in non-bipartite graphs, see \cref{rem:non-bipartite}), as long as they are sufficiently ``spread out'', resulting in our \cref{alg:edge-arrival_extension,thm:ronding_matchings_theorem}. In particular, as long as all $x_e\le \eps$, we will be able to round $\vec{x}$ while only incurring some a small loss which goes to zero when $\eps\to 0$. To our knowledge, this is the first online rounding algorithm with such guarantees for general (non-bipartite) graphs, and also for adversarial edge arrivals in any (bipartite) graph.

\begin{figure}[h!]
\begin{center}
\begin{minipage}{0.95\textwidth}
\begin{mdframed}[hidealllines=true, backgroundcolor=gray!15]

\begin{algorithm}[\edgearrivalalgext] \ \\[0.2cm]
\emph{At the arrival of edge $e_t = (u,v)$ at time $t$:}
\item  Sample $X_t \sim [0,1]$ uniformly at random.
    \item Define 
    \begin{align*}
        P(e_t) & = \begin{cases}
         x_e \cdot (1 - s(\varepsilon)) \cdot 
         \frac{1}{F_t(u) \cdot F_t(v)} 
         & \mbox{if $u$ and $v$ are unmatched in $M_t$,} \\ 
         0 & \mbox{otherwise.}
         \end{cases}
         \intertext{and}
        \hat P(e_t) & = \begin{cases}
        P(e_t) & \mbox{if $\min\{F_t(u), F_t(v)\}\cdot (1-P(e_t))\geq \frac{s(\varepsilon)}{4}$} \\ 
         0 & \mbox{otherwise.}
         \end{cases} 
    \end{align*}
    
    \item Set 
    \begin{itemize}
        \item $F_{t+1}(u) \leftarrow F_t(u) \cdot (1- \hat P(e_t))$;
        \item $F_{t+1}(v)\leftarrow F_t(v) \cdot (1-\hat P(e_t))$; 
        \item $M_{t+1} = \begin{cases} M_t \cup \{e_t\} & \mbox{ if $X_t < \hat P(e_t)$,} \\
        M_t & \mbox{otherwise.}
        \end{cases}$
    \end{itemize}
\label{alg:edge-arrival_extension}
\end{algorithm}
\end{mdframed}
\end{minipage}
\end{center}
\end{figure}

\begin{theorem} \label{thm:ronding_matchings_theorem}
    Let $x\in \mathbb{R}^{E}$ be a fractional matching of some graph $G$, which is revealed online, and satisfies $x_e \leq \varepsilon$ for all edges $e$, for some known $\varepsilon \leq 0.99$. Then there exists a randomized online matching algorithm whose output matching $\calM$ satisfies for any edge $e$:
    \begin{equation*}
        x_e \geq \Pr[\textup{$e$ matched by } \calM] \ge x_e \cdot (1 - s(\varepsilon)),
    \end{equation*}
    where $s(\varepsilon) := C \cdot \sqrt[4]{\varepsilon} \cdot \sqrt[2]{\left(\ln \frac{1}{\varepsilon}\right)}$ (for some large enough constant $C>0$) converges to $0$ as $\varepsilon \rightarrow 0$.
\end{theorem}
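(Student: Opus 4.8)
The plan is to run \Cref{alg:edge-arrival_extension}, with every occurrence of the parameter ``$s(\varepsilon)$'' in its pseudocode replaced by $s_0 := \tfrac12 s(\varepsilon)$ (so still $s_0 = \Theta(\varepsilon^{1/4}\log^{1/2}(1/\varepsilon))$), and then to re-run the analysis of \Cref{sec:analysis} essentially verbatim, under the dictionary ``$\tfrac1{\Delta+q}$'' $\to$ ``$x_e(1-s_0)$'', ``$\tfrac q\Delta$'' $\to$ ``$s_0$'', and ``$\deg_t(v)/(\Delta+q)$'' $\to$ ``$(1-s_0)$ times the total $x$-weight on the already-arrived edges at $v$''. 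The only structural change is that the bound ``$\deg(v)\le\Delta$'' gets replaced throughout by the fractional matching constraint $\sum_{e\ni v}x_e\le 1$. I may assume throughout that $\varepsilon$ is below a small absolute constant, since for larger $\varepsilon$, with $C$ large enough, $s(\varepsilon)\ge 1$ and the claim is vacuous (the empty matching matches each $e$ with probability $0\in[x_e(1-s(\varepsilon)),\,x_e]$). As in \Cref{thm:online_matching}, the argument never uses bipartiteness, so it applies to general graphs.

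First I would establish the analogs of the building-block facts. By induction one gets $F_t(v)\ge s_0/4$ for all $v,t$ (analog of \Cref{obs:Ftv_and_Pet_bounds}), hence $P(e_t)\le 16\varepsilon/s_0^2\le 1/4$ for $\varepsilon$ small. Fixing all randomness outside $\delta(u)\cup\delta(v)$ exactly as in \Cref{lemma:sufficient_cond_for_right_marginal} gives $\Pr[X_t<P(e_t)]=x_e(1-s_0)$. The calculation of \Cref{lemma:sufficient_cond_for_Phat_equal_P} shows $\hat P(e_t)=P(e_t)$ whenever $\min\{F_t(u),F_t(v)\}\ge s_0/3$. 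And tracking $F_t(v)$ through the arrivals of the edges $e_{t_i}=(v,u_i)$ incident to $v$, as in \Cref{lemma:main_ineq}, gives $F(v)\ge 1-\sum_{u_i\in S}x_{(v,u_i)}(1-s_0)/F_{t_i}(u_i)=:1-Y$, where now $Y_0=(1-s_0)\sum_{u_i\in N(v)}x_{(v,u_i)}\le 1-s_0$ by the fractional matching constraint at $v$. Thus $|Y-Y_0|\le\tfrac23 s_0$ suffices to conclude $F(v)\ge s_0/3$, hence $\hat P(e)=P(e)$ at both endpoints of $e$.

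Next comes the martingale analysis. The sequence $Y_0,\dots,Y_m$ with $Y_{t-1}=\sum_{u_i\in S_t}x_{(v,u_i)}(1-s_0)/F_{\min\{t,t_i\}}(u_i)$ is a martingale with respect to the samples $X_1,\dots,X_m$, by the same computation as in \Cref{lemma:proof_y_is_martingale}. Its step size is $A:=8\varepsilon/s_0$ (using $|S_t\cap e_t|\le 2$, $F_t(\cdot)\ge s_0/4$, $\hat P(e_t)\le P(e_t)\le 1/4$, and $x_e\le\varepsilon$). For the observed variance I would mimic \Cref{lemma:ub_variance_martingale}, but here it is crucial to bound only one of the two factors $x_{(v,u_i)}$ by $\varepsilon$ and keep the other: the contribution of an edge $e_t$ incident to some $u_i\in N(v)$ is $O(\hat P(e_t)\,x_{(v,u_i)}\,\varepsilon/s_0^2)$, and summing over all (at most two-hop) edges, using first $\sum_{e\in\delta(u_i)}\hat P(e)\le\ln(4/s_0)$ (from $\prod_{e\in\delta(u_i)}(1-\hat P(e))=F(u_i)\ge s_0/4$) and then $\sum_{u_i\in N(v)}x_{(v,u_i)}\le 1$, gives $W_m\le\sigma^2:=O(\varepsilon\ln(1/s_0)/s_0^2)$. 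Applying Freedman's inequality (\Cref{thm:freedman_inequality}) with $\lambda:=s_0/3$, one checks $\lambda^2/\sigma^2=\Theta(C^4\log(1/\varepsilon))$ and $\lambda/A=\Theta(C^2\log(1/\varepsilon)/\sqrt\varepsilon)$, so the exponent is at least $\Omega(C^2\log(1/\varepsilon))$ and $\Pr[|Y-Y_0|\ge\lambda]\le 2\varepsilon^{\Omega(C^2)}\le\varepsilon^{10}$ for $C$ large; a union bound over the two endpoints of $e$ gives $\Pr[\hat P(e)\ne P(e)]\le 2\varepsilon^{10}$. To finish: since $P(e_t)\le 16x_e/s_0^2$ pointwise and $X_t$ is independent of the randomness before time $t$, the probability that $e$ is matched equals $\E[\hat P(e_t)]=\E[P(e_t)]-\E[P(e_t)\mathbb{1}[\hat P(e_t)\ne P(e_t)]]$, which is at most $\E[P(e_t)]=x_e(1-s_0)\le x_e$ and at least $x_e(1-s_0)-\tfrac{16x_e}{s_0^2}\cdot 2\varepsilon^{10}\ge x_e(1-2s_0)=x_e(1-s(\varepsilon))$, as required.

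\textbf{Main obstacle.} As in \Cref{thm:online_matching}, the one genuinely non-routine step is the variance bound, and here it carries the extra twist that $x_e$ may be far smaller than $\varepsilon$, so each two-hop edge's variance must be charged against the fractional weight $x_{(v,u_i)}$ of the corresponding $v$-edge rather than against $\varepsilon$ --- this is what lets the global estimate profit from $\sum_{e\ni v}x_e\le 1$ in place of ``$\deg(v)\le\Delta$''. A secondary subtlety is obtaining the \emph{exact} marginal lower bound $\ge x_e(1-s(\varepsilon))$ despite the rare event $\hat P\ne P$: this works only because the scaling forces $P(e_t)=O(x_e/s_0^2)$, so the loss incurred on that event is itself proportional to $x_e$ and can be absorbed into the $(1-s(\varepsilon))$ slack by taking the algorithm's internal parameter $s_0$ to be a constant fraction of $s(\varepsilon)$.
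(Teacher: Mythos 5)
Your proposal is correct and follows essentially the same route as the paper's proof: the same generalized algorithm, the same chain of analogues (the bounds $F_t(v)\ge s/4$ and $P\le O(\varepsilon/s^2)$, the exact marginal $\Pr[X_t<P(e_t)]=x_e(1-s)$, the sufficient condition $\min F_t\ge s/3$, the martingale with step size $O(\varepsilon/s)$), the same key variance trick of charging each two-hop term to $x_{(v,u_i)}\cdot\varepsilon$ so that $\sum_{e\ni v}x_e\le 1$ replaces $\deg(v)\le\Delta$, Freedman's inequality, and the same final accounting where the loss from $\hat P\neq P$ is proportional to $x_e/s^2$ times a tiny failure probability. Your reparametrization $s_0=s(\varepsilon)/2$ (versus the paper ending at $x_e(1-2s(\varepsilon))$ and absorbing the factor into $C$) and the reduction to small $\varepsilon$ by vacuousness are only cosmetic differences.
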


\begin{remark}\label{rem:non-bipartite}
Note that in general graphs, fractional matchings can be $3/2$ times larger than their largest integral counterparts, as exemplified by a triangle with values $x_e=1/2$ for all its edges. That is, the integrality gap of this relaxation of matchings is $3/2$.
Nonetheless, our algorithm works for non-bipartite graphs, and for sufficiently spread out fractional matchings we almost losslessly round them to integral matchings. 
This does not contradict the integrality gap of this relaxation in general graphs, as all ``odd set'' constraints in the integral matching polytope \cite{edmonds1965maximum} are approximately satisfied for spread out fractional matching $x_e\le \eps$. 
On the other hand, to round fractional matchings in non-bipartite graphs, it is necessary to incur some loss (with respect to $\eps$) when rounding, even in offline settings.
\end{remark}

The new algorithm, a straightforward adaptation of \cref{alg:edge-arrival}, is given by \cref{alg:edge-arrival_extension}. The following results are proven analogously to their counterparts from \cref{sec:analysis}, and are therefore omitted here:

\begin{observation}[Corresponds to \Cref{obs:Ftv_and_Pet_bounds}]
\label{obs:analogue_Ftv_and_Pet_bounds}
    $F_t(v) \geq \frac{s(\varepsilon)}{4}$ and $\hat P(e_t) \leq P(e_t) \leq \frac{\varepsilon}{s^2(\varepsilon)}$ for every vertex $v\in V$ and time $t$.
\end{observation}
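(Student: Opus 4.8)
The plan is to transcribe the proof of \Cref{obs:Ftv_and_Pet_bounds} almost verbatim, with the quantity $x_e\cdot(1-s(\varepsilon))$ playing the role of $\frac{1}{\Delta+q}$ and the threshold $s(\varepsilon)/4$ playing the role of $\frac{q}{4\Delta}$. As a preliminary step I would record the corollaries of the definition $s(\varepsilon)=C\sqrt[4]{\varepsilon}\,\sqrt{\ln(1/\varepsilon)}$ that the argument actually uses, i.e.\ the analogue of \eqref{eq:assumption}: for a large enough constant $C$ (and the relevant range of $\varepsilon$) one has $s(\varepsilon)\le 1$ and $s^2(\varepsilon)\ge 64\,\varepsilon$. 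The assumption $s(\varepsilon)\le 1$ is without loss of generality, since if $s(\varepsilon)\ge 1$ the lower bound claimed in \Cref{thm:ronding_matchings_theorem} is trivially true. In particular, $s(\varepsilon)\le 1$ ensures $1-s(\varepsilon)\ge 0$, so $P(e_t)\ge 0$ always, and hence the fact that $\hat P(e_t)\in\{0,P(e_t)\}$ immediately yields $\hat P(e_t)\le P(e_t)$.

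Next I would prove $F_t(v)\ge s(\varepsilon)/4$ for every vertex $v$ and time $t$ by induction on $t$, exactly as in the original proof. The base case is the initialization $F(v)=1\ge s(\varepsilon)/4$, which holds since $s(\varepsilon)\le 1$. For the inductive step: if $v\notin e_t$ or $\hat P(e_t)=0$, then $F_{t+1}(v)=F_t(v)$ and the hypothesis carries over; otherwise $\hat P(e_t)=P(e_t)$, and the defining condition of $\hat P$ gives $\min\{F_t(u),F_t(v)\}\cdot(1-P(e_t))\ge s(\varepsilon)/4$, whence
\[
F_{t+1}(v)=F_t(v)\cdot\bigl(1-\hat P(e_t)\bigr)=F_t(v)\cdot\bigl(1-P(e_t)\bigr)\ \ge\ \min\{F_t(u),F_t(v)\}\cdot\bigl(1-P(e_t)\bigr)\ \ge\ \frac{s(\varepsilon)}{4}.
\]

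Finally, with this lower bound in hand, the bound on $P(e_t)$ follows by a direct computation from its definition together with the promise $x_e\le\varepsilon$:
\[
P(e_t)=x_e\cdot(1-s(\varepsilon))\cdot\frac{1}{F_t(u)\,F_t(v)}\ \le\ \varepsilon\cdot\frac{1}{(s(\varepsilon)/4)^2}\ =\ \frac{16\,\varepsilon}{s^2(\varepsilon)}\ \le\ \frac14,
\]
where the last step uses the recorded inequality $s^2(\varepsilon)\ge 64\,\varepsilon$; this matches the bound stated in the observation up to the absolute constant, and in any case gives $P(e_t)\le 1/4$, which plays exactly the same role downstream as the bound $P(e_t)\le 1/4$ does in \Cref{obs:Ftv_and_Pet_bounds} (all the subsequent arguments only invoke $P(e_t)\le 1/2$). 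I do not expect a genuine obstacle: the statement is a routine translation of \Cref{obs:Ftv_and_Pet_bounds}. The only point requiring a little care is pinning down the precise inequalities relating $\varepsilon$ and $s(\varepsilon)$ — the analogue of \eqref{eq:assumption} — so that the constant manipulations here and in the parallel versions of the later lemmas (step size, observed variance, and the final Freedman bound) all go through cleanly.
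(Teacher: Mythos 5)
Your proof is correct and is exactly the route the paper intends: the paper omits this proof, stating only that it is proven analogously to \Cref{obs:Ftv_and_Pet_bounds}, and your argument is precisely that transcription (induction on $t$ for the lower bound on $F_t(v)$ using the threshold in the definition of $\hat P$, then the direct substitution for $P(e_t)$). The only cosmetic mismatch is that the substitution yields $16\varepsilon/s^2(\varepsilon)$ rather than the literally stated $\varepsilon/s^2(\varepsilon)$, but the paper itself uses the bound $P(e_t)\leq x_e\cdot 16/s^2(\varepsilon)$ in the proof of \Cref{thm:ronding_matchings_theorem} and downstream only ever needs $P(e_t)\leq 1/4$, so your explicit handling of the constant (via $s^2(\varepsilon)\geq 64\varepsilon$ for large enough $C$) is consistent with the paper's usage.
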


\begin{observation}[Corresponds to \Cref{obs:fpp-det-by-M}]
\label{obs:analogue_fpp-det-by-M}
    For any $t$, the random variables $F_{t}(v), P(e_t), \hat P(e_t)$ are determined by the current partial input $e_1,\dots,e_t$ and the current matching $M_{t-1}$.
\end{observation}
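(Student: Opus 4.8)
The plan is to carry over the proof of \Cref{obs:fpp-det-by-M} essentially verbatim, since \Cref{alg:edge-arrival_extension} differs from \Cref{alg:edge-arrival} only in the closed-form expressions for $P(e_t)$ and $\hat P(e_t)$ — the scaling constant $1/(\Delta+q)$ is replaced by $x_{e_t}\cdot(1-s(\varepsilon))$ and the safeguard threshold $q/(4\Delta)$ by $s(\varepsilon)/4$ — while the structural dependencies among the state variables $F_t(\cdot),\,P(e_t),\,\hat P(e_t),\,M_t$ are unchanged (and we use the same initialization $F_1(v)=1$, $M_1=\emptyset$ as in \Cref{alg:edge-arrival}). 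First I would observe, exactly as in the original, that it suffices to prove the claim for the variables $F_t(\cdot)$ alone: given the values $F_t(u),F_t(v)$, the partial input (which supplies $x_{e_t}$ and the endpoints of $e_t$), and the matching $M_{t-1}$ (which supplies the matched/unmatched status of $u,v$), the algorithm's formula determines $P(e_t)$, and then $\hat P(e_t)$ is a deterministic function of $F_t(u),F_t(v),P(e_t)$.

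The core is an induction on $t$. The base case $t=1$ is immediate since $F_1(v)=1$ for every $v$, a function of the (empty) partial input. For the inductive step, note that by construction $F_{t+1}(v)=F_t(v)\cdot(1-\hat P(e_t))$ when $e_t\ni v$ and $F_{t+1}(v)=F_t(v)$ otherwise, so $F_t(v)$ is determined by $\{\hat P(e_{t'}) : t'<t,\ e_{t'}=(u',v)\in\delta(v)\}$. Each such $\hat P(e_{t'})$ is a function of $F_{t'}(u'),F_{t'}(v),P(e_{t'})$, and $P(e_{t'})$ is in turn a function of $F_{t'}(u'),F_{t'}(v),x_{e_{t'}}$ and $M_{t'-1}$. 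By the inductive hypothesis $F_{t'}(u'),F_{t'}(v)$ are determined by $e_1,\dots,e_{t'}$ and $M_{t'-1}$; and since $t'<t$, both $e_1,\dots,e_{t'}$ and $M_{t'-1}$ are recoverable from $e_1,\dots,e_t$ together with $M_{t-1}$ (the latter because $M_{t'-1}$ is simply the restriction of $M_{t-1}$ to edges that arrived before time $t'$). Chaining these dependencies yields that $F_t(v)$, and hence $P(e_t)$ and $\hat P(e_t)$, are functions of $e_1,\dots,e_t$ and $M_{t-1}$.

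I do not expect a genuine obstacle: the argument is a pure ``follow-your-nose'' adaptation of \Cref{obs:fpp-det-by-M}. The only point requiring a moment's care is the bookkeeping that the earlier partial matchings $M_{t'-1}$ for $t'<t$ are themselves deterministic functions of the pair $(e_1,\dots,e_t,\,M_{t-1})$, so that the recursion is well-founded; this is the same subtlety present in the original observation and is handled identically. Since none of the modified formulas changes \emph{which} variables are read — only the arithmetic performed on them — the proof transfers without any new ideas.
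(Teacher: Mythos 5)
Your proof is correct and follows essentially the same induction as the paper's omitted (but implicitly referenced) proof of \Cref{obs:fpp-det-by-M}: base case $F_1(v)=1$, then unfold $F_t(v)$ through the updates $\hat P(e_{t'})$ for $t'<t$, using that $M_{t'-1}$ and the earlier $F_{t'}(\cdot)$ values are recoverable from $(e_1,\dots,e_t,M_{t-1})$. Your additional observation that $P(e_{t'})$ now also reads $x_{e_{t'}}$, which is part of the partial input revealed with $e_{t'}$, is exactly the only new bookkeeping the extension requires.
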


\begin{lemma}[Corresponds to \Cref{lemma:sufficient_cond_for_right_marginal}]
\label{lemma:analogue_sufficient_cond_for_right_marginal}
    For any edge $e_t$ it holds that $\Pr[X_t < P(e_t)] = x_e \cdot (1-s(\varepsilon))$.
\end{lemma}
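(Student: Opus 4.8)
The plan is to mirror, essentially verbatim, the proof of \Cref{lemma:sufficient_cond_for_right_marginal}, the only genuinely new ingredient being the verification that $P(e_t)\le 1$ always, so that the value $P(e_t)$ may legitimately be compared against the uniform sample $X_t$. Fix the edge $e_t=(u,v)$ and fix all the randomness except the samples $X_{t'}$ attached to edges $e_{t'}\in\delta(u)\cup\delta(v)$; call this event $A(\vec x)$. Let $t_1<\dots<t_\ell$ be the arrival times of the edges of $\delta(u)\cup\delta(v)$ preceding time $t$. Conditioned on $A(\vec x)$, the only remaining randomness up to and including the arrival of $e_t$ is carried by the mutually independent variables $X_{t_1},\dots,X_{t_\ell},X_t$, all independent of $A(\vec x)$.

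For $e_t$ to be eligible to be matched we further condition on the event that none of $e_{t_1},\dots,e_{t_\ell}$ entered the matching; as in the original proof, conditioning on $A(\vec x)$ together with $\{e_{t_1},\dots,e_{t_\ell}\}\cap M_t=\emptyset$ pins down $M_{t'}$ for every $t'\le t$, and hence, by \Cref{obs:analogue_fpp-det-by-M}, pins down $\hat P(e_{t_1}),\dots,\hat P(e_{t_\ell})$ and $F_t(u),F_t(v)$; write $\hat p(e_{t_i})$ and $f_t(u),f_t(v)$ for these determined values. Using the independence of the $X_{t_i}$ and then splitting the product over the edges incident to $u$ and to $v$ separately (a valid split since $G$ is simple, so no $e_{t_i}$ joins $u$ to $v$), one gets
\[
\Pr\bigl[\{e_{t_1},\dots,e_{t_\ell}\}\cap M_t=\emptyset \,\big|\, A(\vec x)\bigr]=\prod_{i=1}^\ell\bigl(1-\hat p(e_{t_i})\bigr)=f_t(u)\cdot f_t(v).
\]

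On this conditioning, if $u$ and $v$ are still unmatched then by the definition in \cref{alg:edge-arrival_extension} we have $P(e_t)=x_e\,(1-s(\varepsilon))\cdot\tfrac{1}{f_t(u)f_t(v)}$; moreover $P(e_t)\le \varepsilon/s^2(\varepsilon)\le 1$ by \Cref{obs:analogue_Ftv_and_Pet_bounds} and the choice $s(\varepsilon)=C\sqrt[4]{\varepsilon}\sqrt{\ln(1/\varepsilon)}$ (for $C$ large enough and $\varepsilon\le 0.99$), so $P(e_t)$ is a genuine probability. Since $X_t$ is independent of $M_t$, the event $X_t<P(e_t)$ already forces $\{e_{t_1},\dots,e_{t_\ell}\}\cap M_t=\emptyset$ (otherwise $P(e_t)=0$), whence
\[
\Pr[X_t<P(e_t)\mid A(\vec x)]=x_e\,(1-s(\varepsilon))\cdot\frac{1}{f_t(u)f_t(v)}\cdot f_t(u)\,f_t(v)=x_e\,(1-s(\varepsilon)).
\]
Averaging over all outcomes of $A(\vec x)$ via the law of total probability yields the claim. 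The only step requiring care beyond transcription is the bound $P(e_t)\le 1$, which is where the quantitative relation between $\varepsilon$ and $s(\varepsilon)$ enters; everything else is identical to \Cref{lemma:sufficient_cond_for_right_marginal}, so I anticipate no real obstacle.
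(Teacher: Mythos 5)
Your proposal is correct and follows essentially the same route as the paper, which omits this proof precisely because it is the verbatim transcription of the argument for \Cref{lemma:sufficient_cond_for_right_marginal} (fix the randomness outside $\delta(u)\cup\delta(v)$, note the no-match event has probability $f_t(u)\cdot f_t(v)$, and use independence of $X_t$). Your added check that $P(e_t)\leq 1$ is the right point to flag, and it matches how the paper handles it: via \Cref{obs:analogue_Ftv_and_Pet_bounds} and the constraint \eqref{eq:setting_C} on $C$, which guarantees $P(e_t)\leq \varepsilon/s^2(\varepsilon)\leq 1/4$.
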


\begin{lemma}[Corresponds to \Cref{lemma:sufficient_cond_for_Phat_equal_P}]
\label{lemma:analogue_sufficient_cond_for_Phat_equal_P}
    If $e_t =(u,v)$ and $\min\{F_{t}(u),\ F_{t}(v)\} \geq s(\varepsilon)/3$, then $\hat P(e_t) = P(e_t)$.
\end{lemma}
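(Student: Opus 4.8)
The plan is to replay the proof of \Cref{lemma:sufficient_cond_for_Phat_equal_P} essentially verbatim, replacing the thresholds $q/(3\Delta)$ and $q/(4\Delta)$ by $s(\varepsilon)/3$ and $s(\varepsilon)/4$. By the definition of $\hat P$ in \Cref{alg:edge-arrival_extension}, we have $\hat P(e_t) = P(e_t)$ exactly when $\min\{F_t(u),F_t(v)\}\cdot\bigl(1-P(e_t)\bigr)\ge s(\varepsilon)/4$. Hence, under the hypothesis $\min\{F_t(u),F_t(v)\}\ge s(\varepsilon)/3$, it suffices to show $1-P(e_t)\ge 3/4$, i.e.\ that $P(e_t)\le 1/4$.

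For this, I would invoke the crude upper bound $P(e_t)\le \varepsilon/s^2(\varepsilon)$ from \Cref{obs:analogue_Ftv_and_Pet_bounds} (the analogue of \Cref{obs:Ftv_and_Pet_bounds}), and then check that the choice $s(\varepsilon)=C\cdot\sqrt[4]{\varepsilon}\cdot\sqrt{\ln(1/\varepsilon)}$ makes $\varepsilon/s^2(\varepsilon)=\sqrt{\varepsilon}/(C^2\ln(1/\varepsilon))\le 1/4$. This reduces to the elementary inequality $4\sqrt{\varepsilon}\le C^2\ln(1/\varepsilon)$, which holds for all sufficiently small $\varepsilon$ once $C$ is a large enough absolute constant. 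As in the matching analysis---where one may assume $q\le\Delta/4$ without loss of generality (see the discussion around \eqref{eq:assumption})---we may restrict attention to small $\varepsilon$: if $\varepsilon$ is so close to $1$ that the inequality fails, then $s(\varepsilon)\ge 1$ and the target guarantee $\Pr[e\text{ matched}]\ge x_e\cdot(1-s(\varepsilon))$ of \Cref{thm:ronding_matchings_theorem} is vacuous, achieved trivially by matching nothing.

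I do not expect a genuine obstacle here; the statement is a one-line consequence of the definition of $\hat P$ together with $P(e_t)\le 1/4$. The only care needed is bookkeeping of constants: the constant $C$ in $s(\varepsilon)$ must be chosen large enough to simultaneously (i) make $\varepsilon/s^2(\varepsilon)\le 1/4$ here, (ii) satisfy the analogues of the two-sided bound $8\sqrt{\Delta}\le q\le\Delta/4$ used throughout \Cref{sec:analysis}, and (iii) control the step size and observed variance in the analogues of \Cref{lemma:bounding_step_size} and \Cref{lemma:ub_variance_martingale}, so that the analogue of \Cref{lemma:bound_on_Y} still yields a high-probability bound. The cleanest way to carry this out is to first record, once and for all, the list of constraints on $s(\varepsilon)$ that the whole argument requires, and then verify that $s(\varepsilon)=C\cdot\sqrt[4]{\varepsilon}\cdot\sqrt{\ln(1/\varepsilon)}$ with $C$ large meets all of them; the present lemma then follows immediately from the definition of $\hat P$.
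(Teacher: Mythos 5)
Your core argument is exactly the paper's: by the definition of $\hat P$ in \Cref{alg:edge-arrival_extension}, it suffices to have $P(e_t)\le 1/4$, and this is obtained from the crude bound $P(e_t)\le \varepsilon/s^2(\varepsilon)$ of \Cref{obs:analogue_Ftv_and_Pet_bounds} together with a suitable choice of the constant $C$ in $s(\varepsilon)$; the paper handles the lemma in precisely this way, recording the constraint on $C$ in \eqref{eq:setting_C}.

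However, your fallback for ``large'' $\varepsilon$ is wrong as stated. You claim that if $4\sqrt{\varepsilon}\le C^2\ln\frac{1}{\varepsilon}$ fails, then $s(\varepsilon)\ge 1$ and the guarantee of \Cref{thm:ronding_matchings_theorem} is vacuous. This is false: as $\varepsilon\to 1^-$, writing $\varepsilon=1-\delta$, one has $\ln\frac{1}{\varepsilon}\approx\delta$, so $s(\varepsilon)\approx C\sqrt{\delta}\to 0$ (the guarantee is far from vacuous), while $\varepsilon/s^2(\varepsilon)\approx 1/(C^2\delta)\to\infty$, so the inequality fails and $P(e_t)\le 1/4$ cannot be concluded. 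Failure of the inequality only yields an \emph{upper} bound on $\ln\frac{1}{\varepsilon}$, hence an upper bound $s(\varepsilon)<2\sqrt{\varepsilon}$, never a lower bound of $1$. The correct way to dispose of this regime is not a vacuousness argument but the explicit hypothesis $\varepsilon\le 0.99$ in \Cref{thm:ronding_matchings_theorem}: on $(0,0.99]$ one has $\ln\frac{1}{\varepsilon}\ge\ln\frac{1}{0.99}>\frac{1}{100}$, so $\varepsilon/s^2(\varepsilon)=\sqrt{\varepsilon}/\bigl(C^2\ln\frac{1}{\varepsilon}\bigr)\le 100/C^2\le 1/4$ uniformly once $C\ge 20$ (the paper takes $C\ge 40$ to satisfy its other constraints as well). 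With that substitution your proof is complete and coincides with the paper's; your closing remark about collecting all constraints on $s(\varepsilon)$ once and for all is sound and is effectively what the paper does.
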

The proof of \Cref{lemma:analogue_sufficient_cond_for_Phat_equal_P} (just as its counterpart \Cref{lemma:sufficient_cond_for_Phat_equal_P}) requires $P(e_t) \leq \frac{\varepsilon}{s^2(\varepsilon)} \leq 1/4$. This can be achieved by imposing:
\begin{equation} \label{eq:setting_C}
    C \geq \max_\varepsilon \left( \frac{4 \varepsilon}{\sqrt{\varepsilon} \cdot \ln \frac{1}{\varepsilon}} \right).
\end{equation}
However, as $C$ is supposed to be a constant, we still need to check that the right hand side of the above inequality is also a constant. By the statement of \Cref{thm:ronding_matchings_theorem}, we have that $\varepsilon \leq 0.99$, and it is easy to check that the function $\varepsilon \mapsto \left( \frac{\varepsilon}{\sqrt{\varepsilon} \cdot \ln \frac{1}{\varepsilon}} \right)$ is bounded in the interval $(0, 0.99)$. Hence, the right hand side of \eqref{eq:setting_C} is a constant.

The analogue of \Cref{lemma:main_ineq} is:
\begin{lemma}[Corresponds to \Cref{lemma:main_ineq}]
\label{lemma:analogue_main_ineq}
    Let $e_{t_1} = (u_1,v),\ldots, e_{t_\ell}  =(u_{\ell},v)$ be the edges incident to $v$ with $t_1< \cdots < t_\ell$. Further, let $S := \{u_i \mid u_i \not \in M_{t_i}\}$ be those neighbors $u_i$ that are unmatched by time $t_i$ when edge $e_{t_i} =(u_i,v)$ arrives. If:
    \begin{align}
        \sum_{u_i \in S} \frac{x_{e_{t_i}} \cdot (1-s(\varepsilon))}{F_{t_i}(u_i)} \leq 1 - \frac{s(\varepsilon)}{3},
    \end{align}
    then $F(v) \geq s(\varepsilon)/3$. 
\end{lemma}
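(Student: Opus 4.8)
The plan is to follow the proof of \Cref{lemma:main_ineq} essentially verbatim, substituting the scaling constant $\tfrac{1}{\Delta+q}$ by $x_{e_{t_i}}(1-s(\varepsilon))$ and the target threshold $q/(3\Delta)$ by $s(\varepsilon)/3$. First I would track how $F_t(v)$ evolves during the run of \Cref{alg:edge-arrival_extension}: it starts at $F_1(v)=1$ and only changes at the times $t_1<\dots<t_\ell$ when an edge $e_{t_i}=(u_i,v)$ incident to $v$ arrives, at which point it is multiplied by $1-\hat P(e_{t_i})$. Using $\hat P(e_{t_i})\le P(e_{t_i})$ (\Cref{obs:analogue_Ftv_and_Pet_bounds}), this gives $F_{t_i+1}(v)\ge F_{t_i}(v)\,(1-P(e_{t_i}))$.

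Next I would split into cases according to whether both endpoints of $e_{t_i}$ are still unmatched in $M_{t_i}$. If $v$ or $u_i$ is already matched then $P(e_{t_i})=0$, so $F_{t_i+1}(v)=F_{t_i}(v)$ and $F$ does not decrease. Otherwise $u_i\in S$, and plugging in $P(e_{t_i})=x_{e_{t_i}}(1-s(\varepsilon))\cdot\frac{1}{F_{t_i}(u_i)F_{t_i}(v)}$ yields
\[
F_{t_i+1}(v)\ \ge\ F_{t_i}(v)\Bigl(1-\frac{x_{e_{t_i}}(1-s(\varepsilon))}{F_{t_i}(u_i)F_{t_i}(v)}\Bigr)\ =\ F_{t_i}(v)-\frac{x_{e_{t_i}}(1-s(\varepsilon))}{F_{t_i}(u_i)}.
\]
Hence the only steps where $F_t(v)$ can strictly drop are those indexed by $u_i\in S$, each time by at most $x_{e_{t_i}}(1-s(\varepsilon))/F_{t_i}(u_i)$. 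Summing over $u_i\in S$ (no double counting, as $G$ is simple) and using $F_1(v)=1$ together with $F(v)=F_m(v)$, I obtain $F(v)\ge 1-\sum_{u_i\in S}x_{e_{t_i}}(1-s(\varepsilon))/F_{t_i}(u_i)$; the hypothesis of the lemma then gives $F(v)\ge 1-\bigl(1-s(\varepsilon)/3\bigr)=s(\varepsilon)/3$.

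I do not expect a real obstacle here: the argument is purely deterministic bookkeeping along a single sample path and never needs $\hat P(e_{t_i})=P(e_{t_i})$, only the one-sided bound $\hat P\le P$ together with the vanishing of $P$ on matched endpoints. The one place meriting a moment of care is confirming that $P(e_{t_i})$ indeed equals $0$ (so $F$ does not drop) whenever $v$ was matched at some earlier time; this is immediate from the case split in the definition of $P(e_t)$ in \Cref{alg:edge-arrival_extension}, and everything else is a direct transcription of the proof of \Cref{lemma:main_ineq} with the new constants.
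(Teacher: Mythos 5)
Your proposal is correct and is exactly the argument the paper intends: the paper omits this proof as "analogous" to that of \Cref{lemma:main_ineq}, and your transcription (replacing $\tfrac{1}{\Delta+q}$ by $x_{e_{t_i}}(1-s(\varepsilon))$ and $q/(3\Delta)$ by $s(\varepsilon)/3$, using only $\hat P\le P$ and the vanishing of $P$ when an endpoint is matched) is precisely that adaptation, with the final step even simpler since the hypothesis directly bounds the total drop by $1-s(\varepsilon)/3$.
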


\noindent For ease of notation let $x_i := x_{e_{t_i}}$ be the fractional input of the edge $e_{t_i} =(u_i,v)$, which connects $v$ to its neighbor $u_i$. We will derive a martingale from the following quantities:
\[
    S_t := \{ u_i \in N(v) \mid u_i \not \in M_{\min\{t, t_i\}}\} \quad \mbox{and} \quad Y_{t-1} := \sum_{u_i \in S_t} \frac{x_i \cdot (1 - s(\varepsilon))}{F_{\min\{t,t_i\}}(u_i)}.
\]
\begin{claim}[Corresponds to \Cref{lemma:proof_y_is_martingale}] \label{lemma:analogue_proof_y_is_martingale}
    $Y_0, \dots, Y_m$ form a martingale w.r.t.\ the random variables $X_1,\dots,X_m$. Furthermore, the difference $Y_t - Y_{t-1}$ is given by the following two cases:
    \begin{itemize}
        \item If $e_t$ is added to $M_{t+1}$, which happens with probability $\hat P(e_t)$, then:
        \begin{equation}
            Y_{t} - Y_{t-1}  = - \sum_{u_i \in S_t \cap e_t} \frac{x_i \cdot (1 - s(\varepsilon))}{F_{t}(u_i)}.
        \end{equation}
        \item If instead $e_t$ is not added to $M_{t+1}$, which happens with probability $1 - \hat P(e_t)$, then:
        \begin{equation}
            Y_{t} - Y_{t-1}  = \frac{\hat P(e_t)}{1 - \hat P(e_t)} \cdot \sum_{u_i \in S_t \cap e_t} \frac{x_i \cdot (1 - s(\varepsilon))}{F_{t}(u_i)}.
        \end{equation}
    \end{itemize}
\end{claim}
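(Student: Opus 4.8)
The plan is to follow the proof of \Cref{lemma:proof_y_is_martingale} essentially verbatim, with the uniform weight $\frac{1}{\Delta+q}$ replaced by the edge-specific value $x_i$, and with \Cref{obs:analogue_Ftv_and_Pet_bounds} and \Cref{obs:analogue_fpp-det-by-M} playing the roles of \Cref{obs:Ftv_and_Pet_bounds} and \Cref{obs:fpp-det-by-M}. First I would verify the three conditions of \Cref{def:martingales}. Fixing $X_1,\dots,X_t$ determines the matching $M_t$ by \Cref{obs:analogue_fpp-det-by-M}, hence also the set $S_t$ and every factor $F_{\min\{t,t_i\}}(u_i)$ appearing in $Y_{t-1}$; thus $Y_{t-1}$ is a function of $X_1,\dots,X_{t-1}$, and likewise $Y_t$ of $X_1,\dots,X_t$. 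Integrability is immediate: $Y_t$ is a sum of at most $|N(v)|$ nonnegative terms, each at most $x_i(1-s(\varepsilon))/F_{\min\{t,t_i\}}(u_i) \le 4/s(\varepsilon)$ by the hard lower bound $F \ge s(\varepsilon)/4$ of \Cref{obs:analogue_Ftv_and_Pet_bounds}.

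The core is the identity $\E[Y_t \mid X_1,\dots,X_{t-1}] = Y_{t-1}$, which I would derive from the two displayed case formulas for $Y_t - Y_{t-1}$. If $e_t$ is incident to no vertex $u_i$ with $u_i \in S_t$ and $t < t_i$, then no term of the sum defining $Y$ changes when $e_t$ is processed, so $Y_t = Y_{t-1}$ deterministically and both formulas hold trivially (their right-hand sides being $0$). Otherwise $e_t$ meets one or two vertices of $S_t \cap e_t$, and there are two possibilities. If $X_t < \hat P(e_t)$, which --- conditioned on $X_1,\dots,X_{t-1}$, so that the threshold $\hat P(e_t)$ is a fixed number --- occurs with probability $\hat P(e_t)$, then $e_t$ enters $M_{t+1}$, so $S_{t+1} = S_t \setminus (S_t \cap e_t)$ and precisely the terms indexed by $S_t \cap e_t$ are deleted, giving the first formula. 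If instead $e_t$ is not matched, then $F_{t+1}(u_i) = F_t(u_i)(1-\hat P(e_t))$ for each such $u_i$, so its term changes by $\frac{x_i(1-s(\varepsilon))}{F_t(u_i)}\bigl(\frac{1}{1-\hat P(e_t)}-1\bigr) = \frac{\hat P(e_t)}{1-\hat P(e_t)}\cdot\frac{x_i(1-s(\varepsilon))}{F_t(u_i)}$, and summing over $S_t \cap e_t$ gives the second formula. Writing $\Sigma := \sum_{u_i \in S_t \cap e_t} \frac{x_i(1-s(\varepsilon))}{F_t(u_i)}$, the conditional expectation of the increment is $\hat P(e_t)\cdot(-\Sigma) + (1-\hat P(e_t))\cdot\frac{\hat P(e_t)}{1-\hat P(e_t)}\Sigma = 0$, as required.

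I do not anticipate a real obstacle: this is algebraically the same computation as in \Cref{lemma:proof_y_is_martingale}. The single point that needs to be stated carefully --- and where \Cref{obs:analogue_fpp-det-by-M} is used --- is that $S_t$, the values $F_t(u_i)$, and the threshold $\hat P(e_t)$ are all measurable with respect to $X_1,\dots,X_{t-1}$, so that conditioning on this history leaves only the independent uniform $X_t$ as fresh randomness, and its position relative to the now-fixed $\hat P(e_t)$ is exactly what selects between the two cases. Everything else is the routine bookkeeping shown above.
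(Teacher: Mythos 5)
Your proposal is correct and is essentially the paper's own argument: the paper omits the proof of this claim, deferring to the proof of Lemma~\ref{lemma:proof_y_is_martingale}, and your write-up reproduces exactly that proof with $\frac{1}{\Delta+q}$ replaced by $x_i(1-s(\varepsilon))$ and the bounds $F\geq s(\varepsilon)/4$, $\hat P\leq P\leq 1/4$ from the analogous observations. The measurability discussion, the three-case analysis of the increment, and the zero-mean computation $\hat P(e_t)(-\Sigma)+(1-\hat P(e_t))\frac{\hat P(e_t)}{1-\hat P(e_t)}\Sigma=0$ all match the paper's treatment.
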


To apply Freedman's inequality, as in Section \ref{sec:analysis}, bounds on the step size and on the variance of the martingale are required. They are given by the following two lemmas:
\begin{lemma}[Corresponds to \Cref{lemma:bounding_step_size}] 
    For all times $t$ and realization of the randomness, $|Y_{t} - Y_{t-1}| \leq A$, where $A:=\frac{8 \varepsilon}{s(\varepsilon)}$.
\end{lemma}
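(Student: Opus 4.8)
The plan is to mirror the proof of \Cref{lemma:bounding_step_size} verbatim, replacing the role of $\frac{1}{\Delta+q}$ by the edge weight $x_i \le \varepsilon$ and the hard lower bound $q/(4\Delta)$ on the $F_t(\cdot)$ values by its analogue $s(\varepsilon)/4$ from \Cref{obs:analogue_Ftv_and_Pet_bounds}. First I would invoke \Cref{lemma:analogue_proof_y_is_martingale} to write, in either of the two cases (edge matched or not),
\begin{equation*}
    |Y_t - Y_{t-1}| \le \max\left\{\frac{\hat P(e_t)}{1-\hat P(e_t)},\, 1\right\} \cdot \sum_{u_i \in S_t \cap e_t} \frac{x_i \cdot (1-s(\varepsilon))}{F_t(u_i)}.
\end{equation*}
Next I would bound the three factors separately: by \Cref{obs:analogue_Ftv_and_Pet_bounds} we have $\hat P(e_t) \le P(e_t) \le \varepsilon/s^2(\varepsilon) \le 1/4 \le 1/2$, so $\frac{\hat P(e_t)}{1-\hat P(e_t)} \le 1$ and the $\max$ is $1$; again by \Cref{obs:analogue_Ftv_and_Pet_bounds} each $F_t(u_i) \ge s(\varepsilon)/4$; the weights satisfy $x_i \le \varepsilon$ and $1-s(\varepsilon) \le 1$; and finally $|S_t \cap e_t| \le 2$ since $e_t$ has two endpoints. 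Combining these,
\begin{equation*}
    |Y_t - Y_{t-1}| \le 1 \cdot 2 \cdot \frac{\varepsilon}{s(\varepsilon)/4} = \frac{8\varepsilon}{s(\varepsilon)},
\end{equation*}
which is exactly $A$ as claimed.

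There is essentially no obstacle here: this is a routine substitution of parameters into an already-established argument, and every ingredient needed (the martingale increment formulas, the bound $\hat P \le P \le 1/4$, and the floor $F_t(\cdot) \ge s(\varepsilon)/4$) is available from the observations and claim stated just above in this section. The only point worth a sentence of care is that the bound on $P(e_t)$ required here — namely $P(e_t) \le 1/2$, so that $\hat P/(1-\hat P) \le 1$ — is guaranteed by the choice of the constant $C$ in \eqref{eq:setting_C}, which forces $\varepsilon/s^2(\varepsilon) \le 1/4$; I would simply cite \Cref{obs:analogue_Ftv_and_Pet_bounds} for this rather than re-deriving it. Thus the proof is a three-line chain of inequalities, structurally identical to \Cref{lemma:bounding_step_size}, with $q/\Delta$ replaced throughout by $s(\varepsilon)$ and $1/\Delta$ replaced by $\varepsilon$.
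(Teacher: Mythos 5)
Your proposal is correct and follows essentially the same argument as the paper: the increment formulas from \Cref{lemma:analogue_proof_y_is_martingale}, the bound $\hat P(e_t)\le P(e_t)\le \varepsilon/s^2(\varepsilon)\le 1/4$ (so the max factor is $1$), the floor $F_t(u_i)\ge s(\varepsilon)/4$, and $|S_t\cap e_t|\le 2$, yielding $8\varepsilon/s(\varepsilon)$. Your explicit note that $P(e_t)\le 1/4$ relies on the choice of $C$ in \eqref{eq:setting_C} matches the paper's own remark surrounding \Cref{lemma:analogue_sufficient_cond_for_Phat_equal_P}, so there is nothing to add.
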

\begin{proof}
    By using the expressions for the difference $Y_t - Y_{t-1}$ from \cref{lemma:analogue_main_ineq}, we obtain:
    \begin{align*}
        &|Y_{t} - Y_{t-1}| \leq \max \left \{ \frac{\hat P(e_t)}{1 - \hat P(e_t)}, 1 \right \} \cdot \sum_{u_i \in S_t \cap e_t} \frac{x_i \cdot (1 - s(\varepsilon))}{F_{t}(u_i)} \leq  \sum_{u_i \in S_t \cap e_t} \frac{\varepsilon}{s(\varepsilon) / 4} \leq \frac{8\varepsilon}{s(\varepsilon)}.
    \end{align*}
    For the second inequality, first notice that we already guarantee $\hat P(e_t) \leq P(e_t) \leq 1/4$ for the proof of \Cref{lemma:analogue_sufficient_cond_for_Phat_equal_P}, so in particular $\frac{\hat P(e_t)}{1 - \hat P(e_t)} \leq 1$. Also, we have $F_t(u_i) \geq s(\varepsilon)/4$ (by \Cref{obs:analogue_fpp-det-by-M}) at any point of time in the algorithm. For the third inequality we used the trivial fact that $|S_t \cap e_t| \leq 2$. \qedhere
\end{proof}

\begin{lemma}[Corresponds to \Cref{lemma:ub_variance_martingale}] 
    Consider the martingale described above. We have:
    \begin{align}
    \sum_{t=1}^m \E[ (Y_t - Y_{t-1})^2 &\mid X_{t-1},\ldots X_1] \leq 128 \ln \left( \frac{4}{s(\varepsilon)} \right) \cdot \frac{\varepsilon}{(s(\varepsilon))^2}.
    \end{align}
\end{lemma}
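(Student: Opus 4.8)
The plan is to transcribe the proof of \Cref{lemma:ub_variance_martingale}, performing the substitutions $\tfrac{1}{\Delta+q}\mapsto x_e(1-s(\varepsilon))$ and $\tfrac{q}{4\Delta}\mapsto \tfrac{s(\varepsilon)}{4}$, and to replace the one crude step of that proof ($|N(v)|\le\Delta$) by the fractional-matching constraint. First, writing $T:=\sum_{u_i\in S_t\cap e_t} x_i(1-s(\varepsilon))/F_t(u_i)$ and using the two cases of \Cref{lemma:analogue_proof_y_is_martingale}, I would compute directly
\[
\E\bigl[(Y_t-Y_{t-1})^2\mid X_1,\dots,X_{t-1}\bigr]
= \hat P(e_t)\,T^2 + (1-\hat P(e_t))\Bigl(\tfrac{\hat P(e_t)}{1-\hat P(e_t)}\Bigr)^2 T^2
= \frac{\hat P(e_t)}{1-\hat P(e_t)}\,T^2 .
\]
Then, using $\hat P(e_t)\le P(e_t)\le \varepsilon/s^2(\varepsilon)\le 1/4$ (from \Cref{obs:analogue_Ftv_and_Pet_bounds} and the choice of $C$ in \eqref{eq:setting_C}), so $\tfrac{1}{1-\hat P(e_t)}\le 2$, together with $|S_t\cap e_t|\le 2$ and $(a+b)^2\le 2a^2+2b^2$, and finally $F_t(u_i)\ge s(\varepsilon)/4$ with $1-s(\varepsilon)\le 1$, I obtain the per-step bound
\[
\E\bigl[(Y_t-Y_{t-1})^2\mid X_1,\dots,X_{t-1}\bigr]
\;\le\; 4\,\hat P(e_t)\!\!\sum_{u_i\in S_t\cap e_t}\!\!\frac{x_i^2\,(1-s(\varepsilon))^2}{F_t(u_i)^2}
\;\le\; \frac{64\,\hat P(e_t)}{s^2(\varepsilon)}\sum_{u_i\in S_t\cap e_t}x_i^2 .
\]

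Next I would sum over all $t$. Since $S_t\subseteq S_0=N(v)$, the only non-trivial steps are those whose arriving edge $e_t$ is incident to some neighbor $u_i$ of $v$, i.e.\ lies in the two-hop neighborhood of $v$; distributing the term $\tfrac{64\hat P(e_t)}{s^2(\varepsilon)}x_i^2$ into the ``$u_i$-bucket'' (double counting the edges joining two distinct neighbors of $v$, exactly as in \Cref{lemma:ub_variance_martingale}) gives
\[
W_m \;\le\; \frac{64}{s^2(\varepsilon)}\sum_{u_i\in N(v)} x_i^2 \sum_{e\in\delta(u_i)}\hat P(e).
\]
The inner sum is handled by the hard lower bound on $F$: from $\tfrac{s(\varepsilon)}{4}\le F_m(u_i)=\prod_{e\in\delta(u_i)}(1-\hat P(e))\le\exp\bigl(-\sum_{e\in\delta(u_i)}\hat P(e)\bigr)$ we get $\sum_{e\in\delta(u_i)}\hat P(e)\le\ln\bigl(4/s(\varepsilon)\bigr)$.

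The genuinely new ingredient — and really the only place the argument departs from \Cref{lemma:ub_variance_martingale} — is the last step: rather than bounding $\sum_{u_i\in N(v)}x_i^2$ by $|N(v)|\cdot\varepsilon^2$, I would use that $\vec{x}$ is a fractional matching with $x_e\le\varepsilon$, so $\sum_{u_i\in N(v)}x_i^2=\sum_{e\in\delta(v)}x_e^2\le\varepsilon\sum_{e\in\delta(v)}x_e\le\varepsilon$. Combining the three displays then yields $W_m\le \tfrac{64\,\varepsilon\,\ln(4/s(\varepsilon))}{s^2(\varepsilon)}\le 128\,\ln\!\bigl(\tfrac{4}{s(\varepsilon)}\bigr)\cdot\tfrac{\varepsilon}{s^2(\varepsilon)}$, as claimed. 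I expect no real obstacle here beyond careful bookkeeping: the point is to carry the weights $x_i$ through the $(a+b)^2$ expansion unexpanded so that they can be collected against $\sum_{e\in\delta(v)}x_e\le 1$ at the very end (this is what makes the bound scale with $\varepsilon$ rather than with $\varepsilon\cdot\deg(v)$ or $\varepsilon^2$); everything else is a routine translation of \Cref{lemma:ub_variance_martingale}.
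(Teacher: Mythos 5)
Your proposal is correct and follows essentially the same route as the paper's proof: the same per-step bound via the two cases of the martingale difference (using $\hat P(e_t)\le 1/4$, $F_t(u_i)\ge s(\varepsilon)/4$, and $|S_t\cap e_t|\le 2$), the same bucketing of non-trivial steps over the two-hop neighborhood with $\sum_{e\in\delta(u_i)}\hat P(e)\le\ln(4/s(\varepsilon))$ from the hard lower bound on $F_m(u_i)$, and the same final use of the fractional-matching constraint to get $\sum_{e\in\delta(v)}x_e^2\le\varepsilon$. The only differences are cosmetic (you compute the conditional second moment exactly before bounding, yielding a constant $64$ in place of the paper's $128$, still within the claimed bound).
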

\begin{proof}
    Mimicking the proof of \cref{lemma:ub_variance_martingale}: Assuming edge $e_t$ arrives at time $t$, we have:
    \begin{align*}
        & \E[ (Y_t - Y_{t-1})^2 \mid X_{t-1},\ldots X_1] \\
        \leq & 2 \hat P(e_t) \cdot \sum_{u_i \in S_t \cap e_t} \left( \frac{x_i \cdot (1 - s(\varepsilon))}{F_t(u_i)} \right)^2 + 2 (1 - \hat P(e_t)) \sum_{u_i \in S_t \cap e_t} \cdot \left( \frac{\hat P(e_t) \cdot x_i \cdot (1 - s(\varepsilon))}{F_t(u_i) (1- \hat P(e_t))} \right)^2 \\
        = & \sum_{u_i \in S_t \cap e_t} \frac{2 \hat P(e_t) \cdot x_i^2 \cdot (1 - s(\varepsilon))^2}{(F_t(u_i))^2} \left( 1 + \frac{\hat P(e_t)}{1- \hat P(e_t)} \right) \\
        \leq & 128 \frac{\hat P(e_t)}{(s(\varepsilon))^2} \cdot x_i^2,
    \end{align*}
    where we used $F_t(u_i) \geq s(\varepsilon)/4$, $1 + \frac{\hat P(e_t)}{1- \hat P(e_t)} \leq 2$ and $|S_t \cap e_t| \leq 2$.

    We sum the above inequality over all $t$. An edge $e_t$ is thereby summed over on the right hand side only if $e_t$ is incident to some vertex $u_i \in S_t$. As $S_t \subseteq S_0 = N(v)$, we have the following upper bound:
    \begin{equation}
        \sum_{t=1}^m \E[ (Y_t - Y_{t-1})^2 \mid X_{t-1},\ldots X_1] \leq \sum_{i = 1}^{\ell} x^2_i \cdot \sum_{e_t \in \delta(u_i)} 128 \frac{\hat P(e_t)}{(s(\varepsilon))^2}.
    \end{equation}
    For any neighbor $u_i$ of $v$ we have:
    \begin{equation*}
        s(\varepsilon)/4 \leq F_m(u_i) = \prod_{e_t \in \delta(u_i)} (1-\hat P(e_t)) \leq \exp \left( -\sum_{e_t \in \delta(u_i)} \hat P(e_t) \right),
    \end{equation*}
    which implies that:
    \begin{equation*}
        \sum_{e_t \in \delta(u_i)} \hat P(e_t) \leq \ln \left( \frac{4}{s(\varepsilon)} \right).
    \end{equation*}
    Hence:
    \begin{equation*}
        \sum_{i = 1}^\ell x^2_i \cdot \sum_{e_t \in \delta(u_i)} 128 \frac{\hat P(e_t)}{(s(\varepsilon))^2} \leq 
        \sum_{i = 1}^\ell x^2_i \cdot 128 \ln \left( \frac{4}{s(\varepsilon)} \right) \cdot \frac{1}{(s(\varepsilon))^2} \leq
        128 \ln \left( \frac{4}{s(\varepsilon)} \right) \cdot \frac{\varepsilon}{(s(\varepsilon))^2},
    \end{equation*}
    because $\sum_{i = 1}^\ell x_i^2 = \sum_{e := \{v,u_i\}} x^2_e \leq \varepsilon \cdot \sum_{e := \{v, u_i\}} x_e \leq \varepsilon$.
\end{proof}

We are finally ready to analyze our online rounding algorithm.
\begin{proof}[Proof of \cref{thm:ronding_matchings_theorem}]
\noindent First, we remark that the proof of the upper bound is straightforward, as we have:
\begin{equation*}
    \Pr[\text{$e_t$ matched}] = \Pr[X_t < \hat P(e_t)] \leq \Pr[X_t < P(e_t)] = x_e \cdot (1 - s(\varepsilon)) \leq x_e,
\end{equation*}
where the equality $\Pr[X_t < P(e_t)] = x_e \cdot (1 - s(\varepsilon))$ follows by \cref{lemma:analogue_sufficient_cond_for_right_marginal}. 

On the other hand, $ \Pr[\text{$e_t$ matched}]= \Pr[X_t < \hat P(e_t)]$ can be expanded as follows:
\begin{align*}
    \Pr[\text{$e_t$ matched}] &= \Pr[X_t < P(e_t)] - \Pr[X_t < P(e_t) \mid \hat P(e_t) \neq P(e_t)] \cdot \Pr[\hat P(e_t) \neq P(e_t)]. 
\end{align*}
By \cref{lemma:analogue_sufficient_cond_for_right_marginal}, we have that $\Pr[X_t < P(e_t)] = x_e \cdot (1-s(\varepsilon))$. Moreover, using the definition of $P(e_t)$ and the fact that $F_t(u), F_t(v) \geq s(\varepsilon)/4$ (\cref{obs:analogue_Ftv_and_Pet_bounds}), we have that, deterministically:
\begin{equation*}
    P(e_t) = x_e \cdot (1 - s(\varepsilon)) \cdot 
         \frac{1}{F_t(u) \cdot F_t(v)} \leq x_e \cdot \frac{16}{s^2(\varepsilon)}.
\end{equation*}
Consequently, $\Pr[X_t < P(e_t) \mid \hat{P}(e_t) \neq P(e_t)] \leq x_e \cdot \frac{16}{s^2(\varepsilon)}.$
Finally, we leverage our martingale to upper bound the probability of $\hat{P}(e_t)\neq P(e_t)$, as follows. Fix an endpoint $v\in e_t$, and consider the martingale $Y$ associated with $F_t(v)$. First, we note that $Y_0 := \sum_{u_i \in S_t} x_e \cdot (1-s(\varepsilon)) \leq 1-s(\varepsilon)$. 
By \Cref{lemma:analogue_sufficient_cond_for_Phat_equal_P,lemma:analogue_main_ineq}, we want to prove that $Y_m \leq 1 - s(\varepsilon)/3$ is unlikely, and so it is sufficient to upper bound $\Pr[|Y_0 - Y_m| \geq 2 s(\varepsilon)/3]$. By Freedman's inequality, and taking $s(\varepsilon) = C \cdot \sqrt[4]{\varepsilon} \cdot \sqrt[2]{\left(\ln \frac{1}{\varepsilon}\right)}$  (for some sufficiently large constant $C$), we obtain:
\begin{align*}
    \Pr[|Y_0 - Y_m| \geq 2 s(\varepsilon)/3] \leq 
    2 \exp \left(  - \frac{\left(\frac{2}{3}s(\varepsilon) \right)^2}{2 \left( 128 \ln \left( \frac{4}{s(\varepsilon)} \right) \cdot \frac{\varepsilon}{(s(\varepsilon))^2} \right) + \frac{8 \varepsilon}{s(\varepsilon)} \cdot \left(\frac{2}{3}s(\varepsilon) \right)} \right) \leq 2 \eps^5.
\end{align*}
Consequently, by union bounding over both endpoints of $e_t$, we have that:
\begin{align*}
    \Pr[\hat{P}(e_t)\neq P(e_t)]\leq 4\varepsilon^5.
\end{align*}
Combining the above into our expanded form for $\Pr[e_t \textrm{ matched}]$, we obtain the desired inequality:
\begin{equation*}
    \Pr[\text{$e_t$ matched}] \geq x_e \cdot (1-s(\varepsilon)) - x_e \cdot \frac{16}{s^2(\varepsilon)} \cdot 4\varepsilon^5 \geq x_e \cdot (1 - 2s(\varepsilon)),
\end{equation*}
where the last inequality follows because it is equivalent to $C^3 \cdot  \varepsilon^{3/4} \left( \ln \frac{1}{\varepsilon} \right)^{3/2} - 64 \cdot \varepsilon^5 \geq 0$. This can be verified directly for $C \geq 40$ and $0 < \varepsilon \leq 0.99$. 
\end{proof}
\subsection{Application: Online Matching with Unknown \texorpdfstring{$\Delta$}{Delta}}
\label{sec:unknown_delta}

In this section, we point out an immediate application of our generalized fractional matching rounding algorithm (\cref{thm:ronding_matchings_theorem,alg:edge-arrival_extension}). Here we consider the online edge coloring problem, but where the maximum degree $\Delta$ is initially unknown. Again, using $2\Delta-1$ colors is easy by a greedy algorithm which always colors an edge with the smallest available color. In contrast to the case of known $\Delta$, it is proven that a $1.606$-competitive algorithm is the best we can hope for when it comes to unknown $\Delta$, even in the case of large $\Delta = \omega(\log n)$ and vertex arrivals \cite{cohen2019tight}. We approach this lower bound, and in so doing obtain the first online algorithm beating the naive $(2\Delta-1)$-edge-coloring algorithm for unknown~$\Delta$, under general vertex arrivals.

Given our rounding scheme of \Cref{thm:ronding_matchings_theorem}, our result for unknown $\Delta$ follows directly from known reductions introduced in \cite{cohen2019tight} (see also the end of \cite[Chapter 6]{wajc2020matching}). For completeness, we provide a sketch including pointers to the relevant ingredients in \cite{cohen2019tight}.

\begin{theorem} \label{thm:unknown_delta}
    There exists an online algorithm which on $n$-vertex general graphs with maximum degree $\Delta$ only known to satisfy a lower bound $\Delta \geq \Delta'=\omega(\log n)$, with the graph revealed vertex-by-vertex, computes a $(1.777+o(1)) \cdot \Delta$-edge-coloring with high probability.
\end{theorem}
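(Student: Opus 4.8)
The plan is to obtain \Cref{thm:unknown_delta} by plugging our online rounding algorithm of \Cref{thm:ronding_matchings_theorem} (\Cref{alg:edge-arrival_extension}) into the reduction framework of \cite{cohen2019tight} (see also the end of \cite[Chapter~6]{wajc2020matching}). That framework splits online integral edge coloring into two tasks: (i) producing an online \emph{fractional} edge coloring, and (ii) rounding each ``spread out'' color class (a fractional matching) into an integral matching online while preserving its edge marginals up to a $1+o(1)$ factor. For task~(i), \cite{cohen2019tight} give, for general $n$-vertex graphs revealed vertex-by-vertex with maximum degree $\Delta$ known only to satisfy $\Delta\ge\Delta'=\omega(\log n)$, an online algorithm outputting a fractional edge coloring with $K=(1.777+o(1))\Delta$ colors: online values $x_e^{(c)}\ge 0$ with $\sum_c x_e^{(c)}=1$ for every edge $e$ and with each color class a fractional matching ($\sum_{e\in\delta(v)}x_e^{(c)}\le 1$ for every vertex $v$), and --- crucially --- \emph{spread out}, i.e.\ $x_e^{(c)}=O(1/\Delta')=o(1)$, since their water-level argument naturally discretizes into $\Theta(\Delta)$ colors each carrying only $O(1/\Delta)$ mass on any fixed edge. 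The constant $1.777$ is precisely the competitive ratio of that fractional algorithm in this regime, approaching the $1.606$ lower bound of \cite{cohen2019tight}. Task~(ii) was previously available only for one-sided vertex arrivals in bipartite graphs (via the MARKING algorithm), which is why the framework could not be instantiated for general graphs under vertex arrivals; \Cref{thm:ronding_matchings_theorem} removes exactly this obstruction.

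Concretely, I would run, for each color $c\in[K]$, an independent copy of \Cref{alg:edge-arrival_extension} on the online-revealed fractional matching $\vec{x}^{(c)}=\{x_e^{(c)}\}_{e\in E}$; this is legitimate since $\max_e x_e^{(c)}\le\eps$ for $\eps=O(1/\Delta')=o(1)\le 0.99$, so $s(\eps)=o(1)$. By \Cref{thm:ronding_matchings_theorem}, the resulting matching $M_c$ has $\Pr[e\in M_c]\ge x_e^{(c)}\,(1-s(\eps))=x_e^{(c)}(1-o(1))$, and no vertex is incident to two edges of $M_c$. These matchings are then assembled into a proper coloring with $(1+o(1))K=(1.777+o(1))\Delta$ colors, with high probability in $n$, exactly by the reduction of \cite{cohen2019tight}: no new idea is needed here, as that reduction is designed to lose only a $1+o(1)$ factor whenever its rounding subroutine does so on spread-out matchings, and \Cref{thm:ronding_matchings_theorem} is precisely such a subroutine. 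Composing, the number of colors is $(1+o(1))\cdot(1.777+o(1))\Delta=(1.777+o(1))\Delta$, which beats the trivial $2\Delta-1$ for $\Delta=\omega(\log n)$ and approaches the $1.606$ lower bound.

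I expect the only real work --- and hence the main obstacle --- to be the probabilistic bookkeeping needed to make the two ``$1+o(1)$'' losses genuinely compose into a ``$1+o(1)$'' loss. In particular, \Cref{thm:ronding_matchings_theorem} guarantees the near-lossless marginal only outside a failure event ($\hat P(e)\ne P(e)$) of probability merely polynomial in $\eps$, whereas the reduction needs each edge colored with high probability \emph{in $n$}, over up to $\Theta(n\Delta)$ edges and $\Theta(\Delta)$ color classes; meeting this forces one to instantiate the fractional coloring finely enough that $\eps$ is as small as $1/\mathrm{poly}(n)$ while still $s(\eps)=o(1)$ (equivalently, to absorb the discrepancy into the extra $o(\Delta)$ colors the reduction already spends), precisely the kind of parameter-balancing carried out for the list and local edge coloring results in \Cref{sec:matching_peeling}. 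Modulo this routine accounting, everything else --- the fractional algorithm, its $1.777$ ratio and spread-out discretization, and the correctness of the assembly step --- is inherited verbatim from \cite{cohen2019tight}, and the genuinely new content is merely that \Cref{thm:ronding_matchings_theorem} at last supplies the general-graph, adversarial-edge-arrival online rounding subroutine the framework was missing.
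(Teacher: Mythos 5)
Your proposal matches the paper's own (sketched) proof: both invoke the online fractional $1.777\Delta$-edge-coloring algorithm of \cite{cohen2019tight} for general graphs under vertex arrivals with unknown $\Delta$ (whose color classes have $\ell_\infty$ norm $o(1)$), and plug \Cref{thm:ronding_matchings_theorem} into their rounding framework in place of the bipartite, one-sided-arrival MARKING subroutine. The bookkeeping you flag is handled exactly as in your parenthetical alternative --- the framework of \cite{cohen2019tight} absorbs edges left uncolored by the rounding into its extra $o(\Delta)$ colors --- so no new argument beyond what you describe is needed.
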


\begin{proof}[Proof (Sketch)]
\cite{cohen2019tight} study the following relaxation of edge coloring; 
in this relaxation a fractional  $\alpha\Delta$-edge-coloring consists of $\alpha\Delta$ many fractional matchings such that each edge is matched to an extent of (at least) one when summed across all fractional matchings. 
In their terminology, a graph is shown to be \emph{fractionally $k$-edge-colorable} if the following LP has a solution.
\begin{align*}
\sum_{c \in [k]} x_{e,c} &= 1  \qquad \qquad \forall e \in E\\
\sum_{e\ni v} x_{e,c} &\leq 1 \qquad \qquad \forall v\in V, c \in [k] \\
x_{e,c} &\geq 0 \qquad \qquad  \forall e\in E, c \in [k]
\end{align*}
For graphs with unknown degree, \cite{cohen2019tight} provide online fractional edge coloring algorithms using $e/(e-1)\Delta$ and $1.777\Delta$ matchings under one-sided vertex arrivals in bipartite graphs and arbitrary vertex arrivals in general graphs, respectively.
Both factional algorithms maintain collections of fractional matchings $\{x_{e,c}\}_c$ with bounded $\ell_{\infty}$ norm, $\max_e x_{e,c} = o(1)$, whenever $\Delta=\omega(1)$. 
\cite{cohen2019tight} further provide (see their Algorithm 2) a rounding framework to round these; they show how to convert online fractional $\alpha\Delta$-edge-coloring algorithms with the above bounded $\ell_{\infty}$ norm guarantee into (randomized) online $(\alpha+o(1))\Delta$-edge-coloring algorithms
for graphs  with unknown maximum degree $\Delta$ satisfying $\Delta\geq \Delta'=\omega(\log n)$, with $\Delta'$ known. (The above $o(1)$ term is of the form $\textrm{poly}(\log n/\Delta')$.)
An important ingredient for their framework is a $(1+o(1))$-approximate rounding scheme for online fractional matchings $\vec{x}$ with $\max_e x_e = o(1)$.
Such a rounding scheme for fractional matchings under one-sided vertex arrivals in bipartite graphs was given by \cite{cohen2018randomized} (see \cite[Chapter 5]{wajc2020matching}). 
Our \Cref{thm:ronding_matchings_theorem} provides such a rounding scheme under edge arrivals (and hence also under vertex arrivals) in general graphs.
Combining our new rounding scheme with the rounding framework of \cite{cohen2019tight} then allows us to round their online fractional $1.777\Delta$-edge-coloring algorithm and obtain an online $(1.777+o(1))\Delta$-edge-coloring algorithm under general vertex arrivals, as claimed.
\end{proof}

\paragraph{Acknowledgements.} David Wajc thanks Nati Linial for asking him about online list edge coloring.

\bibliographystyle{alpha}
\bibliography{abb,ultimate,bibliography}

\newcommand{\etalchar}[1]{$^{#1}$}
\begin{thebibliography}{HMRAR98}

\bibitem[AGKM11]{aggarwal2011online}
Gagan Aggarwal, Gagan Goel, Chinmay Karande, and Aranyak Mehta.
\newblock Online vertex-weighted bipartite matching and single-bid budgeted
  allocations.
\newblock In {\em Proceedings of the 22nd Annual ACM-SIAM Symposium on Discrete
  Algorithms (SODA)}, pages 1253--1264, 2011.

\bibitem[AMSZ03]{aggarwal2003switch}
Gagan Aggarwal, Rajeev Motwani, Devavrat Shah, and An~Zhu.
\newblock Switch scheduling via randomized edge coloring.
\newblock In {\em Proceedings of the 44th Symposium on Foundations of Computer
  Science (FOCS)}, pages 502--512, 2003.

\bibitem[BC21]{blanc2021multiway}
Guy Blanc and Moses Charikar.
\newblock Multiway online correlated selection.
\newblock In {\em Proceedings of the 62nd Symposium on Foundations of Computer
  Science (FOCS)}, pages 1277--1284, 2021.

\bibitem[BDBK{\etalchar{+}}94]{ben1994power}
Shai Ben-David, Allan Borodin, Richard Karp, Gabor Tardos, and Avi Wigderson.
\newblock On the power of randomization in on-line algorithms.
\newblock {\em Algorithmica}, 11(1):2--14, 1994.

\bibitem[BDG16]{freedman}
Nikhil Bansal, Daniel Dadush, and Shashwat Garg.
\newblock An algorithm for koml\'{o}s conjecture matching banaszczyk's bound.
\newblock {\em SIAM Journal on Computing}, 48, 2016.

\bibitem[BGW21]{bhattacharya2021online}
Sayan Bhattacharya, Fabrizio Grandoni, and David Wajc.
\newblock Online edge coloring algorithms via the nibble method.
\newblock In {\em Proceedings of the 32nd Annual ACM-SIAM Symposium on Discrete
  Algorithms (SODA)}, pages 2830--2842, 2021.

\bibitem[BMM12]{bahmani2012online}
Bahman Bahmani, Aranyak Mehta, and Rajeev Motwani.
\newblock Online graph edge-coloring in the random-order arrival model.
\newblock {\em Theory of Computing}, 8(1):567--595, 2012.

\bibitem[BNMN92]{bar1992greedy}
Amotz Bar-Noy, Rajeev Motwani, and Joseph Naor.
\newblock The greedy algorithm is optimal for on-line edge coloring.
\newblock {\em Information Processing Letters (IPL)}, 44(5):251--253, 1992.

\bibitem[BSVW24]{blikstad2023simple}
Joakim Blikstad, Ola Svensson, Radu Vintan, and David Wajc.
\newblock Simple and optimal online bipartite edge coloring.
\newblock In {\em Proceedings of the 7th Symposium on Simplicity in Algorithms
  (SOSA)}, 2024.

\bibitem[Chr23]{christiansen2023power}
Aleksander Bj{\o}rn~Grodt Christiansen.
\newblock The power of multi-step vizing chains.
\newblock In {\em Proceedings of the 55th Annual ACM Symposium on Theory of
  Computing (STOC)}, pages 1013--1026, 2023.

\bibitem[CPW19]{cohen2019tight}
Ilan~Reuven Cohen, Binghui Peng, and David Wajc.
\newblock Tight bounds for online edge coloring.
\newblock In {\em Proceedings of the 60th Symposium on Foundations of Computer
  Science (FOCS)}, pages 1--25, 2019.

\bibitem[CW18]{cohen2018randomized}
Ilan~Reuven Cohen and David Wajc.
\newblock Randomized online matching in regular graphs.
\newblock In {\em Proceedings of the 29th Annual ACM-SIAM Symposium on Discrete
  Algorithms (SODA)}, pages 960--979, 2018.

\bibitem[DR96]{dubhashi1996balls}
Devdatt Dubhashi and Desh Ranjan.
\newblock Balls and bins: A study in negative dependence.
\newblock {\em BRICS Report Series}, 3(25), 1996.

\bibitem[Edm65]{edmonds1965maximum}
Jack Edmonds.
\newblock Maximum matching and a polyhedron with 0, 1-vertices.
\newblock {\em Journal of research of the National Bureau of Standards B},
  69(125-130):55--56, 1965.

\bibitem[FHTZ20]{fahrbach2020edge}
Matthew Fahrbach, Zhiyi Huang, Runzhou Tao, and Morteza Zadimoghaddam.
\newblock Edge-weighted online bipartite matching.
\newblock In {\em Proceedings of the 61st Symposium on Foundations of Computer
  Science (FOCS)}, pages 412--423, 2020.

\bibitem[Fre75]{freedman1975tail}
David~A Freedman.
\newblock {On Tail Probabilities for Martingales}.
\newblock {\em The Annals of Probability}, 1975.

\bibitem[GHH{\etalchar{+}}21]{gao2021improved}
Ruiquan Gao, Zhongtian He, Zhiyi Huang, Zipei Nie, Bijun Yuan, and Yan Zhong.
\newblock Improved online correlated selection.
\newblock In {\em Proceedings of the 62nd Symposium on Foundations of Computer
  Science (FOCS)}, pages 1265--1276, 2021.

\bibitem[GKM{\etalchar{+}}19]{gamlath2019online}
Buddhima Gamlath, Michael Kapralov, Andreas Maggiori, Ola Svensson, and David
  Wajc.
\newblock Online matching with general arrivals.
\newblock In {\em Proceedings of the 60th Symposium on Foundations of Computer
  Science (FOCS)}, pages 26--38, 2019.

\bibitem[HKT{\etalchar{+}}20]{huang2020fully}
Zhiyi Huang, Ning Kang, Zhihao~Gavin Tang, Xiaowei Wu, Yuhao Zhang, and Xue
  Zhu.
\newblock Fully online matching.
\newblock {\em Journal of the ACM (JACM)}, 67(3):1--25, 2020.

\bibitem[HMRAR98]{habib1998probabilistic}
Michel Habib, Colin McDiarmid, Jorge Ramirez-Alfonsin, and Bruce Reed.
\newblock {\em Probabilistic methods for algorithmic discrete mathematics},
  volume~16.
\newblock Springer Science \& Business Media, 1998.

\bibitem[HTWZ20]{huang2020fully2}
Zhiyi Huang, Zhihao~Gavin Tang, Xiaowei Wu, and Yuhao Zhang.
\newblock Fully online matching ii: Beating ranking and water-filling.
\newblock In {\em Proceedings of the 61st Symposium on Foundations of Computer
  Science (FOCS)}, page To appear, 2020.

\bibitem[HZZ20]{huang2020adwords}
Zhiyi Huang, Qiankun Zhang, and Yuhao Zhang.
\newblock Adwords in a panorama.
\newblock In {\em Proceedings of the 61st Symposium on Foundations of Computer
  Science (FOCS)}, page To appear, 2020.

\bibitem[Kah96]{kahn1996asymptotically}
Jeff Kahn.
\newblock Asymptotically good list-colorings.
\newblock {\em Journal of Combinatorial Theory, Series A}, 73(1):1--59, 1996.

\bibitem[KLS{\etalchar{+}}22]{kulkarni2022online}
Janardhan Kulkarni, Yang~P Liu, Ashwin Sah, Mehtaab Sawhney, and Jakub
  Tarnawski.
\newblock Online edge coloring via tree recurrences and correlation decay.
\newblock In {\em Proceedings of the 54th Annual ACM Symposium on Theory of
  Computing (STOC)}, pages 2958--2977, 2022.

\bibitem[KVV90]{karp1990optimal}
Richard~M Karp, Umesh~V Vazirani, and Vijay~V Vazirani.
\newblock An optimal algorithm for on-line bipartite matching.
\newblock In {\em Proceedings of the 22nd Annual ACM Symposium on Theory of
  Computing (STOC)}, pages 352--358, 1990.

\bibitem[MSVV07]{mehta2007adwords}
Aranyak Mehta, Amin Saberi, Umesh Vazirani, and Vijay Vazirani.
\newblock Adwords and generalized online matching.
\newblock {\em Journal of the ACM (JACM)}, 54(5):22, 2007.

\bibitem[NSW23]{naor2023online}
Joseph~(Seffi) Naor, Aravind Srinivasan, and David Wajc.
\newblock Online dependent rounding schemes.
\newblock {\em arXiv preprint arXiv:2301.08680}, 2023.

\bibitem[SW21]{saberi2021greedy}
Amin Saberi and David Wajc.
\newblock The greedy algorithm is \emph{not} optimal for on-line edge coloring.
\newblock In {\em Proceedings of the 48th International Colloquium on Automata,
  Languages and Programming (ICALP)}, pages 109:1--109:18, 2021.

\bibitem[Viz64]{vizing1964estimate}
Vadim~G Vizing.
\newblock On an estimate of the chromatic class of a p-graph.
\newblock {\em Diskret analiz}, 3:25--30, 1964.

\bibitem[Viz76]{vizing1976vertex}
Vadim~G Vizing.
\newblock Vertex colorings with given colors.
\newblock {\em Diskret. Analiz}, 29:3--10, 1976.

\bibitem[Waj17]{wajc2017negative}
David Wajc.
\newblock Negative association -- definition, properties, and applications.
\newblock \url{http://www.cs.cmu.edu/\~dwajc/notes/Negative Association.pdf},
  2017.

\bibitem[Waj20]{wajc2020matching}
David Wajc.
\newblock {\em Matching Theory Under Uncertainty}.
\newblock PhD thesis, Carnegie Mellon University, 2020.

\end{thebibliography}

\end{document}